\setlist{nosep}
\let\epsilon\varepsilon
\newtheorem{remark}{Remark}
\newtheorem{theorem}{Theorem}
\newtheorem{lemma}{Lemma}
\newtheorem{proposition}{Proposition}
\newtheorem{claim}{Claim}
\newtheorem{corollary}{Corollary}
\theoremstyle{definition}
\renewclass{\AP}{APTIME}
\renewclass{\P}{PTIME}
\renewclass{\EXP}{EXPTIME}
\tikzset{->,>=stealth',shorten >=1pt,shorten <=1pt,auto,node distance=1cm,
initial text={},
every fit/.style={draw,densely dotted,rectangle},
el/.style={font=\scriptsize},
inner sep=2mm,
ve/.style={rectangle, draw},
va/.style={circle, draw},
}
\newcommand{\calA}{\mathcal{A}}
\newcommand{\calB}{\mathcal{B}}
\newcommand{\calC}{\mathcal{C}}
\newcommand{\calD}{\mathcal{D}}
\newcommand{\calO}{\mathcal{O}}
\newcommand{\calF}{\mathcal{F}}
\renewcommand{\restriction}{\mathord{\upharpoonright}}
\providecommand\st{}
\renewcommand{\st}{\:{\mid}\:}
\newcommand{\pow}{\mathcal{P}}
\newcommand{\eve}{Eve\xspace}
\newcommand{\adam}{Adam\xspace}
\newcommand{\eg}{\textit{e.g.}\xspace}
\newcommand{\supfun}{\mathsf{Sup}}
\newcommand{\inffun}{\mathsf{Inf}}
\newcommand{\lsupfun}{\mathsf{LimSup}}
\newcommand{\linffun}{\mathsf{LimInf}}
\newcommand{\mpfun}{\mathsf{MP}}
\newcommand{\regret}[3]{\mathbf{reg}^{#1}_{#3}(#2)}
\newcommand{\Regret}[2]{\mathbf{Reg}_{#2}(#1)}
\newcommand{\Val}{\mathbf{Val}}
\newcommand{\aVal}{\mathbf{aVal}}
\newcommand{\cVal}{\mathbf{cVal}}
\newcommand{\PlayVal}[1]{\Val(#1)}
\newcommand{\StratVal}[4]{\Val_{#1}^{#2}(#3,#4)}
\newcommand{\out}[3]{\pi^{#1}_{#2#3}}
\newcommand{\cost}[1]{\mathbf{c}({#1})}
\newcommand{\proj}[2]{{[#1]}_{\mathbf{#2}}}
\newcommand{\projinv}[1]{\proj{#1}{1}^{-1}}
\newcommand{\VtcE}{V_\exists}
\newcommand{\VtcA}{V \setminus \VtcE}
\newcommand{\widehatVtcA}{\widehat{V} \setminus \widehat{\VtcE}}
\newcommand{\StrE}{\Sigma_\exists}
\newcommand{\StrA}{\Sigma_\forall}
\newcommand{\StrAllE}{\mathfrak{S}_\exists}
\newcommand{\StrAllA}{\mathfrak{S}_\forall}
\newcommand{\StrPosA}{\Sigma^1_\forall}
\newcommand{\StrWordA}{\mathfrak{W}_\forall}
\newcommand{\StrPosE}{\Sigma^1_\exists}
\newcommand{\range}[1]{\mathrm{Range}(#1)}
\newcommand{\outdeg}{deg^+}
\title{Reactive Synthesis Without Regret\thanks{This work was supported by the ERC inVEST (279499) project. Additionally, G. A. P\'erez had an (Aspirant) research fellowship from F.R.S-FNRS.}}
\author[1]{Paul Hunter}
\author[2]{Guillermo A. P\'erez}
\author[1]{Jean-Fran\c{c}ois Raskin}
\affil[1]{Universit\'e libre de Bruxelles, Belgium}
\affil[2]{University of Antwerp -- Flanders Make, Belgium}
\begin{document}

\maketitle

\begin{abstract}
	Two-player zero-sum games of infinite duration and their quantitative
	versions are used in verification to model the interaction between
	a controller (\eve) and its environment (\adam). The
	question usually addressed is that of the existence (and computability)
	of a strategy for \eve that can maximize her payoff against any strategy
	of \adam. In this work, we are interested in strategies of \eve that
	minimize her regret, i.e. strategies that minimize the difference
	between her actual payoff and the payoff she could have achieved if she
	had known the strategy of \adam in advance. We give algorithms to
	compute the strategies of \eve that ensure minimal regret against an
	adversary whose choice of strategy is
	\begin{inparaenum}[$(i)$]
	\item unrestricted,
	\item limited to positional strategies, or
	\item limited to word strategies,
	\end{inparaenum}
	and show that the two last cases have natural modelling applications.
	These results apply for quantitative games defined with the classical payoff
	functions $\inffun$, $\supfun$, $\linffun$, $\lsupfun$, and mean-payoff.
	We also show that our notion of regret minimization in which \adam is
	limited to word strategies generalizes the notion of good for games
	introduced by Henzinger and Piterman, and is related to the notion of
	determinization by pruning due to Aminof, Kupferman and Lampert.

\end{abstract}

\section{Introduction}
\label{sec:intro}

The model of two player games played on graphs is an adequate mathematical tool
to solve important problems in computer science, and in particular the 
reactive-system synthesis problem~\cite{pr89}. In that context, the game models the
non-terminating interaction between the system to synthesize and its
environment. Games with quantitative objectives are useful to formalize important
quantitative aspects such as mean-response time or energy consumption. They
have attracted large attention recently, see
e.g.~\cite{cdhr10,
bcdgr09}. Most of the contributions in this
context are for zero-sum games: the objective of \eve (that models the system)
is to maximize the value of the game while the objective of \adam (that models
the environment) is to minimize this value. This is a worst-case assumption:
because the cooperation of the environment cannot be assumed, we postulate that
it is {\em antagonistic}.

In this antagonistic approach, the main solution concept is that of a {\em winning
strategy}. Given a threshold value, a winning strategy for \eve ensures a
minimal value greater than the threshold against any strategy of \adam. However,
sometimes there are no winning strategies. What should the behaviour of the
system be in such cases? There are several possible
answers to this question. One is to consider {\em non-zero sum} extensions of
those games:  the environment (\adam) is not completely antagonistic, rather it has
its own specification. In such games, a strategy for \eve must be winning only
when the outcome  satisfies the objectives of \adam, see e.g.~\cite{cdfr14}.
Another option for \eve is to play a strategy which minimizes her {\em regret}.
The regret is informally defined as the difference between what a player
actually wins and what she could have won if she had known the strategy chosen
by the other player. Minimization of regret is a central concept in decision
theory~\cite{bell82}. This notion is important because it usually leads to
solutions that agree with common sense.

\begin{figure}
\begin{center}
\begin{tikzpicture}

\node[ve,initial](A){$v_1$};
\node[va,right=of A](B){$v_2$};
\node[va,below=of A](C){$v_3$};
\node[ve,right=of B](D){$v_4$};
\node[ve,right=of C](E){$v_5$};

\path
(A) edge[bend right] node[el,swap]{$\Sigma,1$} (B)
(A) edge node[el,swap]{$\Sigma,1$} (C)
(B) edge[bend right] node[el,swap]{$b,-1$} (A)
(C) edge[loop left] node[el,swap]{$b,\frac{1}{2}$} (C)
(B) edge node[el]{$a,2$} (D)
(C) edge node[el,swap]{$a,1$} (E)
(D) edge[loop right] node[el,swap]{$\Sigma,2$} (D)
(E) edge[loop right] node[el,swap]{$\Sigma,1$} (E)
;
\end{tikzpicture}
\caption{Example weighted arena $G_0$.}\label{fig:intro-ex}
\end{center}
\end{figure}

Let us illustrate the notion of regret minimization on the example of
Fig.~\ref{fig:intro-ex}. In this example, \eve owns the squares and \adam owns
the circles (we do not use the letters labelling edges for the moment). The game
is played for infinitely many rounds and the value of a play for \eve is the
long-run average of the values of edges traversed during the play (the so-called
{\em mean-payoff}). In this game, \eve is only able to secure a mean-payoff of
$\frac{1}{2}$ when \adam is fully antagonistic. Indeed, if \eve (from $v_1$)
plays to $v_2$ then \adam can force a mean-payoff value of $0$, and if she plays to
$v_3$ then the mean-payoff value is at least $\frac{1}{2}$.
Note also that if \adam is not fully antagonistic, then the mean-payoff could be
as high as $2$.  Now, assume that \eve does not try to force the highest value
in the worst-case but tries to minimize her regret. If she plays $v_1 \mapsto
v_2$ then the regret is equal to $1$. This is because \adam can play the
following strategy: if \eve plays to $v_2$ (from $v_1$) then he plays $v_2
\mapsto v_1$ (giving a mean-payoff of $0$), and if \eve plays to $v_3$ then he
plays to $v_5$ (giving a mean-payoff of $1$). If she plays  $v_1 \mapsto
v_3$ then her regret is $\frac{3}{2}$ since \adam can play
the symmetric strategy. It should thus be clear that the strategy of \eve which
always chooses $v_1 \mapsto v_2$ is indeed minimizing her regret.

In this paper, we will study three variants of {\em regret minimization}, each
corresponding to a different set of strategies we allow \adam to choose from.
The first variant is when \adam can play any possible strategy (as in the
example above), the second variant is when \adam is restricted to playing {\em
memoryless strategies}, and the third variant is when \adam is restricted to
playing {\em word strategies}. To illustrate the last two variants, let us consider
again the example of Fig.~\ref{fig:intro-ex}. Assume now that \adam is playing
memoryless strategies only. Then in this case, we claim that there is a strategy
of \eve that ensures regret $0$. The strategy is as follows: first play to
$v_2$, if \adam chooses to go back to $v_1$, then \eve should henceforth play $v_1
\mapsto v_3$.
We claim that this strategy has regret $0$. Indeed, when $v_2$ is visited,
either \adam chooses $v_2 \mapsto v_4$, and then \eve secures a mean-payoff
of $2$ (which is the maximal possible value), or \adam chooses $v_2 \mapsto v_1$
and then we know that $v_1 \mapsto v_2$ is not a good option for \eve as
cycling between $v_1$ and $v_2$ yields a payoff of only $0$.
In this case, the mean-payoff is either $1$, if \adam plays $v_3  \mapsto v_5$,
or $\frac{1}{2}$, if he plays $v_3 \mapsto v_1$. In all the cases,
the regret is $0$. Let us now turn to the restriction to word strategies for
\adam. When considering this restriction, we use the letters that label the
edges of the graph. A word strategy for \adam is a function $w : \mathbb{N}
\rightarrow \{ a,b \}$. In this setting \adam plays a sequence of letters and
this sequence is independent of the current state of the game. It is more
convenient to view the latter as a game played on a weighted automata---assumed
to be total and with at least one transition for every action from every
state---in which \adam plays letters and \eve responds by resolving
non-determinism.
When \adam plays
word strategies, the strategy that minimizes regret for \eve is to always play
$v_1 \mapsto v_2$. Indeed, for any word in which the letter $a$ appears, the
mean-payoff is equal to $2$, and the regret is $0$, and for any word in which
the letter $a$ does not appear, the mean-payoff is $0$ while it would have been
equal to $\frac{1}{2}$ when playing $v_1 \mapsto v_3$. So the regret of this
strategy is $\frac{1}{2}$ and it is the minimal regret that \eve can secure.
Note that the three different strategies give three different values in our
example. This is in contrast with the worst-case analysis of the same problem
(memoryless strategies suffice for both players).

We claim that at least the two last variants are useful for modelling purposes. For
example, the memoryless restriction is useful when designing a system that needs
to perform well in an environment which is only partially known. In practical
situations, a controller may discover the environment with which it is
interacting at run time. Such a situation can be modelled by an arena in which
choices in nodes of the environment model an entire family of environments and
each memoryless strategy models a specific environment of the family. In such
cases, if we want to design a controller that performs reasonably well against
all the possible environments, we can consider a controller that minimizes
regret: the strategy of the controller will be as close as possible to an
optimal strategy if we had known the environment beforehand.  This is, for
example, the modelling choice done in the famous Canadian traveller's
problem~\cite{py91}: a driver is attempting to reach a specific location while
ensuring the traversed distance is \emph{not too far} from the shortest feasible
path. The partial knowledge is due to some roads being closed because of
snow.
The Canadian traveller, when planning his itinerary, is in fact searching for a
strategy to minimize his regret for the shortest path measure against a
memoryless adversary who determines the roads that are closed. Similar
situations naturally arise when synthesizing controllers for \emph{robot motion
planning}~\cite{weu15}.
We now illustrate the usefulness of the variant in which \adam is restricted to
play word strategies.
Assume that we need to design a system embedded into an
environment that produces disturbances: if the sequence of disturbances produced
by the environment is independent of the behavior of the system, then it is
natural to model this sequence not as a function of the state of the system but
as a temporal sequence of events, i.e. a {\em word} on the alphabet of the
disturbances.  Clearly, if the sequences are
not the result of an antagonistic process, then minimizing the regret against
all disturbance sequences is an adequate solution concept to obtain a reasonable
system and may be preferable to a system obtained from a strategy that is
optimal under the antagonistic hypothesis.

\paragraph{Contributions.} In this paper, we provide algorithms to solve the
\emph{regret threshold problem} (strict and non-strict) in the three variants
explained above, i.e.  given a game and a threshold, does there exist a strategy
for \eve with a regret that is (strictly) less than the threshold against all
(resp. all memoryless, resp.  all word) strategies for \adam. It is worth
mentioning that, in the first two cases we consider, we actually provide
algorithms to solve the following search problem: find the controller which
ensures the minimal possible regret. Indeed, 
our algorithms are reductions to well-known games and are such that a winning
strategy for \eve in the resulting game corresponds to a regret-minimizing
strategy in the original one.
Conversely, in games played against word strategies for \adam, we only
explicitly solve the regret threshold problem. However, since the set of
possible regret values of the considered games is finite and easy to describe,
it will be obvious that one can implement a binary search to find the regret
value and a corresponding optimal regret-minimizing strategy for
\eve.

We study this problem for six common quantitative measures: $\inffun$,
$\supfun$, $\linffun$, $\lsupfun$, $\underline{{\sf MP}}$, $\overline{{\sf
MP}}$. For all measures, but $\mpfun$, the strict and non-strict threshold
problems are equivalent. We state our results for both cases for consistency.
In almost all the cases, we provide matching lower bounds showing the worst-case
optimality of our algorithms. Our results are summarized in the table of
Fig.~\ref{tab:sum}.
\begin{table}
\begin{center}
\small
\begin{tabular}{|l||c|c|c|c|}
	\hline
	Payoff type & Any strategy & Memoryless strategies & Word strategies \\
	\hline\hline
	$\supfun$, $\inffun$,	& \P-c & \PSPACE~
		(Lem~\ref{lem:pspace-memless-regret}), & \EXP-c\\
	$\lsupfun$ & (Thm~\ref{thm:anyadversary}) & $\coNP$-h
		(Lem~\ref{lem:coNP-hardness}) &
		(Thm~\ref{thm:eloquentadversary})\\
	\hline
	$\linffun$ & \P-c (Thm~\ref{thm:anyadversary}) & \PSPACE-c
		(Thm~\ref{thm:memlessadversary})
		& \EXP-c (Thm~\ref{thm:eloquentadversary})\\
	\hline
	$\underline{{\sf MP}}$, $\overline{{\sf MP}}$ & {\sf MP equiv.}
		(Thm~\ref{thm:anyadversary}) & \PSPACE-c
		(Thm~\ref{thm:memlessadversary})
		& {\sf Undecidable} (Lem~\ref{lem:undec-mpg}) \\
	\hline
\end{tabular}
\caption{Complexity of deciding the regret threshold problem.}\label{tab:sum}
\end{center}
\end{table}
For the variant in which \adam plays word strategies only, we show that we can
recover decidability of mean-payoff objectives when the memory of \eve is fixed
in advance: in this case, the problem is $\NP$-complete
(Theorems~\ref{thm:one-assumption} and~\ref{thm:np-hardness}). 

\paragraph{Related works.}
The notion of regret minimization is a central one in game theory, see
e.g.~\cite{zjbp08} and references therein.  Also, {\it iterated} regret
minimization has been recently proposed by Halpern et al. as a concept for {\em
non-zero} sum games~\cite{hp12}. There, it is applied to matrix games and not to
game graphs. In a previous contribution, we have applied the iterated regret
minimization concept to non-zero sum games played on weighted graphs for the
shortest path problem~\cite{fgr10}. Restrictions on how \adam is allowed
to play were not considered there.
As we do not consider an explicit objective for \adam, we do not consider
iteration of the regret minimization here.

The disturbance-handling embedded system example was first given in~\cite{df11}.
In that work, the authors introduce \emph{remorsefree strategies}, which
correspond to strategies which minimize regret in games with $\omega$-regular
objectives. They do not establish lower bounds on the complexity of
realizability or synthesis of remorsefree strategies and they focus on word
strategies of \adam only.

In~\cite{hp06}, Henzinger and Piterman introduce the notion of \emph{good for
games automata}. A non-deterministic automaton is good for solving games if it
fairly simulates the equivalent deterministic automaton.
We show that our notion of regret minimization for word strategies extends this
notion to the quantitative setting (Proposition~\ref{pro:rel-gfg}). Our
definitions give rise to a natural notion of approximate determinization for
weighted automata on infinite words. 

In~\cite{akl10}, Aminof et al. introduce the notion of {\em approximate
determinization by pruning} for weighted sum automata over finite words. For
$\alpha \in (0,1]$, a weighted sum automaton is {\em $\alpha$-determinizable by
pruning} if there exists a finite state strategy to resolve non-determinism and
that constructs a run whose value is at least $\alpha$ times the value of the
maximal run of the given word. So, they consider a notion of approximation which
is a {\em ratio}.  We will show that our concept of regret, when \adam plays word
strategies only, defines instead a notion of approximation with respect to the
{\em difference} metric for weighted automata (Proposition~\ref{pro:rel-dbp}).
There are other differences with their work. First, we consider infinite
words while they consider finite words. Second, we study a general notion of
regret minimization problem in which \eve can use any strategy while they
restrict their study to fixed memory strategies only and leave the problem open
when the memory is not fixed a priori.

Finally, the main difference between these related works and this paper is that
we study the $\inffun$, $\supfun$, $\linffun$, $\lsupfun$,
$\underline{{\sf MP}}$, $\overline{{\sf MP}}$ measures
while they consider the total sum measure or qualitative objectives.

\section{Preliminaries}\label{sec:prelim}
A \emph{weighted arena} is a tuple $G = (V,V_\exists,E,w,v_I)$ where $(V,E,w)$
is a finite edge-weighted graph\footnote{W.l.o.g. $G$ is assumed to be total:
	for each $v \in V$, there exists $v' \in V$ such that $(v,v') \in E$.}
with rational weights, $V_\exists \subseteq V$, and $v_I \in V$ is the initial
vertex.  In the sequel we depict vertices owned by \eve (i.e. $V_\exists$) with
squares and vertices owned by \adam (i.e. $V \setminus V_\exists$) with circles.
We denote the maximum absolute value of a weight in a weighted arena by $W$.

A \emph{play} in a weighted arena is an infinite sequence of vertices $\pi =
v_0v_1 \ldots$ where $v_0 = v_I$ and $(v_i,v_{i+1}) \in E$ for all $i$.
We extend the weight function to partial plays by setting $w(
\langle v_i \rangle_{i = k}^{l}) =
\sum_{i=k}^{l-1} w(v_i,v_{i+1})$.

A \emph{strategy for} \eve (\adam) is a function $\sigma$ that
maps partial plays ending with a vertex $v$ in $V_\exists$ ($V \setminus
V_\exists$) to a successor of $v$. A strategy has memory $m$ if it can be
realized as the output of a finite state machine with $m$ states
(see e.g.~\cite{hpr14} for a formal definition). A \emph{memoryless
(or positional) strategy} is a strategy with memory $1$, that is, a function that
only depends on the last element of the given partial play. A play $\pi = v_0v_1
\ldots$ is \emph{consistent with a strategy} $\sigma$ for \eve (\adam) if
whenever $v_i \in V_\exists$ ($v_i \in V\setminus V_\exists$),
$\sigma(\langle v_j \rangle_{j \le i}) = v_{i+1}$. We denote by $\StrAllE(G)$
($\StrAllA(G)$) the set of all strategies for \eve (\adam) and by $\StrE^{m}(G)$
($\StrA^{m}(G)$) the set of all strategies for \eve (\adam) in $G$ that require
memory of size at most $m$, in particular $\StrPosE(G)$ ($\StrPosA(G)$) is the
set of all memoryless strategies of \eve (\adam) in $G$.  We omit $G$ if the
context is clear.

\paragraph{Payoff functions.}
A play in a weighted arena defines an infinite sequence of weights.  We define
below several classical \emph{payoff functions} that map such sequences to real
numbers.\footnote{The values of all functions are not infinite, and therefore in
	$\mathbb{R}$ since we deal with finite graphs only.}
Formally, for a play $\pi = v_0v_1\ldots$ we define:
\begin{itemize}
	\item the $\inffun$ ($\supfun$) payoff, 
		is the minimum (maximum) weight seen along a play:
\(
	\inffun(\pi) = \inf\{ w(v_i,v_{i+1}) \st i \ge 0\}
\) and \(
	\supfun(\pi) = \sup\{ w(v_i,v_{i+1})\st i \ge 0\};
\)
	\item the $\linffun$ ($\lsupfun$) payoff,
		is the minimum (maximum) weight seen infinitely
		often:
\(
	\linffun(\pi) = \liminf_{i \to \infty} w(v_i,v_{i+1})
\) and, respectively, we have that \(
	\lsupfun(\pi) = \limsup_{i \to \infty} w(v_i,v_{i+1});
\)
	\item the \emph{mean-payoff} value of a play, i.e. the limiting average
		weight, defined using $\liminf$ or $\limsup$ since
		the running averages might not converge:
\(
	\underline{\mpfun}(\pi) = \liminf_{k \to \infty} \frac{1}{k}
	w(\langle v_i \rangle_{i < k}) 
\) and \(
	\overline{\mpfun}(\pi) = \limsup_{k \to \infty} \frac{1}{k}
	w(\langle v_i \rangle_{i < k}). 
\)
In words, $\underline{\mpfun}$ corresponds to the \emph{limit inferior} of the
average weight of increasingly longer prefixes of the play while
$\overline{\mpfun}$ is defined as the \emph{limit superior} of that same
sequence. 
\end{itemize} 

A payoff function $\Val$ is  \emph{prefix-independent} if for all plays $\pi =
v_0v_1 \ldots$, for all $i \geq 0$,  $\Val(\pi) = \Val(\langle v_j \rangle_{j\ge
i})$. It is well-known that $\linffun$, $\lsupfun$, $\underline{\mpfun}$, and
$\overline{\mpfun}$ are prefix-independent.  Often, the arguments that we
develop work uniformly for these four measures because of their
prefix-independent property.  $\inffun$ and $\supfun$ are not prefix-independent
but often in the sequel we apply a simple transformation to the game and encode
$\inffun$ into a $\linffun$ objective, and $\supfun$ into a $\lsupfun$
objective. The transformation consists of encoding in the vertices of the arena
the minimal (maximal) weight that has been witnessed by a play, and label the
edges of the new graph with this same recorded weight.
When this simple transformation does not suffice, we mention it explicitly.

\paragraph{Regret.}
Consider a fixed weighted arena $G$, and payoff function $\Val$.
Given strategies $\sigma, \tau$, for \eve and \adam respectively, and $v \in V$,
we denote by $\out{v}{\sigma}{\tau}$ the unique play starting from $v$ that is
consistent with $\sigma$ and $\tau$ and denote its value by:
\(
	\StratVal{G}{v}{\sigma}{\tau} :=
	\PlayVal{\out{v}{\sigma}{\tau}}.
\)
We omit $G$ if it is clear from the context. If $v$ is omitted, it is
assumed to be $v_I$.

Let $\StrE \subseteq \StrAllE$ and $\StrA \subseteq \StrAllA$ be sets of
strategies for \eve and \adam respectively.  Given $\sigma \in \StrE$ we define
the \emph{regret of $\sigma$ in $G$ w.r.t. $\StrE$ and $\StrA$} as:
\[ \textstyle
	\regret{\sigma}{G}{\StrE,\StrA} := \sup_{\tau \in \StrA} (\sup_{\sigma'
	\in \StrE} \StratVal{}{}{\sigma'}{\tau} - \StratVal{}{}{\sigma}{\tau}).
\]
We define the \emph{regret of $G$ w.r.t. $\StrE$ and $\StrA$} as: 
\[ \textstyle
	\Regret{G}{\StrE,\StrA} := \inf_{\sigma \in \StrE}
		\regret{\sigma}{G}{\StrE,\StrA}.
\]
When $\StrE$ or $\StrA$ are omitted from $\regret{{}}{\cdot}{{}}$ and
$\Regret{\cdot}{{}}$ they are assumed to be the set of all strategies for \eve
and \adam.

\begin{remark}[Ratio vs. difference]\label{rem:metric}
	Let $G$ be a weighted arena and $\StrE \subseteq \StrAllE$ and $\StrA
	\subseteq \StrAllA$. Consider the regret of $G$ defined using the ratio
	measure, instead of difference. For $\inffun$, $\supfun$, $\linffun$, and
	$\lsupfun$, Deciding whether the regret of $G$ is (strictly) less than a
	given threshold $r$ reduces (in polynomial time) to deciding the same
	problem in $G_{\log}$ -- which is obtained by replacing every weight $x$
	in $G$ with $\log_2 x$ -- for threshold $\log_2 r$ with the difference
	measure. 
\end{remark}

We will make use of two other values associated with the vertices of an arena:
the \emph{antagonistic} and \emph{cooperative} values, defined for plays from a
vertex $v \in V$ as
\[ 
	\aVal^v(G) := \sup_{\sigma \in \StrAllE} \inf_{\tau \in \StrAllA}
		\StratVal{}{v}{\sigma}{\tau}
 \qquad
	\cVal^v(G) := \sup_{\sigma \in \StrAllE} \sup_{\tau \in \StrAllA}
		\StratVal{}{v}{\sigma}{\tau}.
\]
When clear from context $G$ will be omitted, and if $v$ is
omitted it is assumed to be $v_I$. 

\begin{remark}
	It is well-known that $\cVal$ and $\aVal$ can be computed in polynomial
	time, w.r.t. the underlying graph of the given arena, for all payoff
	functions but $\mpfun$~\cite{cahs03,cdh10}. For $\mpfun$, $\cVal$ is
	known to be computable in polynomial time, for $\aVal$  it can be done in
	$\UP \cap \coUP$~\cite{jurdzinski98} and in \emph{pseudo-polynomial
	time}~\cite{zp96,bcdgr09}.
\end{remark}

\section{Variant I: \adam plays any strategy}\label{sec:all-strats}
For this variant, we establish that for all the payoff functions that we
consider, the problem of computing the antagonistic value and the problem of
computing the regret value are {\em inter-reducible} in polynomial time. As a
direct consequence, we obtain the following theorem:
\begin{theorem}\label{thm:anyadversary}
	Deciding whether the regret value is less than a given threshold (strictly or
	non-strictly) is $\P$-complete (under log-space reductions) for
	$\inffun$, $\supfun$, $\linffun$, and $\lsupfun$, and equivalent to
	mean-payoff games (under polynomial-time reductions) for
	$\underline{\mpfun}$ and $\overline{\mpfun}$.
\end{theorem}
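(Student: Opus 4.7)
The plan is to prove polynomial-time inter-reducibility between the regret threshold problem on $G$ and the antagonistic value threshold problem on a (possibly modified) weighted arena, for each of the payoff functions considered. Combined with the facts recalled in the preliminaries---that antagonistic values are $\P$-computable for $\inffun,\supfun,\linffun,\lsupfun$ and polynomial-time equivalent to mean-payoff games for $\underline{\mpfun}$ and $\overline{\mpfun}$---and with the well-known $\P$-hardness of reachability (inheritable to antagonistic games), this yields the stated complexities, uniformly for both strict and non-strict thresholds.

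For the reduction from regret to antagonistic value, I would build a product arena $\hat G$ whose states are triples $(v_1,v_2,b)$ with $b\in\{0,1\}$ marking whether two simulated plays have diverged. Copy~1 tracks \eve's actual strategy $\sigma$ against \adam's $\tau$; copy~2 tracks the hypothetical best response $\sigma'$ against the same $\tau$. \eve (the minimizer in $\hat G$) moves at her own vertices in copy~1, while the adversary (the maximizer) moves at \adam-vertices of both copies and at \eve-vertices of copy~2, since $\sup_{\sigma'}$ turns copy~2 into a maximization. While $b=0$, an \adam move is applied simultaneously to both copies (reflecting that a single $\tau$ sees identical histories), and at an \eve-vertex the adversary picks a copy-2 successor after seeing the copy-1 one, flipping $b$ to $1$ iff they differ; once diverged, the two copies evolve independently. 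Defining the payoff of a play of $\hat G$ to be $\Val(\pi_2)-\Val(\pi_1)$, an unfolding of definitions gives $\Regret{G}{}=\aVal(\hat G)$: the consistency of \adam before divergence exactly encodes ``single $\tau$ on shared histories'', while distinct post-divergence histories allow any pair of \adam moves to be realized by a single history-dependent $\tau$. For $\inffun$ and $\supfun$ I first apply the preliminaries' vertex-encoding trick (recording the running min/max in the state) to reduce, with only polynomial blow-up, to $\linffun$ and $\lsupfun$, to which the product construction is then applied.

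The hardness direction uses a small gadget: build $G'$ by prepending a fresh \adam-vertex with two outgoing edges, one entering $G$ at $v_I$ and one entering a trivial sub-arena whose cooperative value is a constant $C$ chosen strictly larger than $\cVal^{v_I}(G)$. Then \adam's regret-maximizing initial choice in $G'$ is to enter $G$, and the resulting regret equals $C-\aVal^{v_I}(G)$, reducing antagonistic value computation on $G$ to regret computation on $G'$. This yields $\P$-hardness in the four non-mean-payoff cases and a reduction from mean-payoff games to the regret problem for $\underline{\mpfun}$ and $\overline{\mpfun}$.

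The main obstacle is that the difference payoff $\Val(\pi_2)-\Val(\pi_1)$ is not itself one of the standard objectives on $\hat G$'s edges. After divergence, however, copy~1 is an ordinary antagonistic game from $v_1$ (value $\aVal^{v_1}(G)$) and copy~2 is a one-player cooperative game from $v_2$ (value $\cVal^{v_2}(G)$), and both quantities are already polynomial-time computable via the preliminaries, so the post-divergence contribution to regret at a diverged state $(v_1,v_2,1)$ is simply $\cVal^{v_2}(G)-\aVal^{v_1}(G)$. Substituting these as terminal payoffs in $\hat G$ reduces the problem to a standard antagonistic game of polynomial size with a prefix-independent objective. For $\mpfun$ the subtlety is that $\liminf$ does not distribute over subtraction; the resolution is to argue that optimal strategies can be taken eventually periodic so that both running averages converge on the final cycle and the difference of $\mpfun$ values equals the mean-payoff of the weight function $w_2-w_1$ along that cycle. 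This positionality argument, which reduces the problem to a genuine mean-payoff game on $\hat G$ with edge weights $w_2-w_1$, is the most delicate step of the proof.
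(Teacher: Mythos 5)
There is a genuine gap in your upper-bound reduction, and it is precisely the difficulty the paper's construction is built to circumvent. When you collapse the post-divergence behaviour into the terminal payoff $\cVal^{v_2}(G)-\aVal^{v_1}(G)$, you credit \eve with obtaining $\aVal^{v_1}(G)$ on the actual play after the divergence point. But the divergence is counterfactual: \eve plays a single strategy $\sigma$ in $G$ and never observes whether or where \adam ``mentally'' branched off, so she cannot switch to a worst-case optimal continuation at that moment; her actual continuation must simultaneously hedge against all later divergence opportunities, and the value it really achieves can be strictly below $\aVal^{v_1}(G)$. The arena $G_0$ of Fig.~\ref{fig:intro-ex} already refutes the construction for $\underline{\mpfun}$: the true regret is $1$ (playing $v_1\mapsto v_2$ forever, \adam always returns to $v_1$, the actual play has value $0$ while the alternative through $v_3,v_5$ has value $1$), whereas in your product game the divergence available against that strategy pays $\cVal^{v_3}-\aVal^{v_2}=1-\tfrac12=\tfrac12$ and the other choice at $v_1$ pays $\cVal^{v_2}-\aVal^{v_3}=\tfrac32$, so your game has value $\tfrac12<1$. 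The paper resolves the coupling between ``value of the actual play'' and ``best alternative offered'' differently: \eve first declares a bound $b$ on the cooperative value of every edge she will ever decline (the auxiliary weight $w'$), and then plays the \emph{restricted} antagonistic game $G^b$ with weights $w-b$; the restriction is exactly what forces the benchmark $b$ and the actual play's value to be realized by one and the same strategy, and the regret is recovered as $-\aVal(\hat G)$ over \eve's choice of $b$.

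Your lower-bound gadget also fails as stated. If the fresh initial vertex belongs to \adam, the same $\tau$ fixes the first move both in the actual play and in every alternative play, so when $\tau$ enters $G$ no alternative strategy of \eve can reach the high-value sub-arena at all; hence $\Regret{G'}{}=\Regret{G}{}$ and no reduction is obtained. The initial choice must belong to \eve, so that the declined branch is a genuine alternative; and that branch must then have high cooperative value but low antagonistic value (otherwise \eve simply takes it and has regret $0$), which is why the paper's gadget routes through an \adam vertex offering both a $W+1$ loop and a $-3W-2$ loop. Your overall plan (polynomial-time inter-reducibility with the antagonistic value problem, plus the min/max-recording preprocessing for $\inffun$ and $\supfun$) does match the paper's, and you correctly flag the $\liminf$-of-differences issue, but both of the concrete reductions need to be replaced.
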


\subsection{Upper bounds} We now describe an algorithm to compute regret for all
payoff functions. To do so, we will use the fact that all payoff functions we
consider, can be assumed to be prefix-independent. Thus, 
let us first convince the reader that one can, in polynomial time, modify
$\inffun$ and $\supfun$ games so that they become prefix-independent.

\begin{lemma}\label{lem:prefindepping}
	For a given weighted arena $G$ with payoff function $\supfun$:
	\(
		\Regret{G}{} = \Regret{G_{\max}}{};
	\)
	for payoff function $\inffun$:
	\(
		\Regret{G}{} = \Regret{G_{\min}}{}.
	\)
\end{lemma}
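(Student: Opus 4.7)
The plan is to construct $G_{\max}$ explicitly, set up a canonical bijection between strategies in $G$ (with the $\supfun$ payoff) and strategies in $G_{\max}$ (with the $\lsupfun$ payoff), and verify that this bijection preserves the value of every consistent play; the case of $\inffun$ and $G_{\min}$ is entirely symmetric.

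First I would formally define $G_{\max}$ as the synchronous product of $G$ with a memory cell recording the largest edge-weight witnessed so far. Concretely, its vertex set is $V \times (\{-\infty\} \cup \{-W,\dots,W\})$, the initial vertex is $(v_I,-\infty)$, ownership of $(v,m)$ is inherited from $v$, and there is an edge from $(v,m)$ to $(v',\max(m,w(v,v')))$ of weight $\max(m,w(v,v'))$ whenever $(v,v') \in E$. Since the memory coordinate is a deterministic function of the history, projection onto the first coordinate yields a bijection between plays of $G_{\max}$ from $(v_I,-\infty)$ and plays of $G$ from $v_I$. This lifts to a bijection between strategies for each player that preserves consistency: any strategy in $G$ lifts uniquely by reading the running maximum off the history, and any strategy in $G_{\max}$ projects back to a strategy in $G$.

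Second, I would establish the key value-preservation claim: for every play $\pi$ of $G$ and its lift $\hat\pi$ in $G_{\max}$, $\supfun(\pi)=\lsupfun(\hat\pi)$. The running maxima along $\pi$ form a nondecreasing integer sequence bounded above by $W$, so they stabilize at some finite step $N$ at the value $M:=\supfun(\pi)$; from step $N$ onwards every edge of $\hat\pi$ carries weight exactly $M$, whence $\lsupfun(\hat\pi)=M$.

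Finally, combining the strategy bijection with this value equality gives $\StratVal{G}{}{\sigma}{\tau}=\StratVal{G_{\max}}{}{\hat\sigma}{\hat\tau}$ on each matched pair, and the nested sups and infs in the definitions of $\regret{\sigma}{\cdot}{}$ and $\Regret{\cdot}{}$ therefore agree on both sides, yielding $\Regret{G}{}=\Regret{G_{\max}}{}$. There is no deep obstacle here; the only step requiring care is the stabilization argument, which hinges on the finite range of the running maximum and is precisely what justifies replacing the prefix-dependent $\supfun$ by the prefix-independent $\lsupfun$.
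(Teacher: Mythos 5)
Your proposal is correct and matches the paper's approach: the paper uses exactly the same product construction (vertices augmented with the running extremal weight, edges relabelled with that recorded value) and relies on the same observations — that projection gives a bijection between plays (hence strategies) and that the recorded weight stabilizes, making the $\lsupfun$/$\linffun$ value of the lifted play equal to the $\supfun$ (resp.\ $\inffun$) value of the original. The paper leaves these verifications as "not hard to see," whereas you spell out the stabilization argument explicitly; there is no substantive difference.
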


Consider a weighted arena $G = (V, V_\exists, v_I, E, w)$. We describe how to
construct $G_{\min}$ from $G$ so that there is a clear bijection between plays
in both games defined with the $\inffun$ payoff function. The arena $G_{\min}$
consists of the following components:
\begin{itemize}
	\item $V' = V \times \{w(e) \st e \in E\}$;
	\item $V'_\exists = \{(v,n) \in V' \st v \in V_\exists\}$;
	\item $v'_I = (v_I, W)$;
	\item $E' \ni \big((u,n),(v,m)\big)$ if and only if $(u,v)
		\in E$ and $m = \min\{n, w(u,v)\}$;
	\item $w'\big((u,n),(v,m)\big) = m$.
\end{itemize}
Intuitively, the construction keeps track of the minimal weight witnessed by a
play by encoding it into the vertices themselves. It is not hard to see that
plays in $G_{\min}$ indeed have a one-to-one correspondence with plays in $G$.
Furthermore, the $\linffun$ and $\lsupfun$ values of a play in $G_{\min}$ are 
easily seen to be equivalent to the $\inffun$ value of the play in $G_{\min}$
and the corresponding play in $G$.

A similar idea can be used to construct weighted arena $G_{\max}$ from a
$\supfun$ game such that the maximal weight is recorded (instead of the
minimal). 

\begin{lemma}\label{lem:regToAll}
	For payoff functions $\inffun$, $\supfun$, $\linffun$, 
	$\lsupfun$, $\underline{\mpfun}$, and
	$\overline{\mpfun}$ computing the regret of a game is at most as
	hard as computing the antagonistic value of a (polynomial-size) game
	with the same payoff function.
\end{lemma}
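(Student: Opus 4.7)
The plan is to construct, for each of the six payoff functions, a polynomial-size arena $\hat{G}$ on which a standard antagonistic-value computation recovers $\Regret{G}{}$. By Lemma~\ref{lem:prefindepping} I may assume $\Val$ is prefix-independent (replacing $G$ by $G_{\min}$ or $G_{\max}$ when dealing with $\inffun$ or $\supfun$). I would then precompute $\cVal^v$ for each $v \in V$; this is polynomial for all six payoff functions, since cooperative value reduces to a one-player optimisation.

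The reformulation driving the construction is that, because \adam may use any history-dependent strategy, for a fixed $\sigma$ the regret equals
\[
\regret{\sigma}{G}{} \;=\; \sup_{\tau,\,i,\,v''} \bigl(\cVal^{v''} - \StratVal{}{}{\sigma}{\tau}\bigr),
\]
where $(i, v'')$ ranges over hypothetical deviations at \eve-owned vertices along $\out{v_I}{\sigma}{\tau}$, with $v'' \neq v_{i+1}$ and $(v_i, v'') \in E$. This identity holds because $\tau$ may act antagonistically on the main branch while simultaneously acting cooperatively after any divergence, yielding $\cVal^{v''}$ on the alternative branch. To realise this inside a single game, $\hat{G}$ interposes, after each \eve move from $v$ to $v'$, an \adam-owned ``challenge'' vertex. \adam either accepts (play resumes in a pre-deviation copy of $G$ whose edge weights are rewritten to $0$), or declares a deviation by choosing some $v'' \neq v'$ with $(v, v'') \in E$, sending the play into a post-deviation copy of $G$ indexed by $k = \cVal^{v''}$ in which every original weight $w$ is replaced by $w - k$. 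Since $\Val(w - k) = \Val(w) - k$ holds for all six payoffs and $\Val$ is prefix-independent, the value of a play in $\hat{G}$ is $0$ if \adam never deviates and $\StratVal{}{}{\sigma}{\tau} - \cVal^{v''}$ otherwise.

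With \eve kept as maximiser in $\hat{G}$, I claim $\aVal(\hat{G}) = -\Regret{G}{}$. The easy direction lifts any $\sigma$ from $G$ into $\hat{G}$ by ignoring \adam's deviation choice; the infimum over $\hat\tau$ then equals $-\regret{\sigma}{G}{}$. The main obstacle, and the reverse direction, stems from the fact that \eve in $\hat{G}$ could a priori commit to different continuations after each possible deviation, whereas in $G$ she is bound to a single $\sigma$. To close the gap, I would argue that a regret-minimising $\sigma$ in $G$ may be taken antagonistically optimal from every successor vertex it moves to: substituting an $\aVal$-optimal continuation from $v_{i+1}$ for a sub-optimal one can only raise $\StratVal{}{}{\sigma}{\tau}$ against \adam's adversarial branch, and hence never increases regret. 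Under this normalisation, \eve's post-deviation freedom in $\hat{G}$ coincides with her continuation in $G$, so the two sup-inf optimisations match. Since $\hat{G}$ has $O(|V|^2)$ vertices and uses the same payoff function as $G$, the known $\P$, $\UP \cap \coUP$, and pseudo-polynomial bounds for $\aVal$ transfer to $\Regret{G}{}$.
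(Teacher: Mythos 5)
There is a genuine gap, and it is exactly at the point you flag as "the main obstacle": your normalisation claim is false, and without it the reduction computes the wrong value. Consider the arena with \eve-vertices $v_I$ (successors $u,z$) and $u$ (successors $p,q$), where $z$ carries a self-loop of weight $6$, $q$ a self-loop of weight $5$, and $p$ is an \adam-vertex with two self-loops of weights $0$ and $100$ (so $\aVal^p=0$, $\cVal^p=100$, $\aVal^u=5$, $\cVal^u=100$, $\cVal^z=\aVal^z=6$, $\cVal^q=\aVal^q=5$; take $\underline{\mpfun}$). The regret-minimising strategy plays $v_I\to u\to p$ and has regret $6$ (the best exposed deviation is $z$ with $\cVal^z=6$ against an actual value of $0$); routing through $q$ or $z$ exposes the deviation of value $100$ and yields regret $95$ or $94$. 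So the optimal $\sigma$ is \emph{not} antagonistically optimal from $u$ (that would mean going to $q$), and performing your substitution raises the regret from $6$ to $95$: changing the continuation changes the actual play, hence changes the set of deviation opportunities \adam may claim downstream, which your argument ignores. Correspondingly, in your $\hat{G}$ \eve routes $v_I\to u\to p$ in the zero-weight pre-deviation copy but, once \adam declares the deviation to $z$, switches to $q$ in the post-deviation copy, securing $5-6=-1$ there and $0-5=-5$ against the deviation to $q$; hence $\aVal(\hat{G})=-5$ while $-\Regret{G}{}=-6$. The decoupling of \eve's post-deviation play from the routing that generated the deviation opportunities is precisely the information \eve does not have in $G$, so your reverse direction fails.

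The paper closes this hole differently: rather than letting \adam name a deviation locally, \eve commits \emph{up front} to a single bound $b$ on the cooperative value of every deviation she will ever expose. This is enforced structurally by restricting her to the subgraph $G^b$ of edges $e$ with $w'(e)\le b$ (where $w'(e)$ is the best $\cVal$ of an alternative successor), with a heavily penalised sink $v_\bot$ catching strategies that violate the bound, and by shifting \emph{all} weights of the actual play uniformly by $-b$. Since there is then only one copy of the actual play, \eve's routing and her "defence" are the same strategy, and one shows $\regret{\sigma}{G}{}=\sup_\tau\max\{0,\,b_\sigma-\StratVal{}{}{\sigma}{\tau}\}$ with $b_\sigma=\inf\{b\st\sigma\in\StrE(b)\}$ (the decoupling on \adam's side being legitimate by prefix-independence). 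Your per-deviation shift by $\cVal^{v''}$ is a natural first idea, but it must be replaced by this global-threshold mechanism for the equality $\Regret{G}{}=-\aVal(\hat{G})$ to hold.
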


%
%

We now describe the construction used to prove Lemma~\ref{lem:regToAll}.
Let us fix a weighted arena $G$. We define a new weight function $w'$ as
follows. For any edge $e=(u,v)$ let $w'(e) = -\infty$ if $u \in V
\setminus V_\exists$, and if $u \in V_\exists$ then \( w'(e) = \max
\{\cVal^{v'} \st (u,v') \in E\setminus\{e\}\}.  \) Intuitively, $w'$
represents the best value obtainable for a strategy of \eve that differs
at the given edge. It is not difficult to see that in order to minimize
regret, \eve is trying to minimize the difference between the value given by the
original weight function $w$ and the value given by $w'$.
Let $\range{w'}$ be the set of values $\{w'(e) \st e \in E\}$. For $b \in
\range{w'}$ we define $G^b$ to be the graph obtained by restricting $G$---the
original weighted arena with weight function $w$---to edges $e$ with $w'(e)\leq
b$.

\begin{figure}
\begin{center}
\begin{tikzpicture}
\node[va,initial](v0){$v_0$};
\node[ve,below=of v0](v1){$v_1$};
\node[ve,right=5cm of v1](sink){$v_\bot$};
\node[right=of v1,yshift=1.5cm](vb){$v^{b_1}_I$};
\node[right=of vb](rb){$\cdots$};
\node[right=of v1,yshift=-1.5cm](vb2){$v^{b_n}_I$};
\node[right=of vb2](rb2){$\cdots$};

\node[fit=(vb)(rb)](gb){};
\node[yshift=0.3cm] at (gb.north){$G^{b_1}$};
\node[below=0.4cm of gb](others){$\vdots$};
\node[fit=(vb2)(rb2)](gb2){};
\node[yshift=-0.3cm] at (gb2.south){$G^{b_n}$};

\path
(v0) edge[loop above] node[el,swap] {$0$} (v0)
(v0) edge node[el,swap] {$0$} (v1)
(sink) edge[loop right] node[el,swap] {$-2W-1$} (sink)
(v1) edge node[el] {$0$} (vb)
(v1) edge node[el,swap] {$0$} (vb2)
(vb) edge node[el] {$w(e)-b_1$} (rb)
(vb2) edge node[el] {$w(e)-b_n$} (rb2)
(gb) edge[dotted] node[el] {if $w'(e) > b_1$} node[el,swap,pos=0.3] {$0$} (sink)
(gb2) edge[dotted] node[el,swap] {if $w'(e) > b_n$} node[el,pos=0.3] {$0$} (sink)
;

\end{tikzpicture}
\caption{Weighted arena $\widehat{G}$, constructed from $G$. Dotted lines represent
several edges added when the condition labelling it is met.}
\label{fig:ghat-algo}
\end{center}
\end{figure}

Next, we will construct a new weighted arena $\widehat{G}$ such that the
regret of $G$ is a function of the {\em antagonistic} value of $\widehat{G}$.
Figure~\ref{fig:ghat-algo} depicts the general form of the arena we
construct. We have three vertices $v_0 \in \widehat{V} \setminus
\widehat{V_\exists}$ and $v_1,v_\bot \in \widehat{V_\exists}$ and a ``copy'' of
$G$ as $G^b$ for each $b \in \range{w'} \setminus \{-\infty\}$. We have
a self-loop of weight $0$ on $v_0$ which is the initial vertex of
$\widehat{G}$, a self-loop of weight $-2W-1$ on $v_\bot$, and weight-$0$
edges from $v_0$ to $v_1$ and from $v_1$ to the initial vertices of
$G^b$ for all $b$.  Recall that $G^b$ might not be total. To fix this we
add, for all vertices without a successor, a weight-$0$ edge to
$v_\bot$. The remainder of the weight function $\widehat{w}$, is defined for
each edge $e^b$ in $G^b$ as $\widehat{w}(e^b) = w(e)-b$. 

Intuitively, in $\widehat{G}$ \adam first decides whether he can ensure a
non-zero regret. If this is the case, then he moves to $v_1$. Next, \eve
chooses a maximal value she will allow for strategies which differ from
the one she will play (this is the choice of $b$). The play then moves
to the corresponding copy of $G$, i.e. $G^b$. She can now play to
maximize her mean-payoff value. However, if her choice of $b$ was not
correct then the play will end in $v_\bot$.

We show that, for all prefix-independent payoff functions we consider, the
following holds:
\begin{claim}
	For all prefix-independent payoff functions considered in this work
\(\Regret{G}{} = -\aVal(\widehat{G}).\)
\end{claim}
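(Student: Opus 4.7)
I would prove the two inequalities separately.

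For $\Regret{G}{} \leq -\aVal(\hat{G})$: Take an Eve strategy $\hat{\sigma}$ attaining $\aVal(\hat{G})$; we may assume $\aVal(\hat{G}) < 0$, else the claim is trivial. After Adam moves $v_0 \to v_1$, $\hat{\sigma}$ commits to some $b^* \in \range{w'}$ and then plays a strategy $\sigma^*$ in $G^{b^*}$ maximising the antagonistic value there, so that the guaranteed $\hat{G}$-payoff is $V^* - b^* = \aVal(\hat{G})$, where $V^* = \inf_\tau \StratVal{}{}{\sigma^*}{\tau}$ is computed back in $G$. Playing $\sigma^*$ in $G$, the defining property of $w'$ --- combined with Adam's freedom to cooperate with Eve's alternatives after any off-main-play deviation --- gives, for every $\tau$,
\[
   \sup_{\sigma'} \StratVal{}{}{\sigma'}{\tau} \leq \max\bigl(b^{*},\, \StratVal{}{}{\sigma^*}{\tau}\bigr).
\]
Subtracting $\StratVal{}{}{\sigma^*}{\tau}$ and supping over $\tau$ yields $\regret{\sigma^*}{G}{} \leq b^* - V^* = -\aVal(\hat{G})$.

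For $\Regret{G}{} \geq -\aVal(\hat{G})$: Fix an arbitrary Eve strategy $\sigma$ in $G$ and let $b_\sigma$ be the smallest $b \in \range{w'}$ such that $\sigma$ uses only Eve-edges $e$ with $w'(e) \leq b$. In $\hat{G}$, Eve can choose $b := b_\sigma$ at $v_1$ and mirror $\sigma$ inside $G^{b_\sigma}$; this never reaches $v_\bot$ and, once Adam enters $v_1$, achieves $\hat{G}$-value $\aVal(G,\sigma) - b_\sigma$. Thus $\aVal(\hat{G}) \geq \min\bigl(0,\, \aVal(G,\sigma) - b_\sigma\bigr)$. It remains to build an adversary $\tau$ in $G$ with $\regret{\sigma}{G}{} \geq b_\sigma - \aVal(G,\sigma)$: $\tau$ first routes the main play through some Eve-edge $e^*$ with $w'(e^*) = b_\sigma$, then plays antagonistically on the tail (pushing $\StratVal{}{}{\sigma}{\tau}$ towards $\aVal(G,\sigma)$), and on branching histories cooperates with Eve so that $\sup_{\sigma'} \StratVal{}{}{\sigma'}{\tau} \geq \cVal^{v'} = w'(e^*) = b_\sigma$ for some alternative successor $v'$ of $e^*$'s source. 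Together these give $\regret{\sigma}{G}{} \geq b_\sigma - \aVal(G,\sigma) \geq -\aVal(\hat{G})$.

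The main obstacle is constructing $\tau$ in the lower bound: Adam must simultaneously witness an edge of $\sigma$ with $w'(e) = b_\sigma$ \emph{and} drive $\StratVal{}{}{\sigma}{\tau}$ down to $\aVal(G,\sigma)$ on the main play. Prefix-independence of the payoff is what makes this feasible --- since finite prefixes do not affect $\Val$, Adam can devote a bounded prefix to forcing $e^*$ to be visited and only then switch to the antagonistic tail, without any loss on the limit value. The remaining bookkeeping is to check this routing argument uniformly for $\linffun$, $\lsupfun$, $\underline{\mpfun}$ and $\overline{\mpfun}$.
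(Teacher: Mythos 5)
Your upper-bound half is sound and is essentially the paper's argument read off directly from the strategy correspondence: since the optimal strategy in $\hat{G}$ avoids $v_\bot$, the induced $\sigma^*$ only ever uses edges $e$ with $w'(e)\le b^*$, so any deviating play coincides with $\out{}{\sigma^*}{\tau}$ up to its first divergence and from there on is, by prefix-independence, worth at most $\cVal^{v'}\le b^*$ for some alternative successor $v'$; this yields $\sup_{\sigma'}\StratVal{}{}{\sigma'}{\tau}\le\max\bigl(b^*,\StratVal{}{}{\sigma^*}{\tau}\bigr)$ and hence $\regret{\sigma^*}{G}{}\le\max(0,b^*-V^*)=-\aVal(\hat{G})$. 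That direction I would accept.

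The lower bound, however, has a genuine gap, and it sits exactly where you flag ``the main obstacle''. You need one strategy $\tau$ that simultaneously (a) routes the main play through a history at which $\sigma$ leaves a deviation of value $b_\sigma$ available and (b) drives $\StratVal{}{}{\sigma}{\tau}$ down to $\inf_{\tau'}\StratVal{}{}{\sigma}{\tau'}$. Prefix-independence does not reconcile (a) and (b): the obstruction is not that the routing prefix pollutes the limit value, but that the routing and the antagonistic continuation may require incompatible choices at the same \adam vertex, after which \adam is irrevocably committed and the residual of $\sigma$ need not admit the value $\inf_{\tau'}\StratVal{}{}{\sigma}{\tau'}$ any more. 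Concretely, let $v_I$ be an \adam vertex with successors $A$ and $B$, both \eve vertices; from $A$ the only move is onto a $0$-weighted self-loop, while from $B$ \eve may move onto a $5$-loop or a $10$-loop. For the strategy $\sigma$ that picks the $10$-loop at $B$, every edge used by $\sigma$ has $w'\le\cVal^{(5\text{-loop})}=5$, so $b_\sigma=5$, and $\inf_\tau\StratVal{}{}{\sigma}{\tau}=0$ (via $A$); yet $\regret{\sigma}{G}{}=0$, because the only $\tau$ that makes the $5$-valued deviation reachable goes to $B$, where $\sigma$ already collects the maximum, and the $\tau$ that realises the value $0$ never passes an \eve choice. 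So the inequality $\regret{\sigma}{G}{}\ge b_\sigma-\inf_\tau\StratVal{}{}{\sigma}{\tau}$ is false in general, and no cleverer construction of $\tau$ can rescue it: the decoupled quantity $\inf_\sigma\max\bigl(0,\,b_\sigma-\inf_\tau\StratVal{}{}{\sigma}{\tau}\bigr)$ can strictly exceed $\Regret{G}{}$. I will note that the paper's own chain of equalities performs the very same decoupling --- it substitutes $b_\sigma=\sup_\tau\sup_{\sigma'\ne\sigma}\StratVal{}{}{\sigma'}{\tau}$ into an expression that is afterwards maximised over $\tau$ again --- so your write-up faithfully mirrors the intended argument and merely makes its hidden assumption explicit; but as a self-contained proof the lower-bound step does not go through as written.
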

This implies Lemma~\ref{lem:regToAll} for all prefix-independent payoff
functions. Together with Lemma~\ref{lem:prefindepping}, we get the same result
for $\inffun$ and $\supfun$.

\begin{proof}[Proof of the Claim]
Let us start by arguing that the following equality holds.
\begin{equation}\label{eqn:devRegret}
	\Regret{G}{} = \inf_{\sigma \in \StrAllE} \sup_{\tau \in \StrAllA}
	\sup_{\sigma' \in \StrAllE \setminus\{\sigma\}}
	\{0,\StratVal{}{}{\sigma'}{\tau} - \StratVal{}{}{\sigma}{\tau} \}.
\end{equation}
Indeed, it follows from the definition of regret that if $\sigma' = \sigma$ then
the regret of the game is $0$. Thus, \adam can always ensure the regret of a
game is at least $0$. Now, for $b \in \range{w'}$, define $\StrE(b) \subseteq
\StrAllE(G)$ as:
\[
	\StrE(b) := \{ \sigma \st \sup_{\tau \in \StrAllA} \sup_{\sigma' \in
	\StrAllE \setminus \{\sigma\}} \StratVal{}{}{\sigma'}{\tau} \leq b\}.
\] 
It is clear from the definitions that $\sigma\in \StrE(b)$ if and only if
$\sigma$ is a strategy for \eve in $G^b$ which avoids ever reaching $v_\bot$.
Now, if we let 
\[
	b_\sigma = \sup_{\tau \in \StrAllA} \sup_{\sigma' \in \StrAllE
		\setminus\{\sigma\}} \StratVal{}{}{\sigma'}{\tau},
\]
then $\sigma \in \StrE(b)$ if and only $b_\sigma \leq b$.  It follows that for
all $\sigma$:
\begin{equation}\label{eqn:devInf}
	\sup_{\tau \in \StrAllA} \sup_{\sigma' \in \StrAllE \setminus \{\sigma\}}
	\StratVal{}{}{\sigma'}{\tau} = \inf \{ b \st \sigma \in \StrE(b)\}.
\end{equation}
We now turn to the mean-payoff game played on $\widehat{G}$, and make some
observations about the strategies we need to consider. It is well known that
memoryless strategies suffice for either player to ensure an antagonistic value
of at least (resp. at most) $\aVal(\widehat{G})$, for all quantitative games
considered in this work, so we can assume that \adam and \eve play positionally.
It follows that all plays either remain in $v_0$, or move to $G^b$ for some $b$,
and \adam can ensure a non-positive payoff.  Note that for $b_{\max} =
\max(\range{w'} \setminus \{-\infty\})$ we have $G^{b_{\max}} = G$.  So the copy
of $G^{b_{\max}}$ in $\widehat{G}$ has no edge to $v_\bot$, and by playing to this
sub-graph \eve can ensure a payoff of at least $-|b_{\max} - W| \geq -2W$.  As
any play that reaches $v_\bot$ will have a payoff of $-2W-1$, we can restrict
\eve to strategies which avoid $v_\bot$, and hence all plays either remain in
$v_0$ or (eventually) in the copy of $G^b$ for some $b$.  Now $G^b$ contains no
restrictions for \adam, so we can assume that he plays the same strategy in all
the copies of $G^b$ (where he cannot force the play to $v_\bot$), and these
strategies have a one-to-one correspondence with strategies in $G$.  Likewise,
as \eve chooses a unique $G^b$ to play in, we have a one-to-one correspondence
with strategies of \eve in $\widehat{G}$ and strategies in $G$. More precisely, if
$\widehat{\sigma} \in \StrAllE(\widehat{G})$ is such that $\widehat{\sigma}(v_1) = v_I^b$
and $\widehat{\sigma}$ avoids $v_\bot$, then the corresponding strategy $\sigma \in
\StrAllE(G)$ is a valid strategy in $G^b$, and hence:
\begin{equation}\label{eqn:stratEquiv}
	\widehat{\sigma}(v_1)=v_I^b \implies \sigma \in \Sigma(b).
\end{equation}
Now suppose $\widehat{\sigma} \in \StrAllE(\widehat{G})$ is a strategy such that
$\widehat{\sigma}(v_1) = v_I^b$ and $\widehat{\sigma}$ avoids $v_\bot$, and $\widehat{\tau}
\in \StrAllA(\widehat{G})$ is a strategy such that $\widehat{\tau}(v_0) = v_1$.  Let $\sigma
\in \Sigma(b)$ and $\tau \in \StrAllA(G)$ be the strategies in $G$ corresponding
to $\widehat{\sigma}$ and $\widehat{\tau}$ respectively.  
It is easy to show that:
\begin{equation}\label{eqn:GtoGhat}
	-\StratVal{\widehat{G}}{ }{\widehat{\sigma}}{\widehat{\tau}} =
	b - \StratVal{G}{ }{\sigma}{\tau}.
\end{equation}
Putting together Equations~\eqref{eqn:devRegret}--\eqref{eqn:GtoGhat} gives:
\[
\begin{array}{rclr}
	-\aVal(\widehat{G}) &=& - \sup_{\widehat{\sigma}} \inf_{\widehat{\tau}}
		\StratVal{\widehat{G}}{}{\widehat{\sigma}}{\widehat{\tau}}\\[1.5ex]
	&=& \inf_{\widehat{\sigma}} \sup
		(\{-\StratVal{\widehat{G}}{}{\widehat{\sigma}}{\widehat{\tau}}\st
		\widehat{\tau}(v_0) = v_1 \} \cup \{0\})\\[1.5ex]
	&=& \inf \{ \sup (\{-\StratVal{\widehat{G}}{}{\widehat{\sigma}}{\widehat{\tau}}\st
		\widehat{\tau}(v_0) = v_1 \} \cup \{0\}) \st \widehat{\sigma}(v_1) =
		v_I^b \}\\[1.5ex]
	&=&  \inf \{ \sup_{\tau \in \StrAllA} (\{b-\StratVal{G}{}{\sigma}{\tau} \}
		\cup \{0\}) \st \sigma \in \Sigma(b)\}\\[1.5ex]
	&=& \inf_{\sigma \in \StrAllE} \sup_{\tau \in \StrAllA} (\{\inf\{b\st
		\sigma \in \Sigma(b)\} -\StratVal{G}{}{\sigma}{\tau} \} \cup
		\{0\})\\[1.5ex]
	&=& \inf_{\sigma \in \StrAllE} \sup_{\tau \in \StrAllA} \sup_{\sigma'
		\in \StrAllE} \{0,\StratVal{G}{ }{\sigma'}{\tau} -
		\StratVal{G}{}{\sigma}{\tau} \}\\[1.5ex]
	&=& \Regret{G}{}\text{ as required.} & {}
\end{array}
\]
\end{proof}

\subsection{Lower bounds}  For all the payoff functions, from 
$G$ we can construct in logarithmic space $G'$ such that the antagonistic
value of $G$ is a function of the regret value of $G'$, and so we have:

\begin{lemma}\label{lem:AlltoRegret}
	For payoff functions $\inffun$, $\supfun$, $\linffun$, 
	$\lsupfun$, $\underline{\mpfun}$, and
	$\overline{\mpfun}$ computing the regret of a game is at least as
	hard as computing the antagonistic value of a (polynomial-size) game
	with the same payoff function.
\end{lemma}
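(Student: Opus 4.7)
The aim is to produce, in logarithmic space from the given weighted arena $G$, a polynomial-size arena $G'$ (polynomial-time for $\underline{\mpfun},\overline{\mpfun}$) such that $\aVal(G)$ can be recovered in polynomial time from $\Regret{G'}{}$. I would proceed in three steps.

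\emph{Reduction to prefix-independent payoffs.} For $\inffun$ and $\supfun$, I would first replace $G$ by the corresponding arena from Lemma~\ref{lem:prefindepping} (i.e.\ $G_{\min}$ or $G_{\max}$). This is a log-space transformation that preserves antagonistic value and turns the payoff function into a prefix-independent one. It thus suffices to treat prefix-independent payoff functions uniformly from here on.

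\emph{Construction of $G'$.} The construction is designed to be dual to the one in Lemma~\ref{lem:regToAll}. The guiding idea is to arrange that $\sup_{\sigma'\in\StrAllE}\StratVal{G'}{}{\sigma'}{\tau}$ evaluates to a fixed constant $C$, independent of $\tau$ and easily computable from $G$ (for instance a bound derived from $\cVal(G)$ or from $W\cdot|V|$). I would obtain this by prepending to $G$ a small gadget: a new \adam-owned initial vertex $v_0$ whose first outgoing edge enters $G$ at $v_I$ and whose second outgoing edge leads to a reference component where every alternative \eve-strategy $\sigma'$ attains payoff exactly $C$. Crucially, because the branching is at an \adam-vertex, \eve's own strategy $\sigma$ is committed before the branch is resolved, so she cannot simultaneously mirror the reference gadget and play inside $G$; yet an alternative strategy $\sigma'$ can always exploit whichever branch $\tau$ chooses.

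\emph{Analysis.} Once the constant-supremum property is in place, the regret simplifies neatly:
\[
\Regret{G'}{}
\;=\; \inf_{\sigma\in\StrAllE}\sup_{\tau\in\StrAllA}\bigl(C - \StratVal{G'}{}{\sigma}{\tau}\bigr)
\;=\; C - \sup_{\sigma\in\StrAllE}\inf_{\tau\in\StrAllA}\StratVal{G'}{}{\sigma}{\tau}
\;=\; C - \aVal(G),
\]
so that $\aVal(G)=C-\Regret{G'}{}$ and the reduction is complete. Since $C$ is computable in polynomial time from $G$, and since the construction is carried out once uniformly across the six payoff functions, this yields log-space reductions for $\inffun,\supfun,\linffun,\lsupfun$ and a polynomial-time reduction to mean-payoff games for $\underline{\mpfun},\overline{\mpfun}$.

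\emph{Main obstacle.} The delicate point is designing the reference gadget so that the constant-supremum property genuinely holds for every $\tau$ without enabling \eve's own $\sigma$ to trivially attain $C$ (which would collapse $\Regret{G'}{}$ to $0$ and destroy the reduction). The resolution exploits two features: (i) putting the decisive branching at an \adam-vertex prevents \eve from unilaterally opting into the reference gadget, while (ii) the prefix-independence established in Step~1 guarantees that the weights of the added transitions do not perturb the payoffs computed on infinite plays, so that the value of a play entering $G$ from $v_0$ coincides with its value inside $G$.
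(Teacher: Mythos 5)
There is a genuine gap, and it sits exactly where you flag the ``main obstacle'': your resolution (i) does not work. If the branching vertex $v_0$ is owned by \adam, then \adam's strategy $\tau$ decides once and for all whether the play enters $G$ or the reference component, and that decision applies to \emph{every} \eve strategy, including the alternative $\sigma'$ inside the $\sup_{\sigma'}$. Concretely, for a $\tau$ that sends the play into $G$, no $\sigma'$ can reach the reference component, so $\sup_{\sigma'}\StratVal{G'}{}{\sigma'}{\tau}$ equals the best value \eve can obtain in $G$ against the restriction of $\tau$ to $G$ --- a quantity that varies with $\tau$ and is in general far from $C$. For a $\tau$ that sends the play to the reference component, both $\sigma$ and $\sigma'$ land there together and the regret contribution is $C-C=0$. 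Putting these together, your gadget yields $\Regret{G'}{}=\Regret{G}{}$ rather than $C-\aVal(G)$: the constant-supremum property fails, so the first equality in your displayed chain is false and the reduction computes nothing new. The underlying dilemma is real: with an \adam-owned branch the alternative strategy cannot exploit the reference; with an \eve-owned branch and a reference that deterministically pays $C\ge\cVal(G)$, \eve herself opts into the reference and collapses the regret to $0$.

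The paper escapes this dilemma by making the branch \eve-owned but the reference component \emph{nondeterministic for \adam}: after \eve opts into the gadget, \adam chooses between a sink of value $N_1=W+1>\cVal(G)$ and a sink of value $N_2=-3W-2$. If \eve commits to the gadget, \adam sends her to $N_2$ while the alternative strategy playing into $G$ could have obtained $\cVal(G)$ with \adam's cooperation, so her regret is at least $2W+2$; if she plays into $G$, \adam simultaneously plays antagonistically inside $G$ and points the gadget at $N_1$, so the alternative that deviates into the gadget obtains $W+1$ and her regret is exactly $W+1-\aVal(G)\le 2W+1$. The minimum is therefore attained by entering $G$, giving $\aVal(G)=W+1-\Regret{G'}{}$. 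The moral is that the reference value seen by the alternative strategy must itself be under \adam's control, so that it is high precisely when \eve refuses the gadget and catastrophically low when she accepts it. Your Step~1 (pre-processing $\inffun$ and $\supfun$ into prefix-independent form) is sound but unnecessary: the paper instead tunes the entry weight $L$ per payoff function ($L=W$ for $\inffun$, $L=-W$ for $\supfun$, $L=0$ otherwise), which handles the prefix-dependent cases directly.
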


\begin{figure}
\begin{center}
\begin{tikzpicture}[inner sep=2mm, ve/.style={rectangle,
	draw},va/.style={circle, draw}, node distance=1cm]
\node[ve,initial](A){{$v_I'$}};
\node[ve,dotted,below=of A,label=below:{arena $G$}](B){{$v_I$}};
\node[va,right=of A](C){};
\node[va,right=of C, yshift=.5cm](D){};
\node[va,right=of C, yshift=-.5cm](E){};

\path
(A) edge node[el]{$L$} (B)
(A) edge node[el]{$L$} (C)
(C) edge node[el]{$M_1$}(D)
(C) edge node[el,swap]{$M_2$}(E)
(D) edge[loop above] node[el,swap]{$N_1$} (D)
(E) edge[loop below] node[el]{$N_2$} (E);
\end{tikzpicture}
\caption{Gadget to reduce a game to its regret game.}\label{fig:AlltoRegret}
\end{center}
\end{figure}

\begin{proof}
	Suppose $G$ is a weighted arena with initial vertex $v_I$. Consider the
	weighted arena $G'$ obtained by adding to $G$ the gadget of
	Figure~\ref{fig:AlltoRegret}.  The initial vertex of $G'$ is set to be
	$v'_I$. In $G'$ from $v'_I$ \eve can either progress to the original
	game or to the new gadget, both with weight $L$.	
	We claim that the right choice of values for the parameters
	$L,M_1,M_2,N_1,N_2$ makes it so that the antagonistic value of $G$ is a
	function of the regret of the game $G'$. 

	Let us first give the values of  $L$, $M_1$, $M_2$, $N_1$, and $N_2$ for
	each of the payoff functions considered.  For all our payoff functions
	we have $M_1=M_2=L$; $N_1 = W+1$; and $N_2 = -3W-2$.  For $\inffun$ we
	have $L=W$, for $\supfun$ we have $L=-W$ and for the remaining payoff
	functions we have $L=0$.

	The following result shows that computing the regret of $G$ is at least
	as hard as computing the (antagonistic) value of $G'$.
	\begin{claim}
		For all payoff functions:
		\[\aVal(G) = W+1-\Regret{G'}{}.\]
	\end{claim}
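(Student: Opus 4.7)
Let $v^* := \aVal(G)$. The plan is to prove the equality $v^* = W+1-\Regret{G'}{}$ by establishing matching bounds $\Regret{G'}{} \leq W+1-v^*$ and $\Regret{G'}{} \geq W+1-v^*$. The core structural observation is that Eve's initial move at $v'_I$ is binary: either she enters the copy of $G$ via $v_I$, or she commits to the gadget via $C$. Once in the gadget she has no further choices, and the tail of the play is determined entirely by Adam's choice at $C$. The parameters $L, M_1, M_2, N_1, N_2$ are tuned so that if Adam sends the play to $D$ then the value of the resulting play in $G'$ equals $W+1$, while if he sends it to $E$ the value is far below anything Eve can guarantee in $G$.

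For the upper bound, I would exhibit the strategy $\sigma^*$ for Eve that plays $v'_I \mapsto v_I$ and then plays an antagonistic-optimal strategy $\sigma_G^*$ in $G$. Against any Adam strategy $\tau$, the resulting play has value at least $v^*$ by the optimality of $\sigma_G^*$. On the other hand, $\sup_{\sigma'} \StratVal{G'}{}{\sigma'}{\tau}$ is realized either by $\sigma'$ moving to $C$ (value at most $V_D = W+1$, since Adam's choice at $C$ is fixed by $\tau$) or by $\sigma'$ going to $v_I$ and playing cooperatively against $\tau|_G$ (value at most $\cVal^{v_I}(G) \leq W$). Taking the supremum and subtracting shows the regret of $\sigma^*$ is at most $W+1-v^*$.

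For the lower bound, I would case split on Eve's initial move. If $\sigma(v'_I) = v_I$, Adam plays $\tau(C) = D$ so that the alternative $\sigma' \mapsto C$ realizes $V_D = W+1$, and simultaneously plays $\tau|_G$ driving $\Val_G(\sigma|_G, \tau|_G)$ down to $v^*$ (possible because $v^* = \aVal(G)$); the resulting regret is at least $W+1-v^*$. If $\sigma(v'_I) = C$, Adam plays $\tau(C) = E$, so Eve's own value equals $V_E$, while the alternative $\sigma' \mapsto v_I$ can be made to achieve $\cVal^{v_I}(G) \geq v^*$ by cooperatively choosing $\tau|_G$. The inequality $v^* - V_E \geq W+1-v^*$ then follows from the choice $N_2 = -3W-2$ together with the payoff-dependent $L$, using that $v^* \in [-W, W]$.

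The main obstacle will be the \emph{uniform} verification across all six payoff functions that $V_D = W+1$ and that $V_E$ is small enough to make the second branch of the lower bound go through. For the prefix-independent measures $\linffun, \lsupfun, \underline{\mpfun}, \overline{\mpfun}$ this is immediate with $L = 0$ since only the tail of the play matters. For $\inffun$ the initial edges must not drag the value below the self-loop at $D$, which is exactly why $L$ is chosen to be $W$; symmetrically, for $\supfun$ the initial edges must not pull the value above the self-loop at $E$, which is why $L = -W$. Once these case-by-case computations of $V_D$ and $V_E$ are carried out, the two bounds meet at $W+1-v^*$ and the claim follows.
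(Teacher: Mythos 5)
Your proposal follows the same route as the paper's own proof: a case split on Eve's single meaningful choice at $v_I'$, showing that entering $G$ yields regret exactly $W+1-\aVal(G)$ while committing to the gadget yields a larger regret (the packaging as two matching bounds is cosmetic). For the prefix-independent payoffs $\linffun$, $\lsupfun$, $\underline{\mpfun}$, $\overline{\mpfun}$ --- where $L=0$ and only the tails $N_1^\omega$ and $N_2^\omega$ matter, so $V_D=W+1$ and $V_E=-3W-2$ --- your argument is complete and correct.

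The gap sits exactly where you locate it, in the ``case-by-case computations of $V_D$ and $V_E$'' that you defer and assert will work out: with the stated parameters they do not. For $\inffun$ with $L=M_1=W$, the play $v_I'\,C\,D^\omega$ traverses weights $W, W, W+1, W+1, \ldots$, so $V_D=\min\{W,W+1\}=W$ rather than $W+1$; the regret of entering $G$ is then $W-\aVal(G)$ and the claimed identity is off by one. Worse, for $\supfun$ with $L=M_2=-W$, the play $v_I'\,C\,E^\omega$ gives $V_E=\max\{-W,-3W-2\}=-W$, so your key inequality $v^*-V_E\ge W+1-v^*$ (with $v^*=\aVal(G)$) reduces to $2\,\aVal(G)\ge 1$, which can fail: if every weight of $G$ equals $-W$, then $\aVal(G)=\cVal(G)=-W$, Eve achieves regret $0$ by committing to the gadget (her payoff is $-W$ whether Adam goes to $D$ or to $E$, matching the best alternative), whereas the claim predicts $2W+1$. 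In fairness, the paper's own proof asserts uniformly that the gadget yields payoffs $W+1$ and $-3W-2$ and so glosses over the same point; the argument is airtight only for the prefix-independent payoffs, and the $\inffun$/$\supfun$ cases need either different parameter values or a detour through the transformation of Lemma~\ref{lem:prefindepping} before applying the gadget.
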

	We first observe that for all payoff functions we consider we have that
	$\aVal(G)$ and $\cVal(G)$ both lie in $[-W,W]$.
	At $v_I'$ \eve has a choice: she can choose to remain in the gadget or
	she can move to the original game $G$.  If she chooses to remain in the
	gadget, her payoff will be $-3W-2$, meanwhile \adam could choose a
	strategy that would have achieved a payoff of $\cVal(G)$ if she had
	chosen to play to $G$.  Hence her regret in this case is $\cVal(G)+3W+2
	\geq 2W+2$.  Otherwise, if she chooses to play to $G$ she can achieve a
	payoff of at most $\aVal(G)$.  As $\cVal(G) \leq W$ is the maximum
	possible payoff achievable in $G$, the strategy which now maximizes
	\eve's regret is the one which remains in the gadget -- giving a payoff
	of $W+1$. Her regret in this case is $W+1 - \aVal(G) \leq 2W + 1$.
	Therefore, to minimize her regret she will play this strategy, and
	$\Regret{G'}{} = W+1-\aVal(G)$.
\end{proof}

\subsection{Memory requirements for \eve and \adam} It follows from the
reductions underlying the proof of Lemma~\ref{lem:regToAll} that \eve only
requires positional strategies to minimize regret when there is no restriction
on \adam's strategies. On the other hand, for any given strategy $\sigma$ for
\eve, the strategy $\tau$ for \adam which witnesses the maximal regret against it
consists of a combination of three positional strategies: first he moves to the
optimal vertex for deviating (it is from this vertex that the alternative
strategy $\sigma'$ of \eve will achieve a better payoff against $\tau$), then he plays his
optimal (positional) strategy in the antagonistic game (i.e. against $\sigma$).
His strategy for the alternative scenario, i.e. against $\sigma'$, is his
optimal strategy in the co-operative game which is also positional. This
combined strategy is clearly realizable as a strategy with three memory states,
giving us:
\begin{corollary}
	For payoff functions $\linffun$, $\lsupfun$, $\underline{\mpfun}$ and
	$\overline{\mpfun}$: \( \Regret{G}{} =
	\Regret{G}{\StrPosE,\Sigma^3_\forall}.\) 
\end{corollary}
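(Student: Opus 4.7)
The plan is to verify both inequalities. The easy inclusion $\Regret{G}{\StrPosE,\Sigma^3_\forall} \le \Regret{G}{}$ follows once we exhibit a positional regret-minimiser for \eve; the reverse direction requires showing that, against any positional $\sigma$, the adversary's regret-maximising response can already be realised with three memory states.

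For the first part, Lemma~\ref{lem:regToAll} gives $\Regret{G}{} = -\aVal(\hat{G})$, with $\hat{G}$ carrying a game of the same payoff function. For $\linffun,\lsupfun,\underline{\mpfun},\overline{\mpfun}$, both players have positional optimal strategies in $\hat{G}$, so pick an optimal positional $\hat{\sigma}^*$ for \eve. Through the bijection made explicit in the proof of that lemma -- \eve picks a copy $G^b$ at $v_1$ and then plays positionally in it -- $\hat{\sigma}^*$ corresponds to a positional $\sigma^* \in \StrPosE(G)$ with $\regret{\sigma^*}{G}{} = \Regret{G}{}$.

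For the second part, I claim $\regret{\sigma}{G}{} = \regret{\sigma}{G}{\StrPosE,\Sigma^3_\forall}$ for every $\sigma \in \StrPosE$; $(\ge)$ is immediate by restriction, so I focus on $(\le)$. Take any pair $(\tau,\sigma') \in \StrAllA \times \StrAllE$ that nearly realises $\regret{\sigma}{G}{}$; I produce $(\tau',\sigma'') \in \Sigma^3_\forall \times \StrPosE$ with $\StratVal{}{}{\sigma''}{\tau'} - \StratVal{}{}{\sigma}{\tau'} \ge \StratVal{}{}{\sigma'}{\tau} - \StratVal{}{}{\sigma}{\tau}$. Fixing $\tau$ reduces the inner supremum to a one-player maximisation for \eve, which is positional-optimal for our payoff functions, so choose $\sigma'' \in \StrPosE$ with $\StratVal{}{}{\sigma''}{\tau} \ge \StratVal{}{}{\sigma'}{\tau}$. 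Let $u^*$ be the first \eve-vertex visited on $\out{}{\sigma''}{\tau}$ where $\sigma(u^*) \neq \sigma''(u^*)$ (if no such vertex exists, the regret contribution is at most $0$ and matched trivially). Because $\sigma,\sigma''$ are positional, the plays $\out{}{\sigma}{\tau}$ and $\out{}{\sigma''}{\tau}$ agree up to $u^*$ and evolve independently thereafter. Now assemble $\tau'$ out of three positional pieces in $G$: $\tau_0$ reaching $u^*$ against fixed $\sigma$ (a one-player reachability question, solvable positionally because the original $\tau$ witnesses reachability of $u^*$); $\tau_a$ minimising the payoff against fixed positional $\sigma$ from $\sigma(u^*)$; and $\tau_c$ maximising the payoff against fixed positional $\sigma''$ from $\sigma''(u^*)$. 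Both $\tau_a$ and $\tau_c$ exist as positional strategies because each is a one-player optimisation for one of our payoff functions. Compose them into a three-state memory machine with modes ``pre-$u^*$'', ``post-$u^*$ on the $\sigma$-branch'', ``post-$u^*$ on the $\sigma''$-branch'', transitioning deterministically when $u^*$ is first entered based on \eve's outgoing move. By prefix-independence, the payoff of each branch is determined by its suffix after $u^*$, yielding $\StratVal{}{}{\sigma}{\tau'} \le \StratVal{}{}{\sigma}{\tau}$ and $\StratVal{}{}{\sigma''}{\tau'} \ge \StratVal{}{}{\sigma''}{\tau}$, so the regret-gap only grows.

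Combining, $\Regret{G}{\StrPosE,\Sigma^3_\forall} = \inf_{\sigma \in \StrPosE} \regret{\sigma}{G}{\StrPosE,\Sigma^3_\forall} = \inf_{\sigma \in \StrPosE} \regret{\sigma}{G}{} = \regret{\sigma^*}{G}{} = \Regret{G}{}$. The main obstacle is engineering the memory-$3$ machine: one must verify the three positional pieces glue coherently -- in particular that the ``pre-$u^*$'' mode actually delivers the play to $u^*$ against fixed $\sigma$ and that the ``post-$u^*$'' modes are not triggered prematurely -- and that prefix-independence of $\linffun,\lsupfun,\underline{\mpfun},\overline{\mpfun}$ allows the finite detour before $u^*$ to be ignored when comparing payoffs; this final observation is precisely why the corollary excludes $\inffun$ and $\supfun$.
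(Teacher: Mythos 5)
Your overall architecture is the same as the paper's: \eve's positional sufficiency is extracted from the $\hat{G}$ reduction of Lemma~\ref{lem:regToAll}, and \adam's witness is glued from three positional pieces (reach the deviation vertex, then antagonistic on the actual branch, cooperative on the alternative branch) driven by a three-state memory whose update fires at $u^*$. That matches the paper's (very terse) argument, and your first part and your treatment of $\StratVal{}{}{\sigma}{\tau'} \le \StratVal{}{}{\sigma}{\tau}$ and of the mode-switch at $u^*$ are fine.

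There is, however, one step that is false as stated and on which your chain of inequalities genuinely depends: the claim that ``fixing $\tau$ reduces the inner supremum to a one-player maximisation for \eve, which is positional-optimal.'' Positional optimality of one-player quantitative games applies to finite arenas; fixing an arbitrary (possibly infinite-memory) $\tau$ yields a one-player game on the infinite unravelling of $G$, and a positional strategy \emph{of $G$} need not be a best response there. (Take an \eve vertex alternating between two \adam gadgets, with $\tau$ awarding weight $1$ only when \eve's choice sequence follows a fixed aperiodic pattern: every positional $\sigma''$ gets mean-payoff $0$ while some infinite-memory $\sigma'$ gets $\tfrac12$.) Since your final comparison uses $\StratVal{}{}{\sigma''}{\tau} \ge \StratVal{}{}{\sigma'}{\tau}$, the argument breaks here. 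The repair is the one implicit in the paper's sketch: do not best-respond to $\tau$ at all, but bound the alternative branch directly by the cooperative value. Let $u^*$ be the first \eve vertex at which $\out{}{\sigma'}{\tau}$ leaves $\out{}{\sigma}{\tau}$, say towards $v'$; prefix-independence gives $\StratVal{}{}{\sigma'}{\tau} \le \cVal^{v'}$, and $\cVal^{v'}$ is attained by a \emph{pair} of positional strategies since the cooperative game is a one-player maximisation on the finite arena. Your $\tau_c$ should be \adam's half of that pair, and $\sigma''$ should be assembled from $\sigma$ on (a simple path to) $u^*$, the deviation to $v'$, and \eve's half of the cooperative pair afterwards. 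This still leaves the consistency check you partially acknowledge at the end -- the cooperative positional witness may prescribe, at a vertex reused from the prefix or at $u^*$ itself, a move conflicting with $\sigma$ -- which needs an explicit resolution (e.g.\ re-choosing the deviation point along the cooperative lasso); the paper glosses over this too, but if you are writing the proof out in this level of detail you should close it.
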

The algorithm we give relies on the prefix-independence of the payoff function.
As the transformation from $\inffun$ and $\supfun$ to equivalent
prefix-independent ones is polynomial it follows that polynomial memory (w.r.t.
the size of the underlying graph of the arena) suffices for both players.

\section{Variant II: \adam plays memoryless strategies}
\label{sec:pos-strats}
For this variant, we provide a polynomial space algorithm to
solve the problem for all the payoff functions, we then provide lower bounds.
\begin{theorem}\label{thm:memlessadversary}
	Deciding whether the regret value is less than a given threshold (strictly or
	non-strictly) playing against memoryless strategies of \adam is
	\PSPACE-complete for $\linffun$, $\underline{\mpfun}$ and
	$\overline{\mpfun}$; in \PSPACE~and \coNP-hard for $\inffun$, $\supfun$
	and $\lsupfun$.
\end{theorem}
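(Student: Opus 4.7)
My plan is to derive the \PSPACE{} upper bound uniformly for all six payoff functions via the equivalence \AP{}~$=$~\PSPACE. The starting point is the reformulation
\[
\Regret{G}{\StrAllE,\StrPosA} \leq r \iff \exists \sigma \in \StrAllE\,\forall \tau \in \StrPosA:\ \StratVal{}{}{\sigma}{\tau} \geq \cVal(G_\tau) - r,
\]
where $G_\tau$ denotes the one-player arena obtained by restricting \adam to $\tau$, and $\cVal(G_\tau)$ is computable in polynomial time for every payoff considered (per the remark in Section~\ref{sec:prelim}). I would design an alternating polynomial-time procedure that simulates \eve's play existentially while \adam's memoryless strategy is revealed incrementally and universally: a fresh \adam-vertex visit triggers a universal commitment to a successor, and every later visit must comply with the current commitment table, which fits in polynomial space. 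Once the simulated play enters a cycle in the sub-arena determined by the commitments, the tail payoff and the threshold $\cVal(G_\tau) - r$ are both determined in polynomial time, and acceptance reduces to comparing them. For $\inffun$ and $\supfun$, the simulation is performed on $G_{\min}$ or $G_{\max}$ (Lemma~\ref{lem:prefindepping}) so that the value of the tail depends only on the last commitment epoch.

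\textbf{Lower bounds.} For \PSPACE-hardness when the payoff is prefix-independent ($\linffun$, $\underline{\mpfun}$, $\overline{\mpfun}$), the plan is a reduction from QBF. Variable-vertices are alternately owned by \adam (universal variables) and \eve (existential variables); \adam's memoryless commitment at a universal vertex encodes his truth value, while \eve's adaptive choices at existential vertices may depend on the \adam-commitments already observed. A clause-evaluation gadget, traversed after every variable has been set, routes the play to tail loops whose $\linffun$ or mean-payoff value encodes satisfaction. A careful choice of weights and threshold makes the regret drop below the threshold iff the QBF is true: \eve's best deviation reveals a mismatch exactly when no satisfying response to \adam's assignment exists. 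The \coNP-hardness for $\inffun$, $\supfun$, $\lsupfun$ is already handled by Lemma~\ref{lem:coNP-hardness}, whose reduction from UNSAT encodes a truth assignment as \adam's memoryless strategy and uses a deviating strategy for \eve to witness satisfaction.

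\textbf{Main obstacle.} The trickiest step is the mean-payoff upper bound: unlike $\linffun$ and its relatives, mean-payoff is sensitive to the entire asymptotic average and is not determined by any finite prefix. The fix I would pursue is that, once all \adam-commitments relevant to the current play have been set and \eve has entered a strongly connected component of $G_\tau$, both \eve's actual play and the witness cooperative play can be analyzed by polynomial-time one-player mean-payoff arguments; the alternating simulation then needs only run until all relevant commitments are fixed and a cycle is closed, which happens in polynomially many steps in $|G|$. Formalizing this depth bound---while keeping the commitment table in polynomial space and performing the final numerical comparison without blow-up---is the main technical challenge.
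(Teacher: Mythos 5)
Your upper-bound plan is, in essence, the paper's own: the ``commitment table'' you maintain is exactly the second component of the vertices of the arena $\hat{G}$ built in Section~\ref{sec:pos-strats} (vertices $(v,D)$ with $D\subseteq E$ the edges still consistent with \adam's revealed memoryless choices), and your benchmark $\cVal(G_\tau)-r$ corresponds to the paper's weight shift $\hat{w}=w-\cVal(G\restriction D)$ together with Lemma~\ref{lem:reg-val-hat}, $\Regret{G}{\StrAllE,\StrPosA}=-\aVal(\hat{G})$. One point you should make explicit: since $\tau$ is only partially revealed along the play, the correct benchmark is the cooperative value of the sub-arena in which \emph{unvisited} \adam-vertices keep all their successors (\ie $\cVal(G\restriction\lim\proj{\hat{\pi}}{2})$); this is what makes the ``incremental revelation'' sound, and it is implicit but not argued in your sketch.

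The genuine gap is the one you yourself flag and do not close: why is it sound to stop the alternating simulation at the first closed cycle and evaluate that cycle, for $\underline{\mpfun}$ and $\overline{\mpfun}$ (and for $\linffun$)? After a cycle closes, the players are not obliged to repeat it, so ``the tail payoff is determined'' does not follow from entering an SCC; the game on $\hat{G}$ remains a genuine two-player mean-payoff game. The paper closes this with the Ehrenfeucht--Mycielski theorem that a mean-payoff game has the same value as its finite \emph{cycle-forming game} (and the analogue for $\linffun$/$\lsupfun$ from~\cite{ar14}), and then bounds the running time of the alternating machine by the longest simple path in $\hat{G}$, which is at most $|V|(|E|+1)$ because the commitment set $D$ can only shrink, and does so at most $|E|$ times. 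Without invoking (or reproving) this equivalence, your depth bound and your ``final numerical comparison'' are unjustified, so the \PSPACE{} membership for mean-payoff is not yet established. Two smaller issues: your description of Lemma~\ref{lem:coNP-hardness} is wrong --- it reduces from the complement of the \textsc{$2$-disjoint-paths} problem, not from UNSAT; and your QBF gadget sketch omits the mechanism the paper actually relies on, namely that \eve \emph{visits} every clause gadget made true so as to force \adam's memoryless commitment there and thereby cap the cooperative value of every alternative strategy --- a single ``clause-evaluation gadget traversed at the end'' does not control the $\sup_{\sigma'}$ term in the regret.
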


\subsection{Upper bounds}
Let us now show how to compute regret against positional adversaries.

\begin{lemma}
	\label{lem:pspace-memless-regret}
	For payoff functions $\inffun$, $\supfun$, $\linffun$, $\lsupfun$,
	$\underline{\mpfun}$ and $\overline{\mpfun}$, the regret of a game
	played against a positional adversary can be computed in polynomial
	space.
\end{lemma}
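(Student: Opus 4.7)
The plan is to reduce computing $\Regret{G}{\StrAllE, \StrPosA}$ to solving a \emph{succinct} game---that is, a game on an exponentially large arena whose individual states and transitions admit polynomial-bit description---and then invoke $\AP = \PSPACE$.

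For every memoryless $\tau \in \StrPosA$, fixing $\tau$ in $G$ leaves a one-player optimisation problem for \eve whose value $f(\tau) := \sup_{\sigma' \in \StrAllE} \StratVal{}{}{\sigma'}{\tau}$ is computable in polynomial time for each of the six payoff functions (longest/shortest path or maximum mean-cycle computation). Both $\tau$ and $f(\tau)$ are polynomial-bit objects. I would then build a \emph{knowledge arena} $\tilde G$ whose vertices are pairs $(v, \phi)$ with $\phi \colon V \setminus V_\exists \rightharpoonup V$ a partial function recording the successors chosen by \adam at the \adam vertices already visited along the play. Ownership and weights are inherited from $G$; at an \adam vertex $(v, \phi)$, the play advances deterministically to $(\phi(v), \phi)$ if $\phi(v)$ is defined, and otherwise \adam picks a successor $v'$ and $\phi$ is extended by $v \mapsto v'$. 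A play of $\tilde G$ projects to a play $\pi$ of $G$ together with a stabilised partial function $\phi_\infty$; its regret is $\max_{\tau \supseteq \phi_\infty} f(\tau) - \Val(\pi)$, since \adam may inflate the regret by freely completing his memoryless strategy at unvisited vertices. Although $|\tilde G|$ is exponential, each vertex has polynomial-bit description.

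Deciding $\Regret{G}{\StrAllE, \StrPosA} \le r$ then becomes a game on $\tilde G$ with objective ``$\Val(\pi) + r \ge f(\tau)$ for every completion $\tau$ of $\phi_\infty$''. An alternating polynomial-time procedure decides this: simulate $\tilde G$ one step at a time with polynomial-bit state, existentially guess \eve's moves and universally guess \adam's, then universally guess a completion $\tau$, compute $f(\tau)$ in polynomial time, and verify the payoff inequality. Since $\AP = \PSPACE$ and the set of candidate regret values has polynomial bit-length, a binary search then yields the exact regret within the same space bound. For $\inffun$ and $\supfun$, Lemma~\ref{lem:prefindepping} is applied as preprocessing.

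The main obstacle is certifying the payoff condition along an infinite simulated play for tail-dependent measures ($\linffun$, $\lsupfun$, $\underline{\mpfun}$, $\overline{\mpfun}$), whose value depends on the limit behaviour rather than any finite prefix. I plan to address this by combining the polynomial-bit $\tau$ with a positional optimal response of \eve in the subgraph of $G$ fixed by $\tau$: once both are fixed, the play becomes deterministic and its infinite tail admits a polynomial-size cycle certificate whose value can be computed and compared to $f(\tau) - r$ in polynomial time, so the alternating procedure closes within polynomial space.
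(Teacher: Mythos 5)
Your knowledge arena is essentially the paper's construction $\hat G$: the partial function $\phi$ carries exactly the same information as the paper's monotonically shrinking edge set $D$, and your identity $\max_{\tau \supseteq \phi_\infty} f(\tau) = \cVal(G \restriction \lim D)$ is precisely how the paper folds the benchmark into the arena (it sets $\hat w\big((u,C),(v,D)\big) = w(u,v) - \cVal(G \restriction D)$ and proves $\Regret{G}{\StrAllE,\StrPosA} = -\aVal(\hat G)$). Up to that point the two arguments coincide.

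The gap is in how you close the alternating procedure for the prefix-independent payoffs, and it is twofold. First, the certificate you propose does not certify the right quantity: fixing a completion $\tau$ together with a positional optimal response of \eve in the subgraph determined by $\tau$ produces a lasso whose value is $f(\tau)$ --- the benchmark --- not $\Val(\pi)$, the value of the play that \eve's regret-minimizing strategy actually produces against \adam. That strategy may genuinely require exponential memory (the paper records this as a corollary of this very construction), so it is not a polynomial-bit object you can guess and then run to a cycle; the play $\pi$ must be produced interactively by the alternating machine, which must nevertheless halt. Second --- and this is the missing idea --- halting after polynomially many rounds is not automatic. You need (i) that a vertex of the knowledge arena repeats within polynomially many steps, which holds because $\phi$ can be strictly extended at most $|V \setminus V_\exists|$ times (equivalently, the chain of edge sets strictly decreases at most $|E|$ times), giving the bound $|V|(|E|+1)$ on the longest simple path; and (ii) that stopping the game at the first repeated vertex and evaluating the cycle so formed yields the same value as the infinite game. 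Point (ii) is exactly the Ehrenfeucht--Mycielski ``finite cycle-forming game'' theorem for mean-payoff, together with its analogue for $\linffun$ and $\lsupfun$, which the paper invokes explicitly; it is a genuine theorem, false for arbitrary payoff functions, and your sketch neither states nor establishes it. With those two ingredients supplied, your argument becomes the paper's.
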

Given a weighted arena $G$, we construct
a new weighted arena $\widehat{G}$ such that we have 
that $-\aVal(\widehat{G})$ is equivalent to the regret of $G$. 

\begin{figure}
\begin{center}
\begin{tikzpicture}
\node[ve,initial](A){{$u$}};
\node[ve,below=of A](B){{$v$}};
\node[va,right=of B](C){{$x$}};

\path
(A) edge[loop above] node[el,swap]{$1$} (A)
(A) edge[bend left] node [el,swap]{$0$} (B)
(B) edge[bend left] node [el]{$0$} (A)
(B) edge[bend left] node [el,swap]{$0$} (C)
(C) edge[bend left] node [el]{$0$} (B)
(C) edge node [el,swap]{$6$} (A);

\end{tikzpicture}
\caption{Example weighted arena $G_1$.}\label{fig:she-needs-mem}
\end{center}
\end{figure}

The vertices of $\widehat{G}$ encode the choices made by \adam. For a
subset of edges $D \subseteq E$, let $G \restriction D$ denote the
weighted arena $(V, V_\exists, D, w, v_I)$.  The new weighted
arena $\widehat{G}$ is the tuple $(\widehat{V}, \widehat{V_\exists}, \widehat{E},
\widehat{w}, \widehat{v_I})$ where
\begin{inparaenum}[$(i)$]
	\item $\widehat{V} = V \times \pow(E)$;
	\item $\widehat{V_\exists} = \{(v,D) \in \widehat{V} \st v \in V_\exists \}$;
	\item $\widehat{v_I} = (v_I, E)$;
	\item $\widehat{E}$ contains the edge $\big( (u,C) , (v,D) \big)$ if
		and only if $(u,v) \in C$ and, either $u \in \VtcE$ and
		$D = C$, or $u \in \VtcA$ and $D = C \setminus \{ (u, x)
		\in E \st x \neq v\}$;
	\item $\widehat{w}\big( (u,C), (v,D) \big) = w(u,v) - \cVal(G
		\restriction D)$.
\end{inparaenum}
The application of this transformation for the graph of
Fig.~\ref{fig:she-needs-mem} w.r.t. to the $\underline{\mpfun}$ payoff
function is given in Fig.~\ref{fig:eg-posalgo}.

\begin{figure}
\begin{center}
\begin{tikzpicture}
\node[ve,initial](A0) at (0,4) {{$u, \{xv,xu\}$}};
\node[ve](B0) at (0,2) {{$v, \{xv,xu\}$}};
\node[ellipse,draw](C0) at (0,0) {{$x, \{xv,xu\}$}};
\node[ve](A1) at (4,4) {{$u, \{xu\}$}};
\node[ve](B1) at (7,4) {{$v, \{xu\}$}};
\node[ellipse,draw](C1) at (7,2) {{$x, \{xu\}$}};
\node[ve](A2) at (7,0) {{$u, \{xv\}$}};
\node[ve](B2) at (4,0) {{$v, \{xv\}$}};
\node[ellipse,draw](C2) at (4,2) {{$x, \{xv\}$}};

\path
(A0) edge[loop above] node[el]{$-1$} (A0)
(A0) edge[bend left] node [el]{$-2$} (B0)
(B0) edge[bend left] node [el]{$-2$} (A0)
(B0) edge node [el]{$-2$} (C0)
(C0) edge node [el]{$4$} (A1)
(C0) edge node [el,swap]{$-1$} (B2)
(A1) edge[loop above] node [el]{$-1$} (A1)
(B1) edge node [el]{$-2$} (C1)
(B1) edge[bend left] node [el,swap]{$-2$} (A1)
(A1) edge[bend left] node [el]{$-2$} (B1)
(C1) edge node [el]{$4$} (A1)
(A2) edge[bend left] node [el]{$-1$} (B2)
(B2) edge[bend left] node [el,swap]{$-1$} (A2)
(C2) edge[bend left,pos=0.4] node [el]{$-1$} (B2)
(B2) edge[bend left,pos=0.6] node [el]{$-1$} (C2)
(A2) edge[loop above] node[el]{$0$} (A2)
;
\end{tikzpicture}
\caption{Weighted arena $\widehat{G_1}$, constructed from $G_1$ w.r.t the
$\underline{\mpfun}$ payoff function. In the edge set
component only edges leaving \adam nodes are depicted.}
\label{fig:eg-posalgo}
\end{center}
\end{figure}

Consider a play $\widehat{\pi} = (v_0,C_0) (v_1, C_1) \ldots$ in $\widehat{G}$. We
denote by $\proj{\widehat{\pi}}{k}$, for $k \in \{1,2\}$, the sequence $\langle
c_{k,i} \rangle_{i \ge 0}$, where $c_{k,i}$ is the $k$-th component of the
$i$-th pair from $\widehat{\pi}$. Observe that $\proj{\widehat{\pi}}{1}$ is a valid play in
$G$. Also observe that $E \supseteq C_j \supseteq C_{j+1}$ for all $j$.  Hence
$\proj{\widehat{\pi}}{2}$ is an infinite descending chain of finite subsets, and
therefore $\lim\proj{\widehat{\pi}}{2}$ is well-defined.  Finally, we define
\(
	\cost{\widehat{\pi}} := \cVal(G \restriction \lim \proj{\widehat{\pi}}{2}).
\)
The following result relates the value of a play in $\widehat{G}$ to the value of
the corresponding play in $G$.

\begin{lemma}
	\label{lem:rel-hat-plays}
	For payoff functions
	$\linffun,\lsupfun,\underline{\mpfun},\overline{\mpfun}$ and for any
	play $\widehat{\pi}$ in $\widehat{G}$ we have that
	\( 
		\PlayVal{\widehat{\pi}} = \PlayVal{\proj{\widehat{\pi}}{1}} -
			\cost{\widehat{\pi}}.
	\)
\end{lemma}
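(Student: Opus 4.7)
The plan is to exploit the fact that the second components of the vertices in $\hat{\pi}$ form a descending chain of finite subsets of $E$, which must therefore stabilize. First I would write $\hat{\pi} = (v_0,C_0)(v_1,C_1)\ldots$ and observe that $C_0 \supseteq C_1 \supseteq \ldots$, so there exists an index $K$ such that $C_j = D^*$ for all $j \geq K$, where $D^* = \lim \proj{\hat{\pi}}{2}$. In particular, $\cost{\hat{\pi}} = \cVal(G \restriction D^*)$ is well-defined and equals $\cVal(G \restriction C_{j+1})$ for every $j \geq K$.

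Next, I would examine the edge weights along the tail of $\hat{\pi}$. By definition, for $j \geq K$ we have
\[
\hat{w}\big((v_j,C_j),(v_{j+1},C_{j+1})\big) = w(v_j,v_{j+1}) - \cVal(G \restriction C_{j+1}) = w(v_j,v_{j+1}) - \cost{\hat{\pi}}.
\]
So the sequence of weights seen along $\hat{\pi}$ coincides, from index $K$ onwards, with the sequence of weights seen along $\proj{\hat{\pi}}{1}$ shifted by the constant $-\cost{\hat{\pi}}$.

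The conclusion then follows from two elementary facts applied to each of the four payoff functions. First, each of $\linffun$, $\lsupfun$, $\underline{\mpfun}$, $\overline{\mpfun}$ is prefix-independent, so the finite prefix of length $K$ where the edge weights may differ does not affect the value of either play. Second, each of these functions is \emph{translation equivariant}: subtracting a constant $c$ from every term of a bounded sequence $(x_i)$ gives $\liminf_i (x_i - c) = \liminf_i x_i - c$ and analogously for $\limsup$; and for the mean-payoff, $\liminf_k \tfrac{1}{k}\sum_{i<k}(x_i - c) = \liminf_k \tfrac{1}{k}\sum_{i<k} x_i - c$ since the subtracted Ces\`aro averages of the constant sequence converge to $c$ (and likewise for $\limsup$). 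Combining these two facts yields $\PlayVal{\hat{\pi}} = \PlayVal{\proj{\hat{\pi}}{1}} - \cost{\hat{\pi}}$.

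The only mildly delicate point is the mean-payoff case, where one has to check that the finite initial discrepancy of length $K$, together with the edge corrections summing to $-(k-K)\cdot\cost{\hat{\pi}}$ over a window of length $k-K$, indeed contributes exactly $-\cost{\hat{\pi}}$ in the limit rather than some off-by-one artifact. This is routine once one writes it out explicitly, using that weights are bounded so the initial contribution is $O(1/k)$ and that $(k-K)/k \to 1$. No further subtlety arises, and the argument is uniform across the four prefix-independent payoff functions considered.
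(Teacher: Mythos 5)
Your proof is correct and follows essentially the same route as the paper's: both arguments rest on the stabilization of the descending chain $\proj{\hat{\pi}}{2}$, which makes the correction term $\cVal(G \restriction C_{j+1})$ eventually equal to the constant $\cost{\hat{\pi}}$. The paper packages the final step as convergence of the (Ces\`aro) averages of that correction sequence and splits the $\liminf$/$\limsup$ of a difference whose subtrahend converges, rather than invoking prefix-independence plus translation equivariance as you do, but this is only a reorganization of the same computation.
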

\begin{proof}
	We first establish the following intermediate result.  It follows from the
	existence of $\lim\proj{\widehat{\pi}}{2}$ and the definition of $\cost{\cdot}$
	that:
	\begin{equation}
		\label{eqn:cost}
		\limsup_{n \to \infty} \frac{1}{n} \sum_{i = 0}^{n-1}
		\cVal(G \restriction C_i) =
		\liminf_{n \to \infty} \frac{1}{n} \sum_{i = 0}^{n-1}
		\cVal(G \restriction C_i) =
		\cost{\widehat{\pi}}.
	\end{equation}
	We now show that the result holds for $\underline{\mpfun}$.
	\begin{align*}
		\PlayVal{\widehat{\pi}} &= \liminf_{n \to \infty}
		\left(\frac{1}{n} \sum_{i=0}^{n-1} \left(w(v_i, v_{i+1}) -
		\cVal(G\restriction C_j)\right) \right)
		& \mbox{defs. of } \PlayVal{\cdot},\widehat{w} \\
		&= \PlayVal{\proj{\widehat{\pi}}{1}} - \limsup_{n \to \infty}
		\frac{1}{n} \sum_{j=0}^{n-1} \cVal(G\restriction C_j) &
		\mbox{def. of } \PlayVal{\cdot} \\
		&= \PlayVal{\proj{\widehat{\pi}}{1}} - \cost{\widehat{\pi}} &
		\mbox{from Eq. (\ref{eqn:cost})}
	\end{align*}
		
	The proofs for the other payoff functions are almost identical (for
	$\linffun$ and $\lsupfun$ replace the use of Equation~\eqref{eqn:cost} by
	Equation~\eqref{eqn:cost2}).
	\begin{equation}
		\label{eqn:cost2}
		\limsup_{i \to \infty} \cVal(G \restriction C_i) =
		\liminf_{i \to \infty} \cVal(G \restriction C_i) =
		\cost{\widehat{\pi}}.
	\end{equation}
\end{proof}

We now describe how to translate winning strategies for either player from
$\widehat{G}$ back to $G$, i.e. given an optimal maximizing (minimizing)
strategy for \eve (\adam) in $\widehat{G}$ we construct the corresponding
optimal regret minimizing strategy (memoryless regret maximizing
counter-strategy) for \eve (\adam) in $G$. For clarity, we follow this
same naming convention throughout this section: again, we say a strategy is an
optimal maximizing (minimizing) strategy when we speak about antagonistic
and cooperative games, we say a strategy is an optimal regret maximizing 
(regret minimizing) when we speak about regret games. When this does not suffice,
we explicitly state which kind of game we are speaking about.

Let $\widehat{\epsilon} \in \StrAllE(\widehat{G})$ be an optimal maximizing strategy of
\eve in $\widehat{G}$ and $\widehat{\alpha} \in \StrAllA(\widehat{G})$ be an optimal
minimizing strategy of \adam.  Indeed, in~\cite{em79} it was shown that
mean-payoff games are positionally determined. We will now define a strategy for
\eve in $G$ which for every play prefix $s$ constructs a valid play prefix
$\widehat{s}$ in $\widehat{G}$ and plays as $\widehat{\epsilon}$ would in $\widehat{G}$ for
$\widehat{s}$. More formally, for a play prefix $s$ from $G$, denote by
$\projinv{s}$ the corresponding sequence of vertex and edge-set pairs in
$\widehat{G}$ (indeed, it is the inverse function of $\proj{\cdot}{1}$, which is
easily seen to be bijective).
Define $\sigma \in \StrAllE(G)$
as follows: $\sigma(s) = \proj{\widehat{\epsilon}(\projinv{s})}{1}$ for all play
prefixes $s \in V^* \cdot \VtcE$ in $G$ consistent with a positional strategy of
\adam.

For a fixed strategy of \eve we can translate the optimal minimizing strategy of
\adam in $\widehat{G}$ into an optimal memoryless regret maximizing counter-strategy
of his in $G$. Formally, for an arbitrary strategy $\sigma$ for \eve in $G$,
define $\widehat{\sigma} \in \StrAllE(\widehat{G})$ as follows: $\widehat{\sigma}(\widehat{s})
= \sigma(\proj{\widehat{s}}{1})$ for all $\widehat{s} \in \widehat{V}^* \cdot
\widehat{\VtcE}$.  Let $\tau_\sigma$ be an optimal (positional) maximizing strategy
for \adam in $G \restriction \lim\proj{\out{}{\widehat{\sigma}}{\widehat{\alpha}}}{2}$.

It is not hard to see the described strategy of \eve ensures a
regret value of at most $-\aVal(\widehat{G})$. Slightly less obvious is the fact
that for any strategy of \eve, the counter-strategy $\tau_\sigma$ of \adam is
such that
\(
	\sup_{\sigma' \in \StrAllE}
	\StratVal{G}{}{\sigma'}{\tau_\sigma} -
	\StratVal{G}{}{\sigma}{\tau_\sigma} \ge -\aVal(\widehat{G}).
\)

\begin{lemma}
	\label{lem:reg-val-hat}
	For payoff functions 
	$\linffun$, $\lsupfun$, $\underline{\mpfun}$, and $\overline{\mpfun}$:
	\[
		\Regret{G}{\StrAllE,\StrPosA} = - \aVal(\widehat{G}). 
	\]
\end{lemma}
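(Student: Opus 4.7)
The plan is to establish both $\Regret{G}{\StrAllE,\StrPosA} \le -\aVal(\hat{G})$ and $\Regret{G}{\StrAllE,\StrPosA} \ge -\aVal(\hat{G})$ by transferring plays between $G$ and $\hat{G}$ via Lemma~\ref{lem:rel-hat-plays}. The argument pivots on one structural observation about $\hat{G}$: the very first visit to an $\adam$-vertex $u$ prunes every $u$-outgoing edge except the one just taken. Hence for any play $\hat{\pi}$ and $D := \lim \proj{\hat{\pi}}{2}$, the set $D$ contains exactly one outgoing edge at each $u \in \VtcA$ visited by $\proj{\hat{\pi}}{1}$ (namely the one $\proj{\hat{\pi}}{1}$ uses), and all outgoing edges at each unvisited $u \in \VtcA$; every $\eve$-edge is always kept.

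For the upper bound I fix an optimal positional maximizer $\hat{\epsilon}$ in $\hat{G}$ and take $\sigma$ to be the strategy built from $\hat{\epsilon}$ as described in the paragraph preceding the statement. Given any positional $\tau \in \StrPosA(G)$, let $\pi = \out{}{\sigma}{\tau}$, $\hat{\pi} = \projinv{\pi}$, and let $D_\tau \subseteq E$ consist of all $\eve$-edges together with the edges used by $\tau$. By construction $\hat{\pi}$ is consistent with $\hat{\epsilon}$, so $\PlayVal{\hat{\pi}} \ge \aVal(\hat{G})$, and Lemma~\ref{lem:rel-hat-plays} gives $\PlayVal{\pi} - \cost{\hat{\pi}} \ge \aVal(\hat{G})$. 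Two further facts close this direction: since $\tau$ is positional, $\sup_{\sigma' \in \StrAllE} \StratVal{}{}{\sigma'}{\tau} = \cVal(G \restriction D_\tau)$; and the structural observation yields $\lim \proj{\hat{\pi}}{2} \supseteq D_\tau$, so $\cost{\hat{\pi}} \ge \cVal(G \restriction D_\tau)$ by monotonicity of $\cVal$ in the edge set. Combining, $\cVal(G \restriction D_\tau) - \PlayVal{\pi} \le -\aVal(\hat{G})$ for every positional $\tau$, hence $\regret{\sigma}{G}{\StrAllE,\StrPosA} \le -\aVal(\hat{G})$.

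For the lower bound I fix an arbitrary $\sigma \in \StrAllE(G)$, lift it to $\hat{\sigma}(\hat{s}) := \sigma(\proj{\hat{s}}{1})$, and pick an optimal positional minimizer $\hat{\alpha}$ in $\hat{G}$. Set $\hat{\pi} := \out{}{\hat{\sigma}}{\hat{\alpha}}$, $\pi := \proj{\hat{\pi}}{1}$, $D := \lim \proj{\hat{\pi}}{2}$, and pick a positional pair $(\sigma^*, \tau^*)$ in $G \restriction D$ witnessing $\cVal(G \restriction D)$. Define $\tau_\sigma := \tau^*$, viewed as a positional $\adam$-strategy on all of $G$. By the structural observation, at every $u \in \VtcA$ visited by $\pi$ the set $D$ has a unique $u$-outgoing edge, so $\tau^*$ is forced to take it, and this edge is exactly the one $\pi$ uses. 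Therefore $\pi = \out{}{\sigma}{\tau_\sigma}$ and $\StratVal{}{}{\sigma}{\tau_\sigma} = \PlayVal{\pi}$, while simultaneously $\sup_{\sigma'} \StratVal{}{}{\sigma'}{\tau_\sigma} \ge \StratVal{}{}{\sigma^*}{\tau^*} = \cVal(G \restriction D)$. Using $\PlayVal{\hat{\pi}} \le \aVal(\hat{G})$ and Lemma~\ref{lem:rel-hat-plays} one concludes $\cVal(G \restriction D) - \PlayVal{\pi} \ge -\aVal(\hat{G})$, so $\regret{\sigma}{G}{\StrAllE,\StrPosA} \ge -\aVal(\hat{G})$; taking the infimum over $\sigma$ yields the inequality.

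The main obstacle is the lower bound: producing a single positional $\tau_\sigma$ that simultaneously inflicts value $\PlayVal{\pi}$ on $\sigma$ and leaves available an alternative scoring $\cVal(G \restriction D)$. In general these two requirements would conflict, but the ``one visit prunes all alternatives'' design of $\hat{G}$ is exactly what reconciles them, by forcing $\hat{\alpha}$'s effective commitments along $\pi$ to coincide with the cooperative witness $\tau^*$ on every $\adam$-vertex where they overlap. A secondary routine step is to invoke positional determinacy of $\aVal$ and $\cVal$ under each of $\linffun$, $\lsupfun$, $\underline{\mpfun}$, $\overline{\mpfun}$ so that $\hat{\epsilon}$, $\hat{\alpha}$, $\sigma^*$, $\tau^*$ can be taken positional.
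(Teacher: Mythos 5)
Your proposal is correct and follows essentially the same route as the paper's proof: the same pullback of an optimal maximizing strategy $\hat{\epsilon}$ to define $\sigma$, the same choice of $\tau_\sigma$ as a cooperative-optimal positional strategy in $G \restriction \lim\proj{\out{}{\hat{\sigma}}{\hat{\alpha}}}{2}$, and the same use of Lemma~\ref{lem:rel-hat-plays} to transfer values. Your ``one visit prunes all alternatives'' observation is exactly the content of the paper's Claim~\ref{clm:obsEquiv}; the only (cosmetic) difference is that you argue both inequalities directly where the paper argues by contradiction.
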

\begin{proof}
	The proof is decomposed into two parts. First, we describe a strategy
	$\sigma \in \StrAllE(G)$ which ensures a regret value of at most
	$-\aVal(\widehat{G})$. Second, we show that for any $\sigma \in \StrAllE(G)$
	there is a $\tau \in \StrPosA(G)$ such that
	\[
		\sup_{\sigma' \in \StrAllE} \StratVal{G}{}{\sigma'}{\tau} -
		\StratVal{G}{}{\sigma}{\tau} \ge -\aVal(\widehat{G}).
	\]
	The result follows.

	We have already mentioned earlier that for a play $\widehat{\pi}$ in
	$\widehat{G}$ we have that $\proj{\widehat{\pi}}{1}$ is a play in $G$. Let
	$\mathrm{PPref}(G)$ denote the set of all play prefixes consistent with
	a positional strategy for \adam in $G$.	It is not
	difficult to see that $\proj{\cdot}{1}$ is indeed a bijection between
	plays of $\widehat{G}$ and plays of $G$ consistent with positional
	strategies for \adam.

	It follows from the determinacy of antagonistic games defined by the
	payoff functions considered in this work
	that there are optimal strategies for \eve and \adam that ensure a
	payoff of, respectively, at least and at most a value
	$\aVal(\widehat{G})$ against any strategy of the opposing player. Let
	$\widehat{\epsilon} \in \StrAllE(\widehat{G})$ be an optimal maximizing strategy
	of \eve in $\widehat{G}$ and $\widehat{\alpha} \in \StrAllA(\widehat{G})$ be an
	optimal minimizing strategy of \adam.

	\item \paragraph{(First Part).}
	Define a strategy $\sigma$ from $\StrAllE(G)$ as follows: $\sigma(s) =
	\proj{\widehat{\epsilon}(\projinv{s})}{1}$ for all $s \in \mathrm{PPref}(G) \cdot
	\VtcE$. We claim that
	\[
		\regret{\sigma}{G}{\StrAllE,\StrPosA} \le - \aVal(\widehat{G}).
	\]
	Towards a contradiction, assume there are $\tau \in \StrPosA(G)$ and
	$\sigma' \in \StrAllE(G)$ such that 
	\[
		\StratVal{G}{}{\sigma'}{\tau} - \StratVal{G}{}{\sigma}{\tau} >
		-\aVal(\widehat{G}).
	\]
	Define a strategy $\widehat{\tau} \in \StrAllA(\widehat{G})$ as follows:
	$\widehat{\tau}(\widehat{s}) = \tau(\proj{\widehat{s}}{1})$ for all $\widehat{s} \in
	\widehat{V}^* \cdot \widehat{\VtcA}$. From the definition of $\widehat{\epsilon}$ and
	our assumption we get that
	\begin{equation}
		\label{eqn:eq1}
		\StratVal{G}{}{\sigma'}{\tau} -
		\StratVal{G}{}{\sigma}{\tau} > -\aVal(\widehat{G}) \ge
		-\StratVal{\widehat{G}}{}{\widehat{\epsilon}}{\widehat{\tau}}.
	\end{equation}
	It is straightforward to verify that $\projinv{\out{ }{\sigma}{\tau}} =
	\out{ }{\widehat{\epsilon}}{\widehat{\tau}}$. Therefore, from
	Lemma~\ref{lem:rel-hat-plays}, we have:
	\begin{equation}
		\label{eqn:contra1}
		\PlayVal{\out{ }{\sigma'}{\tau}} > \PlayVal{\out{
		}{\sigma}{\tau}} - \PlayVal{\out{ }{\widehat{\epsilon}}{\widehat{\tau}}} =
		\cVal(G \restriction \lim \proj{\out{
		}{\widehat{\epsilon}}{\widehat{\tau}}}{2}).
	\end{equation}
	At this point we note that, since $\tau$ is a positional strategy, it
	holds that $\StratVal{G}{}{\sigma'}{\tau}$ is at most the highest
	payoff value attainable in $G$ restricted to the edges allowed by
	$\tau$. Formally, if $E_\tau = \{(u,v) \in E \st u \in \VtcA \implies
	v = \tau(u)\}$ then $\PlayVal{\out{ }{\sigma'}{\tau}} \le \cVal(G
	\restriction E_\tau)$. Also, by construction of $\widehat{\tau}$ we get that
	$E_\tau \subseteq \lim \proj{\out{ }{\widehat{\epsilon}}{\widehat{\tau}}}{2}$. It
	should be clear that this implies $\cVal(G \restriction \lim \proj{\out{
	}{\widehat{\epsilon}}{\widehat{\tau}}}{2}) \ge \cVal(G \restriction E_\tau)$. This
	contradicts Equation~\eqref{eqn:contra1}.

	\item \paragraph{(Second Part).}
	For the second part of the proof we require the following result which
	relates positional strategies for \adam in $G$ that agree on certain
	vertices to strategies in sub-graphs defined by plays in $\widehat{G}$.

	\item \begin{claim}\label{clm:obsEquiv}
	Let $\sigma \in \StrAllE(G)$ and $\tau, \tau' \in \StrPosA(G)$.  Then
	$\out{}{\sigma}{\tau} = \out{}{\sigma}{\tau'}$ if and only if $\tau' \in
	\StrPosA(G \restriction \lim \proj{\projinv{\out{}{\tau}{\sigma}}}{2})$
	\end{claim}
	\begin{proof}
		\textit{(only if)} Note that by construction of $\widehat{G}$ we
		have that once \adam chooses an edge $\big((u,C), (v,D)\big)$
		from a vertex $(v, C) \in \widehatVtcA$ then on any subsequent
		visit to a vertex $(u, C') \in \widehatVtcA$ he has no other
		option but to go to $(v, C')$.  That is, his choice is
		restricted to be consistent with the history of the play. For a
		play $\widehat{\pi}$ in $\widehat{G}$, it is clear that the sequence
		$\proj{\widehat{\pi}}{2}$ is the decreasing sequence of sets of
		edges consistent (for \adam) with the history of the play in the
		same manner.  In particular, for any $\tau' \in \StrPosA(G)$ and
		any play $\pi$ in $G$ consistent with $\tau'$ we have that
		$\tau'$ is a valid strategy for \adam in $G \restriction E'$
		where $E' = \lim \proj{\projinv{\pi}}{2}$.  As
		$\out{}{\sigma}{\tau} = \out{}{\sigma}{\tau'}$ is a play
		consistent with $\tau'$, the result follows.

		\textit{(if)} Suppose $\out{}{\sigma}{\tau} \neq
		\out{}{\sigma}{\tau'}$, and let $v$ be the last vertex in their
		common prefix.  As $\sigma$ is common to both plays, we have $v
		\in \VtcA$, and $\tau(v) \neq \tau'(v)$.  In particular,
		$(v,\tau'(v)) \notin \lim
		\proj{\projinv{\out{}{\tau}{\sigma}}}{2}$ so $\tau' \notin
		\StrPosA(G \restriction \lim
		\proj{\projinv{\out{}{\tau}{\sigma}}}{2})$.
	\end{proof}

	For an arbitrary strategy $\sigma$ for \eve in $G$, define $\widehat{\sigma}
	\in \StrAllE(\widehat{G})$ as follows: $\widehat{\sigma}(\widehat{s}\cdot(v,D)) =
	(\sigma(\projinv{\widehat{s} \cdot (v,D)}),D)$ for all $\widehat{s}\cdot(v,D)
	\in \widehat{V}^* \cdot \widehat{\VtcE}$.
	Let $\tau_\sigma$ be an optimal
	(positional) maximizing strategy for \adam in $G \restriction
	\lim\proj{\out{}{\widehat{\sigma}}{\widehat{\alpha}}}{2}$. We claim that for all
	$\sigma \in \StrAllE(G)$ we have that
	\[
		\sup_{\sigma' \in \StrAllE} \StratVal{G}{}
		{\sigma'}{\tau_\sigma} - \StratVal{G}{}{\sigma}{\tau_\sigma} \ge -
		\aVal(\widehat{G}).
	\]
	Towards a contradiction, assume that for some $\sigma \in \StrAllE(G)$
	it is the case that for all $\sigma' \in \StrAllE(G)$ the left hand side
	of the above inequality is strictly smaller than the right hand side. By
	definition of $\widehat{\alpha}$ we then get the following inequality.
	\begin{equation}
		\label{eqn:ineq-contra}
		\sup_{\sigma' \in \StrAllE} \StratVal{G}{}
		{\sigma'}{\tau_\sigma} -
		\StratVal{G}{}{\sigma}{\tau_\sigma} < - \aVal(\widehat{G}) \le -
		\StratVal{\widehat{G}}{}{\widehat{\sigma}}{\widehat{\alpha}}
	\end{equation}
	Using the above Claim it is easy to show that $\proj{\out{}
	{\sigma}{\tau_\sigma}}{1} = \out{}{\widehat{\sigma}}{\widehat{\alpha}}$.
	Hence, by Equation~\eqref{eqn:ineq-contra} and
	Lemma~\ref{lem:rel-hat-plays} we get that:
	\begin{equation}
		\label{eqn:eq2}
		\sup_{\sigma' \in \StrAllE} \PlayVal{\out{
		}{\sigma'}{\tau_\sigma}} < \cVal(G \restriction \lim \proj{\out{
		}{\widehat{\sigma}}{\widehat{\alpha}}}{2})
	\end{equation}
	However, by choice of $\tau_\sigma$, we know that there is a strategy
	$\sigma'' \in \StrAllE(G)$ such that $\PlayVal{\out{
	}{\sigma''}{\tau_\sigma}} = \cVal(G \restriction \lim \proj{\out{
	}{\widehat{\sigma}}{\widehat{\alpha}}}{2})$. This contradicts
	Equation~\eqref{eqn:eq2} and completes the proof of the Theorem.
\end{proof}

If $G$ was constructed from a $\inffun$ or $\supfun$ game $H$, then
one could easily transfer the described strategy of \eve, $\sigma$ into a
strategy for her in $H$ which achieves the same regret. In order to have a
symmetric result we still lack the ability to transfer a strategy of \adam from
$\widehat{G}$ to the original game $H$. Consider a modified construction in which we
additionally keep track of the minimal (resp. maximal) weight seen so far by a
play, just like described in Sect.~\ref{sec:all-strats}. Denote the
corresponding game by $\tilde{G}$. The vertex set $\tilde{V}$ of $\tilde{G}$ is
thus a set of triples of the form $(v,C,x)$ where $x$ is the minimal (resp.
maximal) weight the play has witnessed. We observe that in the proof of the
above result the intuition behind why we can transfer a strategy of \adam from
$\widehat{G}$ back to $G$ as a memoryless strategy, \textbf{although the vertices in
$\widehat{G}$ already encode additional information}, is that once we have fixed a
strategy of \eve in $G$, this gives us enough information about the prefix of
the play before visiting any \adam vertex. In other words, we construct a
strategy of \adam tailored to spoil a specific strategy of \eve, $\sigma$, in
$G$ using the information we gather from $\projinv{\cdot}$ and his optimal
strategy in $\widehat{G}$. These properties still hold in $\tilde{G}$. Thus, we get
the following result.
\begin{lemma}
	For payoff functions 
	$\inffun$, $\supfun$:
	\(
		\Regret{G}{\StrAllE,\StrPosA} = - \aVal(\tilde{G}). 
	\)
\end{lemma}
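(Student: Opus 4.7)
The plan is to follow the proof of Lemma~\ref{lem:reg-val-hat} essentially verbatim, adapting the construction of $\hat{G}$ to also encode a prefix-independent recoding of the $\inffun$ (resp. $\supfun$) payoff, as in Lemma~\ref{lem:prefindepping}. Specifically, I would define $\tilde{G}$ to have vertex set $V \times \pow(E) \times \{w(e) \st e \in E\} \cup \{W\}$ (resp. $\cup\{-W\}$), with initial vertex $(v_I, E, W)$ (resp. $(v_I,E,-W)$), ownership inherited from $V_\exists$, and edges $\big((u,C,x),(v,D,y)\big)$ whenever $(u,v) \in C$, $D$ is obtained from $C$ by pruning $u$'s other outgoing edges when $u \in V\setminus V_\exists$ (and $D=C$ otherwise), and $y = \min\{x,w(u,v)\}$ (resp. $\max$). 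The new edge weight is $y - \cVal(G_{\min} \restriction D)$ (resp. $y - \cVal(G_{\max} \restriction D)$), where the cooperative value is computed in the prefix-independent arenas of Lemma~\ref{lem:prefindepping}, starting from the corresponding recorded-weight vertex.

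First I would establish the analog of Lemma~\ref{lem:rel-hat-plays}: for any play $\tilde{\pi} = (v_0,C_0,x_0)(v_1,C_1,x_1)\dots$ in $\tilde{G}$, the $\linffun$ value of $\tilde\pi$ equals the $\inffun$ value of $\proj{\tilde\pi}{1}$ minus $\cost{\tilde{\pi}}$ (and symmetrically for $\supfun$/$\lsupfun$). This combines two observations: the third coordinate $x_i$ is by construction the running minimum of edge-weights along $\proj{\tilde\pi}{1}$, and hence $\linffun_i x_i = \inffun_i w(v_i,v_{i+1})$, which is exactly the argument behind Lemma~\ref{lem:prefindepping}; and the edge-set coordinate eventually stabilises, so Equation~\eqref{eqn:cost} still yields the desired cost decomposition.

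Next I would replay the two halves of the proof of Lemma~\ref{lem:reg-val-hat}. For the upper bound, take an optimal maximising positional strategy $\tilde\epsilon$ for \eve in $\tilde G$ and define $\sigma(s) = \proj{\tilde\epsilon(\projinv{s})}{1}$ for play prefixes $s$ consistent with a positional strategy of \adam; against any such $\tau$, the lifted play outcome in $\tilde{G}$ has value $\geq \aVal(\tilde{G})$, and the cost-decomposition then bounds \eve's regret by $-\aVal(\tilde{G})$. For the lower bound, given an arbitrary $\sigma \in \StrAllE(G)$, lift it to $\tilde\sigma$ by $\tilde\sigma(\hat s \cdot (v,D,x)) = (\sigma(\projinv{\hat s\cdot(v,D,x)}),D,x)$, let $\tilde\alpha$ be optimal minimising for \adam in $\tilde G$, and take $\tau_\sigma$ to be a positional maximiser for \adam in $G_{\min}\restriction \lim\proj{\out{}{\tilde\sigma}{\tilde\alpha}}{2}$ (translated back to $G$); the analog of Claim~\ref{clm:obsEquiv} guarantees $\proj{\out{}{\sigma}{\tau_\sigma}}{1} = \out{}{\tilde\sigma}{\tilde\alpha}$, and the cost-decomposition forces the regret to be at least $-\aVal(\tilde G)$.

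The only delicate point, and the one I would flag as the main obstacle, is verifying that Claim~\ref{clm:obsEquiv} still goes through: we must check that, once \eve's strategy is fixed, the third coordinate $x$ along any play is a deterministic function of the play prefix in $G$, so that distinct positional \adam-strategies in $\tilde G$ still correspond to distinct positional strategies in $G$, and so that the limit edge-set $\lim\proj{\projinv{\out{}{\sigma}{\tau}}}{2}$ is well-defined and carries the same meaning as in the proof of Lemma~\ref{lem:reg-val-hat}. Since the weight coordinate is a pure bookkeeping addition (it is not a choice point for either player), the restriction/history argument of Claim~\ref{clm:obsEquiv} transfers unchanged, and the result then follows exactly as in Lemma~\ref{lem:reg-val-hat}.
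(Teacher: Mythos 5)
Your proposal is correct and follows essentially the same route as the paper: the paper likewise defines $\tilde{G}$ by superimposing the edge-set--tracking construction of $\hat{G}$ on the running min/max bookkeeping of $G_{\min}$/$G_{\max}$, and justifies the result by observing that the transfer of \adam's optimal strategy (the analogue of Claim~\ref{clm:obsEquiv}) survives because the extra weight coordinate is a deterministic function of the play prefix and is not a choice point for either player. Your write-up is in fact more explicit than the paper's own argument, which only sketches this observation.
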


We recall a result from~\cite{em79} which gives us an algorithm for computing
the value $\Regret{G}{\StrAllE,\StrPosA}$ in polynomial space. In~\cite{em79}
the authors show that the value of a mean-payoff game $G$ is equivalent to the
value of a finite \emph{cycle forming game} $\Gamma_G$ played on $G$. The game
is identical to the mean-payoff game except that it is finite. The game is
stopped as soon as a cycle is formed and the value of the game is given by the
mean-payoff value of the cycle.

\begin{proposition}[Finite Mean-Payoff Game~\cite{em79}]
	The value of a mean-payoff game $G$ is equal to the value of the finite
	cycle forming game $\Gamma_G$ played on the same weighted arena.
\end{proposition}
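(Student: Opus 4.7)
The plan is to exploit the positional determinacy of mean-payoff games. First I would recall the (non-trivial) companion fact, also due to Ehrenfeucht and Mycielski, that in $G$ both players have optimal \emph{positional} strategies securing $\aVal(G)$. Once a positional strategy pair $(\sigma,\tau)$ is fixed, the play from $v_I$ must be a lasso: a finite prefix followed by a simple cycle $C$ repeated forever, since only finitely many vertex configurations can appear before one must repeat. Because the prefix has bounded total weight, its contribution to the running average vanishes in the limit, and the $\mpfun$ of the play equals the mean weight of $C$. This is exactly the quantity read off in $\Gamma_G$ at the moment the cycle closes, so positional play in $G$ and complete play in $\Gamma_G$ agree on values cycle-by-cycle.

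Given this correspondence, I would establish both inequalities. For $\aVal(G) \le \Val(\Gamma_G)$, take an optimal positional $\sigma^\ast$ for \eve in $G$ and have her play it in $\Gamma_G$. Against any strategy of \adam in $\Gamma_G$ the finite play forms some simple cycle $C$; extending that \adam play to a positional strategy in $G$ (choosing arbitrarily on unvisited vertices) yields a lasso whose $\mpfun$ is the mean weight of $C$, which by optimality of $\sigma^\ast$ is at least $\aVal(G)$. Hence the value of $\Gamma_G$ under $\sigma^\ast$ is at least $\aVal(G)$. The symmetric argument with an optimal positional $\tau^\ast$ for \adam in $G$ gives $\Val(\Gamma_G) \le \aVal(G)$.

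The converse direction requires that optimal strategies in $\Gamma_G$ themselves be positional, even though they are formally defined on \emph{cycle-free} play prefixes of a finite game tree. Because $\Gamma_G$ is a finite-depth extensive-form game of perfect information, optimal strategies exist by backward induction. A cut-and-paste argument then shows they may be taken positional: at any vertex $v$ reachable along several distinct cycle-free histories, all but a locally optimal choice can be replaced without changing the global value. The resulting positional strategy extends verbatim to a strategy in $G$ producing the same lasso, and Lemma-style accounting on the prefix contribution shows the two payoffs coincide.

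The main obstacle is precisely this last positional-reduction step in $\Gamma_G$: the stopping condition depends on the history, so swapping the move at $v$ between two branches is delicate, as the new choice might form a cycle earlier than before with a different mean weight, or might avoid a cycle the player was counting on. The clean way around this is to order vertices by the value each player can enforce from them in $\Gamma_G$ (computable by backward induction on the unfolding truncated at first repetition) and, at each vertex, fix the move that realizes this enforced value; one then checks by induction on the distance to cycle closure that this choice cannot worsen the outcome on any branch. Once this is settled the rest of the argument is a routine translation between lassos in $G$ and terminating plays of $\Gamma_G$.
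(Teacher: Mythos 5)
The paper does not prove this proposition at all: it is imported verbatim from Ehrenfeucht--Mycielski, so the only meaningful comparison is with your own internal logic and with the original source. Your first two paragraphs already constitute a complete and correct proof. Playing \eve's optimal positional strategy $\sigma^\ast$ of $G$ inside $\Gamma_G$ is legitimate (a positional strategy needs no history, and every vertex on a play of $\Gamma_G$ occurs at most once before the stopping repeat, so \adam's choices along that play extend to a positional strategy of $G$); the resulting lasso argument gives $\Val(\Gamma_G) \ge \aVal(G)$, and the symmetric argument with $\tau^\ast$ gives $\Val(\Gamma_G) \le \aVal(G)$. Since $\Gamma_G$ is a finite perfect-information game and hence determined, these two inequalities finish the proof. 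Your third and fourth paragraphs chase a ``converse direction'' that does not exist: there is nothing left to show once both inequalities are in hand, and in particular you never need optimal strategies of $\Gamma_G$ to be positional, so the delicate cut-and-paste/ordering argument you worry about can simply be deleted.

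Two caveats are worth making explicit. First, your proof leans on positional determinacy of mean-payoff games, which in~\cite{em79} is \emph{derived from} this very proposition; the original proof goes the other way, establishing the finite-game equivalence first (by a considerably harder induction) precisely in order to obtain positional strategies. Your argument is therefore not self-contained relative to that source, but it is non-circular and perfectly acceptable in the context of this paper, which elsewhere takes positional determinacy as a known black box. Second, when you assert that the cycle closed in $\Gamma_G$ is simple and that the extended positional profile in $G$ reproduces exactly that cycle forever, you should say in one line why: the game stops at the \emph{first} repetition, so all earlier vertices are distinct, and positionality forces the continuation from the repeated vertex to retrace the same simple cycle. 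With that sentence added and the last two paragraphs removed, the proof is clean.
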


As $\linffun$ and
$\lsupfun$ games are also equivalent to their finite cycle forming game
(see~\cite{ar14}) it follows that one can use an Alternating Turing Machine to
compute the value of a game and that said machine will stop in time bounded by
the length of the largest simple cycle in the arena.  We note the length of the
longest simple path in $\widehat{G}$ is bounded by $|V|(|E| + 1)$. Hence, we can
compute the winner of $\widehat{G}$ in alternating polynomial time. Since $\AP =
\PSPACE$, this concludes the proof of
Lemma~\ref{lem:pspace-memless-regret}.

\subsection{Lower bounds}
We give a reduction from the \textsc{QSAT Problem} to the problem of determining
whether, given $r \in \mathbb{Q}$, $\Regret{G}{\StrAllE,\StrPosA} \lhd r$ for
the payoff functions $\linffun$, $\underline{\mpfun}$, and $\overline{\mpfun}$
(for $\lhd \in \{<,\le\}$).  Then we provide a reduction from the
complement of  the \textsc{$2$-disjoint-paths Problem} for $\lsupfun$,
$\supfun$, and $\inffun$.

The crux of the reduction from QSAT is a gadget for each clause of the
QSAT formula. Visiting this gadget allows \eve to gain information
about the highest payoff obtainable in the gadget, each entry point
corresponds to a literal from the clause, and the literal is visited
when it is made true by the valuation of variables chosen by \eve and
\adam in the reduction described below. Figure~\ref{fig:clause-gadget}
depicts an instance of the gadget for a particular clause.
Let us focus on the mean-payoff function.  Note that staying in the
inner $6$-vertex triangle would yield a mean-payoff value of $4$.
However, in order to do so, \adam needs to cooperate with \eve at all
three corner vertices. Also note that if he does cooperate
in at least one of these vertices then
\eve can secure a payoff value of at least $\frac{11}{3}$.

\begin{lemma}
	\label{lem:pspace-hardness}
	Given $r \in \mathbb{Q}$ and a weighted arena $G$ with payoff
	function $\linffun$, $\underline{\mpfun}$, or $\overline{\mpfun}$,
	determining whether $\Regret{G}{\StrAllE,\StrPosA} \lhd r$, for
	$\lhd \in \{<,\le\}$, is \PSPACE-hard.
\end{lemma}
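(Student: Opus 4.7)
The plan is to reduce from \textsc{QSAT}. Given an instance $\Phi = \exists x_1 \forall x_2 \cdots Q_n x_n \bigwedge_{j=1}^m C_j$, I would build in polynomial time a weighted arena $G$ and a rational threshold $r$ such that $\Phi$ holds iff $\Regret{G}{\StrAllE,\StrPosA} \lhd r$. The arena has three phases. The \emph{assignment phase} is a chain of variable vertices, each with two outgoing edges marked $T$ and $F$: an existential vertex belongs to \eve, a universal one to \adam. Since \adam is positional, his choice at each $\forall$-vertex fixes a truth assignment to the universal variables, while \eve, being unrestricted, can adapt the existential assignments to \adam's revealed choices, which matches the QSAT quantifier semantics exactly. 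A \emph{clause-selection} vertex owned by \adam then routes the play into the clause gadget of his chosen $C_j$. In the \emph{verification phase}, \eve picks an entry point of the gadget of Fig.~\ref{fig:clause-gadget}, one per literal of $C_j$, nominating the literal she claims is satisfied under the current assignment.

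The gadget is calibrated so that honest play yields regret zero on each clause. If \eve enters at a literal that is true under the assignment, the three corner vertices of the inner triangle (one per variable appearing in the clause) can all ``cooperate,'' the play stays inside the triangle of mean-payoff $4$, and no alternative strategy of \eve's against the current positional $\tau$ does better. If, by contrast, every literal of the chosen clause is false under the assignment, then \adam's positional counter-strategy forces \eve out of the triangle and drives the value down to at most $\tfrac{11}{3}$, while an alternative strategy $\sigma' \in \StrAllE$ that deviates at some earlier existential vertex to produce a satisfying assignment would have reached payoff $4$. Tuning $r$ to lie strictly between these two levels (e.g.\ $r = \tfrac{1}{3}$ for the mean-payoff variants) makes $\Regret{G}{\StrAllE,\StrPosA} \lhd r$ equivalent to $\Phi$ being true. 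For the $\linffun$ variant I would replace the relevant triangle cycles with self-loops of suitable integer weights so that the $\liminf$ of each infinite suffix matches the targeted cooperative/defection values.

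The main obstacle will be guaranteeing \emph{coherence} of \adam's positional strategy across the assignment phase and the clause gadgets: because different clause gadgets contain separate corner vertices for the same variable $x_i$, a positional \adam could in principle assign $x_i$ inconsistently between gadgets, or inconsistently with his assignment-phase commitment. I would set the weights on each corner's defection edge so that an \adam who defects inconsistently with his own assignment-phase choice strictly \emph{decreases} \eve's regret, making such $\tau$ suboptimal from \adam's perspective; hence every regret-maximising positional $\tau$ honestly realises a single truth assignment. Once this coherence lemma is in place and the alternative strategy $\sigma'$ of the regret definition is identified with the ``corrected'' existential assignment branching off at the first position where it differs from $\sigma$, correctness of the reduction follows, and \PSPACE-hardness for $\linffun$, $\underline{\mpfun}$ and $\overline{\mpfun}$ is obtained uniformly.
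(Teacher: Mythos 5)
Your high-level plan (reduce from QSAT, an assignment spine whose existential vertices belong to \eve and universal ones to \adam, clause gadgets reachable via chosen literals) is the paper's, but your verification phase is architecturally different and, as described, does not work. First, you let \adam pick a single clause and route the play into its gadget, and you let the truth of the entered literal manifest as the corner vertices' willingness to ``cooperate.'' Cooperation at a corner is \adam's choice, not a consequence of the assignment: an adversarial \adam will simply defect whenever that increases regret, irrespective of which literals are true, and nothing in the construction ties his in-gadget behaviour to the assignment phase. Worse, once both the actual play $\sigma$ and the alternative $\sigma'$ are routed by \adam's positional clause choice into the \emph{same} gadget and face the \emph{same} positional $\tau$ at its corners, they have access to the same cycles; $\sigma'$ can reach mean-payoff $4$ only if all three corners cooperate, in which case $\sigma$ can too, so the value gap you need in the ``formula false'' case never materialises. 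Your proposed repair for coherence --- weighting defection edges so that defecting ``inconsistently with the assignment-phase choice'' lowers regret --- cannot be implemented: edge weights are static and cannot refer to choices \adam made elsewhere in the arena.

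The paper's construction needs no such coherence because it uses the gadgets purely as \emph{probes} of a positional adversary, which is the idea missing from your proposal. \eve, following the existential winning strategy, visits \emph{every} clause gadget reachable via a literal she has set true and offers \adam the weight-$4$ edge at the adjacent corner. Either he cooperates --- then she stays and secures $\tfrac{11}{3}$ (resp.\ $\linffun$ value $3$), regret at most $\tfrac13$ --- or he defects, which by positionality certifies that no alternative strategy can extract value $4$ from that gadget against this $\tau$, since a $4$-valued cycle needs all three corners. Having probed all gadgets she retreats to a terminal vertex $\Phi$ with payoff $2$; satisfiability is exactly what guarantees every clause gadget has a reachable true-literal entry, so all alternatives are capped at $\tfrac{11}{3}$ and her regret is $\tfrac53 < 2$. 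If some clause is falsified, its gadget is never probed, \adam cooperates fully there, and an alternative flipping an existential variable (every clause has one, by the paper's normalisation) reaches it for payoff $4$ against \eve's $2$, giving regret $2$. Your design has no analogue of this probe-then-retreat mechanism, and your threshold $r=\tfrac13$ is calibrated to the gap $4-\tfrac{11}{3}$ rather than to the decisive gap $4-2$.
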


The \textsc{QSAT Problem} asks whether a given fully quantified boolean
formula (QBF) is satisfiable. The problem is known to be
\PSPACE-complete~\cite{gj79}. It is known the result holds even if the
formula is assumed to be in conjunctive normal form with three literals
per clause (also known as $3$-CNF). Therefore, w.l.o.g., we consider an instance
of the \textsc{QSAT Problem} to be given in the following form:
\[
	\exists x_0 \forall x_1 \exists x_2 \ldots \Phi(x_0, x_1,
	\ldots, x_n)
\]
where $\Phi$ is in $3$-CNF. Also w.l.o.g., we assume
that every non-trivially true clause has at least one existentially
quantified variable (as otherwise the answer to the problem is trivial).

It is common to consider a QBF as a game between an existential and a
universal player. The existential player chooses a truth value for
existentially quantified variable $x_i$ and the universal player
responds with a truth value for $x_{i+1}$. After $n$ turns the truth
value of $\Phi$ determines the winner: the existential player wins if
$\Phi$ is true and the universal player wins otherwise. The game we
shall construct mimics the choices of the existential and universal
player and makes sure that the regret of the game is small if and only
if $\Phi$ is true.

Let us first consider the strict regret threshold problem. We will construct a
weighted arena $G$ in which \eve wins in the strict regret threshold problem
for threshold $2$ if and only if $\Phi$ is true.
\begin{lemma}
	Given a weighted arena $G$ with payoff function $\linffun$,
	$\underline{\mpfun}$, or $\overline{\mpfun}$, determining whether
	$\Regret{G}{\StrAllE,\StrPosA} < 2$ is \PSPACE-hard.
\end{lemma}

\begin{figure}
\begin{center}
\begin{tikzpicture}[clause/.style={isosceles triangle, shape border
	rotate=90,inner sep=3pt, draw,dotted},node distance=0.8cm]
\node[ve,initial](A){};
\node[ve, right=of A, yshift=-0.5cm](B){$x_0$};
\node[ve, right=of A, yshift=0.5cm](B'){$\overline{x_0}$};
\node[va, right=of B, yshift=0.5cm](C){};
\node[ve, right=of C, yshift=-0.5cm](D){$x_1$};
\node[ve, right=of C, yshift=0.5cm](D'){$\overline{x_1}$};
\node[right=of D, yshift=0.5cm](E){$\cdots$};
\node[ve,right=of E, yshift=-0.5cm](F){$x_n$};
\node[ve,right=of E, yshift=0.5cm](F'){$\overline{x_n}$};
\node[ve,right=of F, yshift=0.5cm](G){$\Phi$};

\node[below=0.5cm of B, clause,xshift=-0.8cm](X){$C_i$};
\node[below=0.5cm of B](Y){$\cdots$};
\node[below=0.5cm of B, clause,xshift= 0.8cm](Z){$C_j$};

\path
(A) edge (B)
(B) edge (C)
(A) edge (B')
(B') edge (C)
(C) edge (D)
(C) edge (D')
(D) edge (E)
(D') edge (E)
(E) edge (F)
(E) edge (F')
(F) edge (G)
(F') edge (G)
(G) edge[loop right] node[el,swap]{$2$} (G)
(B) edge[bend left] (X.apex)
(B) edge[bend left] (Z.apex)
(X.apex) edge[bend left] (B)
(Z.apex) edge[bend left] (B)
;

\end{tikzpicture}
\caption{Depiction of the reduction from QBF.}
\label{fig:spine-qbf}
\end{center}
\end{figure}

\begin{figure}
\begin{center}
\begin{tikzpicture}
\node[ve](A) at (2,5) {\small{$x_i$}};
\node(rA) at (4,6) {};
\node(lA) at (0,6) {};
\node[va](B) at (2,3) {};
\node[ve](C) at (3,2) {};
\node[va](D) at (4,1) {};
\node[ve](E) at (6,0) {\small{$\overline{x_j}$}};
\node(bE) at (7,-2) {};
\node(aE) at (7,2) {};
\node[ve](F) at (2,1) {};
\node[va](G) at (0,1) {};
\node[ve](H) at (1,2) {};
\node[ve](I) at (-2,0) {\small{$x_k$}};
\node(bI) at (-3,-2) {};
\node(aI) at (-3,2) {};

\path
(A) edge[dotted] node[el]{$0$} (rA)
(lA) edge[dotted] node[el]{$0$} (A)
(E) edge[dotted] node[el]{$0$} (bE)
(aE) edge[dotted] node[el]{$0$} (E)
(I) edge[dotted] node[el]{$0$} (aI)
(bI) edge[dotted] node[el]{$0$} (I)
(A) edge[bend left] node[el]{$3$} (B)
(B) edge[bend left] node[el]{$0$} (A)
(B) edge node[el]{$4$} (C)
(C) edge[bend right] node[el,swap]{$4$} (A)
(C) edge node[el]{$4$} (D)
(D) edge[bend left] node[el]{$0$} (E)
(E) edge[bend left,pos=0.7] node[el]{$3$} (D)
(D) edge node[el]{$4$} (F)
(F) edge[bend right] node[el,swap]{$4$} (E)
(F) edge node[el]{$4$} (G)
(G) edge node[el,swap]{$4$} (H)
(H) edge node[el]{$4$} (B)
(G) edge[bend left] node[el]{$0$} (I)
(I) edge[bend left] node[el,swap]{$3$} (G)
(H) edge[bend right,pos=0.2] node[el,swap]{$4$} (I);
\end{tikzpicture}
\caption{Clause gadget for the QBF reduction for clause $x_i \lor \lnot x_j
\lor x_k$.}
\label{fig:clause-gadget}
\end{center}
\end{figure}

\begin{proof}
	We first describe the value-choosing part of the game (see
	Figure~\ref{fig:spine-qbf}). $\VtcE$ contains vertices for every
	existentially quantified variable from the QBF and $\VtcA$ contains
	vertices for every universally quantified variable. At each of this
	vertices, there are two outgoing edges with weight $0$ corresponding to
	a choice of truth value for the variable. For the variable $x_i$ vertex,
	the \textbf{true} edge leads to a vertex from which \eve can choose to
	move to any of the clause gadgets corresponding to clauses where the
	literal $x_i$ occurs (see dotted incoming edge in
	Figure~\ref{fig:clause-gadget}) or to advance to $x_{i+1}$. The
	\textbf{false} edge construction is similar, while leading to the
	literal $\overline{x_i}$ rather than to $x_i$. From the vertices encoding
	the choice of truth value for $x_n$ \eve can either visit the clause
	gadgets for it or move to a ``final'' vertex $\Phi \in \VtcE$. This
	final vertex has a self-loop with weight $2$.

	To conclude the proof, we describe the strategy of \eve which ensures
	the desired property if the QBF is satisfiable and a strategy of \adam
	which ensures the property is falsified otherwise.
	
	Assume the QBF is true. It follows that there is a strategy of
	the existential player in the QBF game such that for any strategy of the
	universal player the QBF will be true after they both choose values for
	the variables. \eve now follows this strategy while visiting all clause
	gadgets corresponding to occurrences of chosen literals. At every gadget
	clause she visits she chooses to enter the gadget. If \adam
	now decides to take the weight $4$ edge, \eve can achieve a mean-payoff
	value of $\frac{11}{3}$ or a $\linffun$ value of $3$ by staying in the
	gadget. In this case the claim trivially holds. We therefore focus in
	the case where \adam chooses to take \eve back to the vertex from which
	she entered the gadget. She can now go to the next clause gadget and
	repeat. Thus, when the play reaches vertex $\Phi$, \eve must have
	visited every clause gadget and \adam has chosen to disallow a weight
	$4$ edge in every gadget. Now \eve can ensure a payoff value of $2$ by
	going to $\Phi$. As she has witnessed that in every clause gadget there
	is at least one vertex in which \adam is not helping her, alternative
	strategies might have ensured a mean-payoff of at most $\frac{11}{3}$
	and a $\linffun$ value of at most $3$. Thus, her regret is less than
	$2$.

	Conversely, if the universal player had a winning strategy (or, in other
	words, the QBF was not satisfiable) then the strategy of \adam consists
	in following this strategy in choosing values for the variables and
	taking \eve out of clause gadgets if she ever enters one. If the play
	arrives at $\Phi$ we have that there is at least one clause gadget that
	was not visited by the play. We note there is an alternative strategy of
	\eve which, by choosing a different valuation of some variable, reaches
	this clause gadget and with the help of \adam achieves value $4$. Hence,
	this strategy of \adam ensures regret of exactly $2$. If \eve avoids
	reaching $\Phi$ then she can ensure a value of at most $0$, which means
	an even greater regret for her.
\end{proof}

We observe that the above reduction can be readily parameterized. That is, we can
replace the $4$ value, the $3$ value and the $2$ value by arbitrary values $A$,
$B$, $C$ satisfying the following constraints:
\begin{itemize}
	\item $A > B > C$,
	\item $\frac{2A + B}{3} - C < r$ so that \eve wins if $\Phi$ is true,
	\item $A - C \ge r$ so that \adam wins if $\Phi$ is false, and
	\item $A - \frac{2A + B}{3} < r$ so that he never helps \eve in the
		clause gadgets.
\end{itemize}
Indeed, the valuation of $A$, $B$, $C$ we chose: $4$, $3$, $2$ with $r = 2$,
satisfies these inequalities exactly. It is not hard to see that if we find a
valuation for $r$, $A$, $B$, $C$ which meets the first restriction and the last
three having changed from strict to non-strict, and vice-versa, we can get a
reduction that works for the non-strict regret threshold problem. That is, find
values such that
\begin{itemize}
	\item $A > B > C$,
	\item $\frac{2A + B}{3} - C \le r$ so that \eve wins if $\Phi$ is true,
	\item $A - C > r$ so that \adam wins if $\Phi$ is false, and
	\item $A - \frac{2A + B}{3} \le r$ so that he never helps \eve in the
		clause gadgets.
\end{itemize}
For example, one could consider $A = 10$, $B = 7$, $C = 5$ and $r = 4$.

\begin{lemma}
	Given a weighted arena $G$ with payoff function $\linffun$,
	$\underline{\mpfun}$, or $\overline{\mpfun}$, determining whether
	$\Regret{G}{\StrAllE,\StrPosA} \le 4$ is \PSPACE-hard.
\end{lemma}

\begin{figure}
\begin{center}
\begin{tikzpicture}
	\node[va] (A) at (0,1.5) {$v$};
	\node[ve] (B) at (2,1.5) {};
	\node[va] (C) at (4,1.5) {$t_1$};
	\node[va] (D) at (2,0) {};
	\node[va] (E) at (4,0) {$s_2$};
	
	\path
	(A) edge (B)
	(B) edge (C)
	(B) edge (D)
	(D) edge[loop left] (D)
	(D) edge (E)
	;
\end{tikzpicture}
\caption{Regret gadget for $2$-disjoint-paths reduction.}
\label{fig:2dp-gadget}
\end{center}
\end{figure}

\begin{lemma}
	\label{lem:coNP-hardness}
	Given $r \in \mathbb{Q}$ and a weighted arena $G$ with payoff
	function $\inffun$, $\supfun$, or $\lsupfun$,
	determining whether
	$\Regret{G}{\StrAllE,\StrPosA} \lhd r$, for $\lhd \in \{<,\le\}$,
	is \coNP-hard.
\end{lemma}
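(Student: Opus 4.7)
The plan is to reduce from the complement of the $2$-disjoint-paths problem (\textsc{2DP}), which is classically \coNP-complete on directed graphs: given a directed graph $H$ with pairs $(s_1,t_1)$ and $(s_2,t_2)$, decide whether there are \emph{no} two vertex-disjoint paths $P_i : s_i \leadsto t_i$ (for $i=1,2$).

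From an instance $(H, s_1, t_1, s_2, t_2)$ I would build a weighted arena $G$ as follows. Take $H$ with every vertex owned by \adam, and install the escape gadget of Figure~\ref{fig:2dp-gadget} at each vertex $v$: \adam at $v$ is forced into an \eve vertex from which she either continues along $H$ or branches into an \adam vertex $D_v$ carrying a self-loop of low weight $C$ and an edge to $s_2$. Equip $t_1$ and $t_2$ with absorbing self-loops of weights $A < B$, with $C$ chosen so that $A - C$ is strictly smaller than $B-A$, and set the initial vertex to $s_1$.

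For a memoryless \adam strategy $\tau$, the choices on $H$ determine two (eventually periodic) walks, $W_1$ from $s_1$ and $W_2$ from $s_2$. \eve's regret-minimizing strategy $\sigma$ follows $W_1$; her best alternative $\sigma'$, against a $\tau$ that also passes at every $D_v$, escapes at some $v$ and executes $W_2$ from $s_2$. The key combinatorial fact is that a memoryless $\tau$ can simultaneously drive $W_1$ to $t_1$ and $W_2$ to $t_2$ if and only if $H$ admits two vertex-disjoint paths $s_i \leadsto t_i$, since two memoryless walks in the same graph must coincide from any point where they meet. With the weights set so that a play's payoff is $A$ when it reaches $t_1$, $B$ when it reaches $t_2$, and at most $\max(A,C)$ otherwise, the regret of $G$ is $\geq B - A$ on yes-instances of \textsc{2DP} and $\leq A - C$ on no-instances. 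Choosing the threshold $r$ strictly between these two values yields \coNP-hardness for both the strict and non-strict problems. The argument is uniform over $\lsupfun$ (where the absorbing self-loops dominate the payoff), $\supfun$ (where every accepting play attains the self-loop weight $A$ or $B$), and $\inffun$ (after applying the prefix-independent transformation of Lemma~\ref{lem:prefindepping}, or by a direct re-weighting).

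The main obstacle will be ruling out spurious regret created when \adam's $\tau$ routes $W_2$ to $t_2$ without $W_1$ reaching $t_1$: if \eve could exploit this by adaptively escaping, she would achieve value $B$ and regret could exceed $B - A$ even on no-instances. The trap option at $D_v$ is precisely designed to foil this --- whenever $\tau$ fails to route $W_1$ to $t_1$ in a productive way, \adam's memoryless choice at some reachable $D_v$ can safely select the self-loop, so any adaptive escape on \eve's part is punished by the $C$-valued trap rather than rewarded by $W_2$. Verifying this interaction, and in particular that \eve's regret-minimizing $\sigma$ is indeed the "follow $W_1$" strategy rather than something more clever, through a careful case analysis on whether the walks $W_1$ and $W_2$ merge, forms the heart of the argument.
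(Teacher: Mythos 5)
Your high-level plan coincides with the paper's: a reduction from the complement of the \textsc{$2$-disjoint-paths Problem}, an escape gadget containing an \adam-controlled trap before $s_2$, and the key combinatorial fact that two walks induced by one positional strategy must coincide from any shared vertex onward, so a positional $\tau$ can route $s_1$ to $t_1$ and $s_2$ to $t_2$ simultaneously exactly when two vertex-disjoint paths exist. The gap is in \emph{where} you install the gadget. Placing a copy at every vertex of $H$ destroys the separation between yes- and no-instances. On a no-instance in which $t_2$ is merely reachable from $s_2$, \adam can pick a positional $\tau$ that (i) drives $W_1$ into a cycle that never reaches $t_1$, (ii) routes $W_2$ to $t_2$, and (iii) opens exactly one edge $D_v \to s_2$ along $W_1$'s route while closing all the others. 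The alternative strategy $\sigma'$ in the definition of regret is quantified \emph{after} $\tau$, so it escapes precisely at the open $D_v$ and obtains $B$, while your ``follow $W_1$'' strategy obtains only the cycling value; the regret is then roughly $B$, not $\le A-C$ as you claim. The trap does not repair this: it deters \eve from escaping (any fixed escape rule of hers can be met by a $\tau$ that closes the doors she tries and rewards the ones she skips, already costing her $A-C$), but it cannot deter the omniscient alternative $\sigma'$. So no strategy of \eve keeps the no-instance regret below the yes-instance regret, and the reduction fails as stated.

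The paper's construction avoids this by installing the gadget only on the edges entering $t_1$. Then \eve's play and any alternative can diverge only at a vertex from which $t_1$ is one step away, so at every divergence point \eve is guaranteed value at least $1$; this caps the regret at $2-1=1$ unconditionally, and the memoryless-merging argument then shows that regret $1$ is attained iff two disjoint paths exist (otherwise the walk from $s_2$ either never reaches $t_2$ or merges back toward $t_1$, giving regret $0$). If you relocate your gadget accordingly, the rest of your argument --- the weight constraints, the choice of threshold for both $\lhd \in \{<,\le\}$, and the uniform treatment of $\lsupfun$, $\supfun$, and $\inffun$ via Lemma~\ref{lem:prefindepping} --- goes through essentially as in the paper.
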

\begin{proof}
	We provide a reduction from the complement of the
	\textsc{$2$-disjoint-paths Problem} on directed
	graphs~\cite{eilam-tzoreff98}. As the problem is known to be
	\NP-complete, the result follows. In other words, we sketch how to
	translate a given instance of the \textsc{$2$-disjoint-paths Problem}
	into a weighted arena in which \eve can ensure regret value strictly
	less than $1$ if and only if the answer to the
	\textsc{$2$-disjoint-paths Problem} is negative.

	Consider a directed graph $G$ and distinct vertex pairs $(s_1,t_1)$ and
	$(s_2,t_2)$. W.l.o.g. we assume that for all $i \in \{1,2\}$:
	\begin{inparaenum}[$(i)$]
		\item $t_i$ is reachable from $s_i$, and
		\item $t_i$ is a sink (i.e. has no outgoing edges).
	\end{inparaenum}
	in $G$. We now describe the changes we apply to $G$ in order to get the
	underlying graph structure of the weighted arena and then comment on the
	weight function. Let all vertices from $G$ be \adam vertices and $s_1$
	be the initial vertex. We replace all edges on $t_1$---edges of
	the form  $(v, t_1)$ incident, for some $v$---by a copy of the gadget
	shown in Figure~\ref{fig:2dp-gadget}. Next, we add self-loops on $t_1$
	and $t_2$ with weights $1$ and $2$, respectively. Finally, the weights
	of all remaining edges are $0$.

	We claim that, in this weighted arena, \eve can ensure regret
	strictly less than $1$---for payoff functions $\supfun$ and
	$\lsupfun$---if and only if in $G$ the vertex pairs $(s_1,t_1)$ and $(s_2,t_2)$
	cannot be joined by vertex-disjoint paths.  Indeed, we claim that the
	strategy that minimizes the regret of \eve is the strategy that, in
	states where she has a choice, tells her to go to $t_1$.
	
	First, let us prove that this strategy has regret strictly less than $1$
	if and only if no two disjoint paths in the graph exist between the
	pairs of states $(s_1,t_1)$ and $(s_2,t_2)$. Assume the latter is the
	case. Then if \adam chooses to always avoid $t_1$, then clearly the
	regret is $0$. If $t_1$ is eventually reached, then the choice of \eve
	secures a value of $1$ (for all payoff functions). Note that if she had
	chosen to go towards $s_2$ instead, as there are no two disjoint paths,
	we know that either the path constructed from $s_2$ by \adam never
	reaches $t_2$, and then the value of the path is $0$---and the regret
	is $0$ for \eve---or the path constructed from $s_2$ reaches $t_1$
	again since \adam is playing positionally---and, again, the regret is
	$0$ for \eve. Now assume that two disjoint paths between the
	source-target pairs exist. If \eve changed her strategy to go towards
	$s_2$ (instead of choosing $t_1$) then \adam has a strategy to reach
	$t_2$ and achieve a payoff of $2$. Thus, her regret would be equal to
	$1$.

	Second, we claim that any other strategy of \eve has a regret greater
	than or equal to $1$. Indeed, if \eve decides to go towards $s_2$
	(instead of choosing to go to $t_1$) then \adam can choose to loop on
	the state before $s_2$ and the payoff in this case is $0$. Hence, the
	regret of \eve is at least $1$.

	Note that minimal changes are required for the same construction to
	imply the result for $\inffun$. Further, the weight function and
	threshold $r$ can be accommodated so that \eve wins for the non-strict
	regret threshold. Hence, the general result follows.
\end{proof}

\subsection{Memory requirements for \eve} 
It follows from our algorithms for computing regret in this variant that Eve only
requires strategies with exponential memory. 
Examples where exponential memory is necessary can be easily constructed.

\begin{corollary}
	For all payoff functions $\supfun$, $\inffun$, $\lsupfun$, $\linffun$,
	$\underline{\mpfun}$ and $\overline{\mpfun}$, for all game graphs $G$,
	there exists $m$ which is  $2^{\mathcal{O}(|G|)}$ such that:
	\[
		\Regret{G}{\StrAllE,\StrPosA } = \Regret{G}{\StrE^m,\StrPosA}.
	\]
\end{corollary}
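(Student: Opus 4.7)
The plan is to read off the memory bound directly from the reductions already established in Section~4. Recall from Lemma~\ref{lem:reg-val-hat} (and its analogue using $\tilde{G}$ for the non-prefix-independent cases $\inffun$ and $\supfun$) that an optimal regret-minimizing strategy for \eve in $G$ against a positional adversary can be constructed from an optimal maximizing strategy $\hat{\epsilon}$ for \eve in the antagonistic game on $\hat{G}$ (respectively $\tilde{G}$) via
\[
    \sigma(s) = \proj{\hat{\epsilon}(\projinv{s})}{1}.
\]
So the task reduces to showing that $\hat{\epsilon}$ can be chosen to use only finitely many memory states and then bounding the size of the state space of $\hat{G}$ (or $\tilde{G}$).

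First, I would invoke positional determinacy of antagonistic games on finite weighted arenas for each of the six payoff functions: $\supfun$, $\inffun$, $\lsupfun$, $\linffun$, $\underline{\mpfun}$, $\overline{\mpfun}$ all admit optimal positional strategies for both players (this is classical for mean-payoff \cite{em79}, and trivial after the prefix-independence transformations for $\inffun$ and $\supfun$; for $\linffun$ and $\lsupfun$ it is also standard). Thus $\hat{\epsilon}$ can be taken positional, i.e.\ a function of the current vertex of $\hat{G}$ (resp.\ $\tilde{G}$) only.

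Next, I would observe that translating $\hat{\epsilon}$ back to $G$ via the formula above yields a strategy for \eve whose memory is realized by tracking the current vertex $(v, C) \in \hat{V}$ along the prefix $s$, updated along the edges of $\hat{G}$. This tracking is a deterministic finite-state process over $\hat{V}$, and hence needs memory of size at most $|\hat{V}|$. For $\hat{G}$ we have $\hat{V} = V \times \pow(E)$, so $|\hat{V}| \le |V|\cdot 2^{|E|}$; for $\tilde{G}$ we pay an additional factor bounded by the number of distinct edge-weights, which is at most $|E|$. In both cases the memory bound is $m = 2^{\mathcal{O}(|G|)}$, as required.

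The argument is essentially a size analysis on top of the already-proved correctness of the $\hat{G}$-construction, so there is no real obstacle; the only mild subtlety is checking that for $\inffun$ and $\supfun$ one must use $\tilde{G}$ rather than $\hat{G}$, and then verifying that the extra coordinate (the minimal/maximal weight witnessed so far) only multiplies the state space by a polynomial factor, keeping the overall bound exponential in $|G|$.
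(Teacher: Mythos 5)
Your proposal is correct and follows essentially the same route as the paper, whose own justification is a one\-line remark that the memory bound ``follows from our algorithms'': namely, a positional optimal strategy $\hat{\epsilon}$ in the antagonistic game on $\hat{G}$ (resp.\ $\tilde{G}$ for $\inffun$ and $\supfun$) pulled back via $\sigma(s)=\proj{\hat{\epsilon}(\projinv{s})}{1}$ is realized by tracking the current vertex of $\hat{G}$, giving memory at most $|\hat{V}|\le |V|\cdot 2^{|E|}=2^{\mathcal{O}(|G|)}$. Your explicit size accounting, including the extra polynomial factor for the recorded extremal weight in $\tilde{G}$, is exactly the intended argument.
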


\section{Variant III: \adam plays word strategies}
\label{sec:word-strats}
For this variant, we provide tight upper and lower bounds for all the payoff
functions: the regret threshold problem is $\EXP$-complete for $\supfun$,
$\inffun$, $\lsupfun$, and $\linffun$, and undecidable for $\underline{\mpfun}$
and $\overline{\mpfun}$.  For the later case, the decidability can be recovered
when we fix a priori the size of the memory that \eve can use to play, the
decision problem is then $\NP$-complete.  Finally, we show that our notion of
regret minimization for word strategies generalizes the notion of {\em good for
games} introduced by Henzinger and Piterman in~\cite{hp06}, and we also
formalize the relation that exists with the notion of determinization by pruning
for weighted automata introduced by Aminof et al. in~\cite{akl10}. 

\paragraph{Additional definitions.}
We say that a strategy of \adam is a \emph{word strategy} if his strategy can be
expressed as a function $\tau : \mathbb{N} \to [\max\{\outdeg(v) \st v \in
V\}]$, where $[n] = \{i \st 1 \le i \le n\}$ and $\outdeg(v)$ is the
\emph{outdegree} of $v$ (i.e. the number of edges leaving $v$).  Intuitively, we
consider an order on the successors of each \adam vertex. On every turn, the
strategy $\tau$ of \adam will tell him to move to the $i$-th successor (or to a
sink state, if its outdegree is less than $i$) of the vertex according to the fixed order. We denote by
$\StrWordA$ the set of all such strategies for \adam. When considering word
strategies, it is more natural to see the arena as a (weighted) automaton.

A \emph{weighted automaton} is a tuple $\Gamma=(Q, q_I, A, \Delta, w)$ where $A$
is a finite alphabet, $Q$ is a finite set of states, $q_I$ is the initial state,
$\Delta \subseteq Q \times A \times Q$ is the transition relation, $w : \Delta
\rightarrow \mathbb{Q}$ assigns weights to transitions. A \emph{run} of $\Gamma$
on a word $a_0 a_1 \ldots \in A^\omega$ is a sequence $\rho = q_0 a_0 q_1 a_1
\ldots \in (Q\times A)^\omega$ such that $(q_i,a_i,q_{i+1}) \in \Delta$, for all
$i \ge 0$, and has \emph{value} $\Val(\rho)$ determined by the sequence of
weights of the transitions of the run and the payoff function. The value
$\Gamma$ assigns to a word is the supremum of the values of all its runs on the
word. We say the automaton is deterministic if $\Delta$ is functional.

A game in which \adam plays word strategies can be
reformulated as a game played on a weighted automaton $\Gamma=(Q, q_I, A,
\Delta, w)$ and strategies of \adam\ -- of the form $\tau : \mathbb{N} \to A$ --
determine a sequence of input symbols to which \eve has to react by choosing
$\Delta$-successor states starting from $q_I$. In this setting a strategy of
\eve which minimizes regret defines a run by resolving the non-determinism of
$\Delta$ in $\Gamma$, and ensures the difference of value given by the
constructed run is minimal w.r.t. the value of the best run on the word
spelled out by \adam. For instance, if all vertices in Fig.~\ref{fig:intro-ex}
are replaced by states, \eve can choose the successor of $v_1$ regardless of
what letter \adam plays and from $v_2$ and $v_3$ \adam chooses the successor by
choosing to play $a$ or $b$. Furthermore, his choice of letter tells \eve what
would have happened had the play been at the other state.

The following result summarizes the results of this section:

\begin{theorem}\label{thm:eloquentadversary}
	Deciding whether the regret value is less than a given threshold (strictly or
	non-strictly) playing against word strategies of \adam is \EXP-complete
	for $\inffun$, $\supfun$, $\linffun$, and $\lsupfun$; it is undecidable
	for $\underline{\mpfun}$ and $\overline{\mpfun}$.
\end{theorem}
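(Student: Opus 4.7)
The plan is to split the theorem along its three payoff-function groupings and handle them separately. For the \EXP~upper bound for $\inffun$, $\supfun$, $\linffun$, and $\lsupfun$, I would first apply the weight-tracking transformations of Lemma~\ref{lem:prefindepping} to reduce $\inffun$ (resp.\ $\supfun$) to a prefix-independent $\linffun$ (resp.\ $\lsupfun$) problem at polynomial cost. For the prefix-independent cases I would view the arena as a weighted automaton in which \adam's word strategies supply the letter sequence and \eve resolves non-determinism. I then construct an exponential deterministic two-player game $\hat{G}$ whose states are pairs $(q, S)$ with $q$ the current state reached by \eve's chosen run and $S$ a subset-style summary of every alternative run on the word played so far, together with the best witnessed weight along each alternative (for $\lsupfun$, the running maximum; for $\linffun$, the current ``floor'' of weights seen on each alternative; the $\inffun$/$\supfun$ cases inherit their tracking from Lemma~\ref{lem:prefindepping}). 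In $\hat{G}$, \adam picks a letter and \eve picks a $\Delta$-successor; the winning condition encodes the inequality between \eve's realised payoff and the supremum over alternatives. Since $|S|$ is $2^{O(|G|)}$ and the resulting objective is a safety/parity-style condition solvable in polynomial time in $|\hat{G}|$, the overall algorithm runs in exponential time.

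For the matching \EXP~lower bound, the natural source of hardness is the subset construction that is implicit in the above product. I would reduce from a canonical problem whose hardness stems from precisely such a determinization, such as the acceptance problem for alternating polynomial-space Turing machines or the universality problem for non-deterministic Büchi/safety automata (suitably weighted). The reduction encodes the instance so that every maximal alternative run corresponds to a ``witness branch'', and \eve achieves regret below the threshold iff she can match, on-the-fly, the best such branch for every word, which is exactly the universality/acceptance question.

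For the undecidability of the $\underline{\mpfun}$ and $\overline{\mpfun}$ cases, I would reduce from the universality (equivalently, inclusion) problem for non-deterministic mean-payoff automata on infinite words, which is known to be undecidable. Given such an instance with threshold $\nu$, I construct a weighted arena with a distinguished ``reference'' component that deterministically yields mean-payoff exactly $\nu$, placed so that \adam's word strategies feed both the original non-deterministic automaton and the reference. The regret of \eve against word strategies then exactly captures the gap between $\nu$ and the supremum value the non-deterministic automaton assigns to the word, so regret below a suitable threshold is equivalent to universality. The hardest step will be the upper bound construction for $\lsupfun$: the alternative-run summary must be rich enough that the winning condition in $\hat{G}$ recovers the precise difference between \eve's $\lsupfun$ and the maximum alternative $\lsupfun$, yet compact and monotone enough to remain exponentially bounded and updatable after each letter; reconciling these two requirements, and verifying that solving $\hat{G}$ reduces back to regret in $G$, is the delicate part.
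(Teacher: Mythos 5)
Your overall architecture mirrors the paper's (a determinization-based product for the upper bound, an \EXP-hard game problem for the lower bound, an undecidable mean-payoff problem for the last part), but each of the three steps has a concrete gap. For the upper bound, the summary you propose to track per alternative run --- the running maximum for $\lsupfun$ and the current ``floor'' for $\linffun$ --- determines $\supfun$ and $\inffun$ but \emph{not} the prefix-independent values $\lsupfun$ and $\linffun$, which depend only on the weights seen infinitely often; a subset construction decorated with best-so-far weights cannot recover them. The paper instead invokes the determinization of $\linffun$/$\lsupfun$ automata from~\cite{cdh10}, and the resulting winning condition is not a safety condition: for $\linffun$ it is a parity condition (with polynomially many priorities, which is what keeps the exponential-size game solvable in \EXP), and for $\lsupfun$ one needs a family of deterministic parity automata (one per weight threshold) and a Streett game whose number of pairs stays polynomial. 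Your claim that the condition is ``solvable in polynomial time in $|\hat{G}|$'' is also too strong for parity/Streett objectives; the argument only goes through because the priority/pair count is polynomial in the original input. For the lower bound, universality of non-deterministic B\"uchi or safety automata is \PSPACE-complete, so a reduction from it cannot establish \EXP-hardness; the extra power comes from the \emph{alternation} between \eve's online choices and \adam's letters, not from determinization alone. The paper reduces from countdown games (an \EXP-complete game problem), using gadgets in which \eve's belief about the alternative runs implements a binary counter and the countdown alternation is simulated by the two players; your sketch collapses this to ``matching the best branch on-the-fly for every word,'' which is exactly the \PSPACE-type universality/good-for-games question and loses the alternation.

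The undecidability argument has the most serious flaw. With a deterministic reference component of value $\nu$ fed the same word as the non-deterministic automaton $\calA$, the quantity your construction produces is either $\sup_w \max(0,\calA(w)-\nu)$, which equals $\max(0,\cVal-\nu)$ and is computable in polynomial time, or (with the roles swapped) $\nu - \sup_\sigma \inf_w \StratVal{}{}{\sigma}{w}$, where $\sigma$ ranges over \emph{online} strategies of \eve resolving non-determinism letter by letter. Neither equals the universality threshold $\inf_w \calA(w) = \inf_w \sup_\rho \Val(\rho)$: in universality the witnessing run may depend on the entire word, whereas \eve must commit prefix by prefix, and this online/offline gap is precisely the good-for-games phenomenon. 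So ``regret below a suitable threshold is equivalent to universality'' is false as stated. The second quantity above is, however, the value of a \emph{blind} mean-payoff game, and this is the route the paper actually takes: it reduces from the threshold problem for mean-payoff games with partial observation~\cite{ddgrt10,hpr14}, where observation-based strategies of \eve correspond exactly to strategies in the regret game against word strategies of \adam. If you replace your universality source by that problem (and add the initial gadget forcing \eve into the simulating sub-arena), your construction essentially becomes the paper's.
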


\subsection{Upper bounds}
There is an $\EXP$ algorithm for solving the regret threshold problem for
$\inffun$, $\supfun$, $\linffun$, and $\lsupfun$. This algorithm is obtained by a
reduction to parity and Streett games. 

\begin{lemma}\label{lem:exp-memb}
	Given $r \in \mathbb{Q}$ and a weighted automaton $\Gamma$ with
	payoff function $\inffun$, $\supfun$, $\linffun$, or $\lsupfun$,
	determining whether \( \Regret{\Gamma}{\StrAllE,\StrWordA} \lhd r \),
	for $\lhd \in \{<,\le\}$, can be done in exponential time.
\end{lemma}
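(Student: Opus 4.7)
The plan is to reduce the regret threshold problem on a weighted automaton $\Gamma$ to a two-player game of exponential size carrying a parity or Streett winning condition; since such games are solvable in time polynomial in their size (for a fixed number of pairs/colours), this will yield the desired $\EXP$ upper bound.

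First, I would build an augmented arena $\tilde{\Gamma}$ whose states have the form $(q, f)$, where $q \in Q$ is the current state of the run Eve is constructing and $f$ is a bounded-size abstraction that records enough information about \emph{all} runs of $\Gamma$ on the word Adam has played so far to recover the value of the best alternative run. For $\supfun$ and $\inffun$, the abstraction is a function $f : Q \to \{-W,\dots,W,\bot\}$ that stores for each state $q' \in Q$ the maximal (respectively minimal) weight seen along some run on the current prefix that ends in $q'$, with $\bot$ when $q'$ is unreachable; the update upon reading Adam's letter is deterministic. For $\linffun$ and $\lsupfun$, $f$ must track, along the alternative runs still alive in the subset construction, which weights can occur infinitely often; this can be done via a Safra/breakpoint record indexed by weights $c \in \{-W,\dots,W\}$, yielding a single deterministic automaton whose acceptance condition is of parity/Rabin type. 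In all four cases the size of $\tilde{\Gamma}$ is singly exponential in $|\Gamma|$.

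Second, I would encode ``the regret is $\lhd r$'' as a winning condition on $\tilde{\Gamma}$. For $\supfun$ and $\inffun$, I would first apply the prefix-independence construction from Lemma~\ref{lem:prefindepping} to fold Eve's own running best/worst weight into her state, reducing the condition to one about the stable values of $f$ together with Eve's component; this yields a $\lsupfun$/$\linffun$ condition. For $\linffun$ and $\lsupfun$, the condition demands that for every pair of weights $(\alpha,\beta)$ with $\beta - \alpha \not\lhd r$, if some alternative run achieves limit value $\beta$ then Eve's run achieves value at least $\alpha$; this is naturally phrased as a Streett condition whose number of pairs is polynomial in $W$ and $|Q|$.

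Third, solving the resulting Streett (or parity) game on an arena of size $2^{\mathcal{O}(|\Gamma|\log W)}$ with polynomially many pairs can be done in time polynomial in the game size and exponential in the number of pairs, hence in time exponential in $|\Gamma|$ overall. Correctness will follow by exhibiting a bijection between winning strategies for Eve in $\tilde{\Gamma}$ and strategies in $\Gamma$ achieving regret $\lhd r$ against every word strategy of Adam. The main obstacle I anticipate is the construction for the limit-based objectives $\linffun$ and $\lsupfun$: one must design a breakpoint-style tracker that exactly captures, for each weight $c$, whether some alternative run on Adam's word visits $c$ infinitely often, and then argue that the resulting acceptance condition faithfully expresses the regret inequality rather than a loose over- or under-approximation.
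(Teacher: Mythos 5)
Your proposal follows essentially the same route as the paper: determinize the tracking of the best alternative run's value (with singly exponential blowup, using subset/value records for $\inffun$ and $\supfun$ after the prefix-independence transformation, and Safra-style determinization of the weight-threshold B\"uchi conditions for $\linffun$ and $\lsupfun$), play the resulting simulation game in which \adam spells a word and \eve resolves non-determinism, encode the regret inequality as a parity condition for $\linffun$ and a Streett condition with polynomially many pairs for $\lsupfun$, and solve the exponential-size game in time exponential in $|\Gamma|$. The obstacle you flag for the limit-based objectives is exactly the point the paper resolves by determinizing, for each weight $x_i$, the B\"uchi automaton accepting words with some run of value at least $x_i$ into a deterministic parity automaton and combining these in the Streett pairs.
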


We show how to decide the strict regret threshold problem. However, the same
algorithm can be adapted for the non-strict version by changing strictness of
the inequalities used to define the parity/Streett accepting conditions. 

\begin{proof}
	We focus on the $\linffun$ and $\lsupfun$ payoff functions. The result
	for $\inffun$ and $\supfun$ follows from the translation to $\linffun$
	and $\lsupfun$
	games given in Sect.~\ref{sec:all-strats}. Our decision algorithm
	consists in first building a deterministic automaton for
	$\Gamma=(Q_1,q_I,A,\Delta_1,w_1)$ using the construction provided
	in~\cite{cdh10}. We denote by $D_{\Gamma}=(Q_2,s_I,A,\Delta_2,w_2)$ this
	deterministic automaton and we know that it is at most exponentially
	larger than $\Gamma$.  Next, we consider a simulation game played by
	\eve and \adam on the automata $\Gamma$ and $D_{\Gamma}$. The game is
	played for an infinite number of rounds and builds runs in the two
	automata, it starts with the two automata in their respective initial
	states $(q_I,s_I)$, and if the current states are $q_1$ and $q_2$, then
	the next round is played as follows:
	\begin{itemize}
		\item \adam chooses a letter $a \in A$, and the state of the
			deterministic automaton is updated accordingly, i.e.
			$q'_2=\Delta_2(q_2,a)$, then
		\item  \eve updates the state of the non-deterministic automaton
			to $q'_1$ by reading $a$ using one of the edges labelled
			with $a$ in the current state, i.e. she chooses $q'_1$
			such that $q'_1 \in \Delta_1(q_1,a)$. The new state of
			the game is $(q'_1,q'_2)$.  
	\end{itemize}
	\eve wins the simulation game if the minimal weight seen infinitely
	often in the run of the non-deterministic automaton is larger than or
	equal to the minimal weight seen infinitely often in the deterministic
	automaton minus $r$.  It should be clear that this happens exactly when
	\eve has a regret bounded by $r$ in the original regret game on the word
	which is spelled out by \adam.

	Let us focus on the $\liminf$ payoff function now. We will sketch how
	this game can be translated into a parity game. For completeness, we now
	provide a formal definition of the latter.
	A parity game is a pair $(G,\Omega)$ where $G$ is
	a non-weighted arena and $\Omega : V \to \mathbb{N}$ is a function that
	assigns a priority to each vertex. Plays, strategies, and other notions
	are defined as with games played on weighted arenas. A play in a parity
	game induces an infinite sequence of priorities. We say a play is
	winning for \eve if and only if the minimal priority seen infinitely
	often is odd. The \emph{parity index} of a parity game is the number of
	priorities labelling its vertices, that is $|\{\Omega(v) \st v \in V\}|$.

	To obtain
	the translation, we keep the structure of the game as above but we
	assign priorities to the edges of the games instead of weights. We do it
	in the following way. If $X=\{ x_1,x_2, \dots,x_n\}$ is the ordered set
	of weight values that appear in the automata (note that $|X|$ is bounded
	by the number of edges in the non-deterministic automaton), then we need
	the set of priorities $D=\{ 2, \dots, 2n+1\}$. We assign priorities to
	edges in the game as follows:
	\begin{itemize}
	      \item when \adam chooses a letter $a$ from $q_2$, then if the
	      	weight that labels the edge that leaves $q_2$ with
	      	letter $a$ in the deterministic automaton is equal to
	      	$x_i \in X$, then the priority is set to $2i+1$,
	      \item when \eve updates the non-deterministic automaton from
	      	$q_1$ with a edge labelled with weight $w$, then the
	      	color is set to $2i$ where $i$ is the index in $X$ such
	      	that $x_{i-1} \leq w+r < x_{i}$.
	\end{itemize}
	It should be clear then along a run, the minimal color seen infinitely
	often is odd if and only if the corresponding run is winning for \eve in
	the simulation game. So, now it remains to solve a parity game with
	exponentially many states and polynomially many priorities  w.r.t. the size
	of $\Gamma$. This can be done in exponential time with classical
	algorithms for parity games.

	\item \paragraph{$\lsupfun$ to Streett games.}
	Let us now focus on $\lsupfun$. In this case we will reduce our problem
	to that of determining the winner of a Streett game with state-space
	exponential w.r.t.  the original game but with number of Streett pairs
	polynomial (w.r.t. the original game).  Recall that a Streett game is a
	pair $(G,\calF)$ where $G$ is a game graph (with no weight function) and
	$\calF \subseteq \pow(V) \times \pow(V)$ is a set of Streett pairs. We
	say a play is winning for \eve if and only if for all pairs $(E,F) \in
	\calF$, if a vertex in $E$ is visited infinitely often then some vertex
	in $F$ is visited infinitely often as well.

	Consider a $\lsupfun$ automaton $\Gamma = (Q,q_I,A,\Delta,w)$. For $x_i
	\in \{w(d) \st d \in \Delta \}$ let us denote by $\calA^{\ge x_i}$ the
	B\"uchi automaton with B\"uchi transition set equivalent to all
	transitions with weight of at least $x_i$. We denote by $\calD^{\ge x_i} =
	(Q_i,q_{i,I},A,\delta_i,\Omega_i)$
	the deterministic parity automaton with the same language as $\calA^{\ge
	x_i}$.\footnote{Since $\delta_i$ is deterministic, we sometimes write
	$\delta_i(p,a)$ to denote the unique $q \in Q_i$ such that $(p,a,q) \in
	\delta_i$.} From~\cite{piterman07} we have that $\calD^{\ge x_i}$ has at
	most $2|Q|^{|Q|}|Q|!$ states and parity index $2|Q|$ (the number of
	priorities). Now, let $x_1 <
	x_2 < \dots < x_l$ be the weights appearing in transitions of $\Gamma$.
	We construct the (non-weighted) arena $G_\Gamma = (V, V_\exists,E,v_I)$
	and Streett pair set $\calF$ as follows
	\begin{itemize}
		\item $V = Q \times \prod_{i=1}^l Q_i \cup Q \times
			\prod_{i=1}^l Q_i \times A \cup Q \times
			\prod_{i=1}^l Q_i \times A \times Q$;
		\item $V_\exists = Q \times \prod_{i=1}^l Q_i \times A$;
		\item $v_I = (q_I,q_{1,I},\dots,q_{l,I})$;
		\item $E$ contains
			\begin{itemize}
				\item $\big((p,p_1,\dots,p_l)),
					(p,p_1,\dots,p_l,a)\big)$ for all $a \in
					A$,
				\item $\big((p,p_l,\dots,p_l,a),
					(p,p_1,\dots,p_l,a,q)\big)$ if $(p,a,q) 
					\in \Delta$,
				\item $\big((p,p_l,\dots,p_l,a,q),
					(q,q_1,\dots,q_l)\big)$
					if for all $1 \le i \le l$:
					$(p_i,a,q_i) \in \delta_i$;
			\end{itemize}
		\item For all $1 \le i \le l$ and all even $y$ such that
			$\range{\Omega_i} \ni y$,
			$\calF$ contains the pair $(E_i,F_i)$ where
			\begin{itemize}
				\item $E_{i,y} = \{(p,\ldots,p_i,\dots,p_l,a,q) \st
					\Omega_i(p_i,a,\delta(p_i,a)) =
					y\}$, and
				\item $F_{i,y} = \{(p,\ldots,p_j,\dots,p_l,a,q) \st
					(\Omega_i(p_i,a,\delta(p_i,a)) < y \land y
					\pmod{2} = 1) \lor w(p,a,q) \ge x_i - r \}$.
			\end{itemize}
	\end{itemize}
	It is not hard to show that in the resulting Streett game, a strategy
	$\sigma$ of \eve is winning against any strategy $\tau$ of \adam if and
	only if for every automaton $\calD^{\ge x_i}$ which accepts the word
	induced by $\tau$ then the run of $\Gamma$ induced by $\sigma$ has
	payoff of at least $x_i - r$, if and only if \eve has a winning strategy
	in $\Gamma$ to ensure regret is less than $r$.

	Note that the number of Streett pairs in $G_\Gamma$ is polynomial w.r.t.
	the size of $\Gamma$, i.e.
	\begin{align*}
		|\calF| &\le \sum_{i=0}^l |\range{\Omega_i}|\\
		&\le l \cdot 2|Q|\\
		&\le |Q|^2 \cdot 2|Q| = 2|Q|^3.
	\end{align*}
	From~\cite{pp06} we have that Streett games can be solved in time
	$\calO(mn^{k+1}kk!)$ where $n$ is the number of states, $m$ the number
	of transitions and $k$ the number of pairs in $\calF$. Thus, in this
	case we have that $G_\Gamma$ can be solved in 
	\[
		\calO\big((2|Q|^{|Q|}|Q|!)^{3 + 2|Q|^3}\cdot 2|Q|^3 \cdot
		(2|Q|^3)!\big).
	\]
	which is still exponential time w.r.t. the size of $\Gamma$.
\end{proof}

\subsection{Exponential lower bounds}
We first establish $\EXP$-hardness for the payoff functions $\inffun$,
$\supfun$, $\linffun$, and $\lsupfun$ by giving a reduction from countdown
games~\cite{jsl08}.  That is, we show
that given a countdown game, we can construct a game where \eve ensures regret
less than $2$ if and only if Counter wins in the original countdown game.
\begin{lemma}\label{lem:exp-hardness}
	Given $r \in \mathbb{Q}$ and a weighted automaton $\Gamma$ with
	payoff function $\inffun$, $\supfun$, $\linffun$, or $\lsupfun$,
	determining whether $ \Regret{\Gamma}{\StrAllE,\StrWordA} \lhd r$,
	for $\lhd \in \{<,\le\}$, is \EXP-hard.
\end{lemma}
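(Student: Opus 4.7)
The plan is to reduce from countdown games~\cite{jsl08}, a two-player reachability game on a finite weighted graph with a binary-encoded counter that is known to be $\EXP$-complete. Recall that a countdown game starts from a configuration $(s_0, c_0)$, and at each turn the player owning the current vertex chooses an outgoing edge of weight $d \le c$, decrementing the counter by $d$; Counter wins iff a configuration $(v, 0)$ is ever reached. Given a countdown game $C$ with initial configuration $(s_0, c_0)$, I would construct in polynomial time a weighted automaton $\Gamma$ over an alphabet polynomial in $|C|$ (one letter per admissible weight of $C$ together with a small number of auxiliary symbols) and with a polynomial-size state space (a copy of the vertices of $C$ augmented with a constant-size ``check'' gadget, and crucially no explicit storage of the counter value), such that \eve can ensure regret strictly less than~$2$ in $\Gamma$ iff Counter wins in $C$.

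The construction simulates the two players of $C$ using the two players of the regret game: \adam's weight letters encode the moves of one countdown player and \eve's non-deterministic successor selections encode the moves of the other. A special ``check'' letter in \adam's alphabet triggers the gadget, whose edges and weights (for the $\linffun$ payoff, with values $0$ and $2$, say) are engineered so that, for any word $\tau$ of \adam, the supremum alternative value $\sup_{\sigma'} \StratVal{}{}{\sigma'}{\tau}$ equals $2$ precisely when there exists a resolution of \eve's non-determinism simulating a countdown play that reaches counter $0$ at the moment \adam issues the check, while \eve's committed value $\StratVal{}{}{\sigma}{\tau}$ equals $2$ iff her chosen strategy achieves the same. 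Because \adam picks $\tau$ after seeing $\sigma$---by the $\inf_{\sigma} \sup_{\tau}$ structure of the regret definition---his word can simulate any \emph{adaptive} strategy of the corresponding countdown player against $\sigma$; hence regret strictly less than~$2$ translates, via determinacy of countdown games, to the required win condition in $C$.

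The main technical obstacle is faithfully capturing the ``counter~$=0$'' condition in $\Gamma$ despite the exponential range of the counter and the polynomial size constraint. My approach is to \emph{offload} this exponential case analysis onto the $\sup_{\sigma'}$ quantifier of the regret definition: the supremum implicitly ranges over all resolutions of \eve's non-determinism, and by designing the check gadget so that the value-$2$ branch is accessible to an alternative strategy $\sigma'$ exactly when the accumulated edge weights along the simulated countdown play sum to $c_0$, the supremum correctly detects the counter-$0$ condition without any explicit tracking in $\Gamma$. The remaining verification is routine---confirming polynomial size of $\Gamma$ and polynomial-time reducibility, adapting to the non-strict threshold by uniform weight rescaling, handling $\lsupfun$ by a symmetric construction, and extending to $\inffun$ and $\supfun$ via the prefix-independence transformations recalled in Section~\ref{sec:all-strats}.
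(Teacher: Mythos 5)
Your starting point -- a reduction from countdown games, with \adam's letters and \eve's non-determinism simulating the two countdown players, and the $\sup_{\sigma'}$ quantifier exploited to detect a bad event via alternative runs -- is exactly the strategy of the paper's proof. But there is a genuine gap at the step you yourself flag as the main obstacle. You claim the check gadget can be ``engineered so that the value-$2$ branch is accessible to an alternative strategy $\sigma'$ exactly when the accumulated edge weights along the simulated countdown play sum to $c_0$,'' with a \emph{constant-size} gadget and ``no explicit storage of the counter value.'' This cannot work for the payoff functions in the statement: $\inffun$, $\supfun$, $\linffun$ and $\lsupfun$ are insensitive to sums of weights, so no run (committed or alternative) can have a value that depends on an accumulated total unless the automaton's \emph{states} track that total. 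The quantifier $\sup_{\sigma'}$ only ranges over runs of the same polynomial-size automaton on the same word; what it ``sees'' is determined by the set of states reachable by alternative runs after each prefix. A constant-size check gadget attached to a copy of the countdown graph yields only constantly (respectively polynomially) many such belief configurations in the gadget, which cannot distinguish the up-to-$N$ (exponentially many) possible counter values. The exponential case analysis is not offloaded for free; it has to be realized in the subset structure of the automaton.

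The paper closes exactly this gap with two pieces of machinery you would need to supply. First, a counter gadget with states $x_i,\overline{x_i}$ for $0\le i\le \lfloor\log_2 N\rfloor+2$, so that the \emph{set} of counter states currently occupied by alternative runs encodes the running sum in binary -- polynomially many states, exponentially many reachable belief sets. Second, an adder gadget together with alphabet letters $b_i$ (bit) and $c_i$ (carry) that force \adam to spell out each addition bit-by-bit with correct carry propagation: \eve is given the choice of forcing the carry letters, and the weights are arranged so that if \adam deviates from a faithful addition an alternative run reaches the value-$2$ sink and he loses nothing by being punished, while if the addition is faithful the belief set is updated to the new counter value. Only then does ``\adam plays $bail$ when the belief encodes $N$'' correctly separate the two cases. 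Your remaining remarks (adaptivity of the word against a fixed $\sigma$, rescaling for the non-strict threshold, transfer to $\inffun$/$\supfun$) are fine, but without the counter-in-the-belief encoding and the forced adder the reduction does not go through.
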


\begin{figure}
\begin{center}
\begin{tikzpicture}
\node[state] (sink0) at (0,2) {{$\bot_0$}};
\node[state] (linter) at (2,2) {};
\node[state,initial above] (vi) at (4,2) {};
\node[state] (rinter) at (6,2) {};
\node[state] (sink2) at (8,2) {{$\bot_2$}};
\node (lbinter) at (2,0) {};
\node (rbinter) at (6,0) {};

\path
(sink0) edge[loop above] node[el,swap] {$A,0$} (sink0)
(sink2) edge[loop above] node[el,swap] {$A,2$} (sink2)
(vi) edge node[el,swap] {$A,0$} (linter)
(linter) edge node[el,swap] {$bail, 0$} (sink0)
(vi) edge node[el] {$A,0$} (rinter)
(rinter) edge node[el] {$bail, 0$} (sink2)
(linter) edge node[el,swap] {$A\setminus \{bail\},0$} (lbinter)
(rinter) edge node[el] {$A\setminus \{bail\},0$} (rbinter)
;
\end{tikzpicture}
\caption{Initial gadget used in reduction from countdown games.}
\label{fig:initial-gadget}
\end{center}
\end{figure}

\begin{figure}
\begin{center}
\begin{tikzpicture}
\node at (0,2) {\dots};
\node[state] (xi) at (4,0) {{$x_i$}};
\node[state] (notxi) at (4,4) {{$\overline{x_i}$}};
\node[state] (carryi) at (3,2) {};
\node[state] (sink2) at (1,4) {{$\bot_2$}};
\node at (6,2) {\dots};
\node[state] (xn) at (7,0) {{$\overline{x_n}$}};
\node[state] (notxn) at (7,4) {{$x_n$}};
\node[state] (sink22) at (9,4) {{$\bot_2$}};

\path
(xi) edge[out=190,in=-110] node[el]{$c_{i+1},0$} (sink2)
(xi) edge[bend left] node[swap,el, align=left]{$b_i,0$\\$c_i,0$} (carryi)
(carryi) edge node[el,pos=0.1] {$A \setminus \{c_{i+1}\},0$} (sink2)
(carryi) edge[bend left] node[el,swap] {$c_{i+1},0$} (notxi)
(notxi) edge[bend right] node[el] {$c_{i+1},0$} (sink2)
(notxi) edge[bend left] node[el, align=left]{$b_i,0$\\$c_i,0$} (xi)
(xn) edge node[swap,el,align=left]{$b_n,0$\\$c_n,0$} (notxn)
(notxn) edge node[el] {$A,0$} (sink22)
(sink2) edge[loop above] node[el,swap] {$A,2$} (sink2)
(sink22) edge[loop above] node[el,swap] {$A,2$} (sink22)
;
\end{tikzpicture}
\caption{Counter gadget.}
\label{fig:counter-gadget}
\end{center}
\end{figure}

\begin{figure}
\begin{center}
\begin{tikzpicture}
\node[state] (v46) at (4,6) {};
\node[state] (v26) at (2,6) {};
\node[state] (v04) at (0,4.5) {};
\node[state] (v02) at (0,2) {};
\node[state] (v42) at (4,2) {};
\node[state] (v62) at (6,2) {};
\node[state] (v40) at (4,0) {};

\path
(v46) edge node[swap,el] {$b_0,0$} (v26)
(v46) edge node[el] {$b_0,0$} (v42)
(v26) edge node[swap,el] {$c_1,0$} (v04)
(v26) edge node[el] {$c_1,0$} (v42)
(v04) edge node[swap,el] {$c_2,0$} (v02)
(v04) edge node[el] {$c_2,0$} (v42)
(v02) edge node[el] {$c_3,0$} (v42)
(v02) edge[dotted,bend right] node[el,swap] {$c_3,0$} (v42)
(v42) edge node[el,swap] {$b_4,0$} (v40)
(v42) edge node[el] {$b_4,0$} (v62)
(v62) edge[dotted,bend left] node[el] {$c_5,0$} (v40)
(v62) edge[dotted,bend left] node[el,swap] {$c_5,0$} (v40)
;
\end{tikzpicture}
\caption{Adder gadget: depicted $+9$.}
\label{fig:adder-gadget}
\end{center}
\end{figure}

Let us first formalize what a countdown game is. A countdown game $\calC$
consists of a weighted graph $(S,T)$, where $S$ is the set of states and $T
\subseteq S \times (\mathbb{N} \setminus \{0\}) \times S$ is the transition
relation, and a target value $N \in \mathbb{N}$. If $t = (s,d,s')
\in T$ then we say that the duration of the transition $t$ is $d$. A
configuration of a countdown game is a pair $(s,c)$, where $s \in S$ is a state
and $c \in \mathbb{N}$. A move of a countdown game from a configuration $(s,c)$
consists in player Counter choosing a duration $d$ such that $(s,d,s') \in T$
for some $s' \in S$ followed by player Spoiler choosing $s''$ such that
$(s,d,s'') \in T$, the new configuration is then $(s'', c + d)$. Counter wins if
the game reaches a configuration of the form $(s, N)$ and Spoiler wins if the
game reaches a configuration $(s,c)$ such that $c < N$ and for all $t = (s,d,
\cdot) \in T$ we have that $c + d > N$.

Deciding the winner in a countdown game $\calC$ from a configuration $(s,0)$ --
where $N$ and all durations in $\calC$ are given in binary -- is \EXP-complete.

\begin{proof}[Proof of Lemma~\ref{lem:exp-hardness}]
	Let us fix a countdown game $\calC = ((S,T),N)$ and let
	$n = \lfloor \log_2 N \rfloor + 2$.

	\paragraph{Simplifying assumptions.}
	Clearly, if Spoiler has a winning strategy and the game continues beyond
	his winning the game, then eventually a configuration $(s,c)$, such that
	$c \ge 2^n$, is reached. Thus, we can assume w.l.o.g.  that plays in
	$\calC$ which visit a configuration $(s,N)$ are winning for Counter and
	plays which don't visit a configuration $(s,N)$ but eventually get to a
	configuration $(s',c)$ such that $c \ge 2^n$ are winning for Spoiler.

	Additionally, we can also assume that $T$ in $\calC$ is total. That is
	to say, for all $s \in S$ there is some duration $d$ such that $(s,d,s')
	\in T$ for some $s' \in S$. If this were not the case then for every $s$
	with no outgoing transitions we could add a transition
	$(s,N+1,s_\bot)$ where $s_\bot$ is a newly added state. It is easy
	to see that either player has a winning strategy in this new game if and
	only if he has a winning strategy in the original game.

	\paragraph{Reduction.}
	We will now construct a weighted arena $\Gamma$ with $W = 2$ such that,
	in a regret game with payoff function $\supfun$ played on $\Gamma$, \eve
	can ensure regret value strictly less than $2$ if and only if Counter
	has a winning strategy in $\calC$.

	As all weights are $0$ in the arena we build, with the exception of
	self-loops on sinks, the result holds for $\supfun$, $\lsupfun$ and
	$\inffun$. We describe the changes required for the $\inf$ result at the
	end.

	\paragraph{Implementation.}
	The alphabet of the weighted arena $\Gamma = (Q,q_I,A,\Delta,w)$ is $A =
	\{b_i \st 0 \le i \le n\} \cup \{c_i \st 0 < i \le n\} \cup
	\{bail,choose\} \cup S$. We now describe the structure of $\Gamma$ (i.e.
	$Q$, $\Delta$ and $w$). 
	
	\textbf{Initial gadget.} Figure~\ref{fig:initial-gadget} depicts the
	initial state of the arena.  Here, \eve has the choice of playing left
	or right. If she plays to the left then \adam can play $bail$ and force
	her to $\bot_0$ while the alternative play resulting from her having
	chosen to go right goes to $\bot_2$. Hence, playing left already gives
	\adam a winning strategy to ensure regret $2$, so she plays to the
	right. If \adam now plays $bail$ then \eve can go to $\bot_2$ and as $W
	=2$ this implies the regret will be $0$. Therefore, \adam plays anything
	but $bail$.

	\textbf{Counter gadget.} Figure~\ref{fig:counter-gadget} shows the left
	sub-arena. All states from $\{\overline{x_i} \st 0 \le i \le n\}$ have
	incoming transitions from the left part of the initial gadget with
	symbol $A \setminus \{bail\}$ and weight $0$. Let $y_0 \ldots y_n \in
	\mathbb{B}$ be the (little-endian) binary representation of $N$, then
	for all $x_i$ such that $y_i = 1$ there is a transition from $x_i$ to
	$\bot_0$ with weight $0$ and symbol $bail$. Similarly, for all
	$\overline{x_i}$ such that $y_i = 0$ there is a transition from
	$\overline{x_i}$ to $\bot_0$ with weight $0$ and symbol $bail$. All the
	remaining transitions not shown in the figure cycle on the same state,
	e.g. $x_i$ goes to $x_i$ with symbol $choose$ and weight $0$. 

	The sub-arena we have just described corresponds to a counter gadget
	(little-endian encoding) which keeps track of the sum of the durations
	``spelled'' by \adam. At any point in time, the states of this sub-arena
	in which \eve believes alternative plays are now will represent the
	binary encoding of the current sum of durations.  Indeed, the initial
	gadget makes sure \eve plays into the right sub-arena and therefore she
	knows there are alternative play prefixes that could be at any of the
	$\overline{x_i}$ states. This corresponds to the $0$ value of the
	initial configuration.

	\textbf{Adder gadget.} Let us now focus on the right sub-arena in which
	\eve finds herself at the moment. The right transition with symbol
	$A \setminus \{bail\}$ from the initial gadget goes to state $s$ -- the
	initial state from $\calC$. It is easy to see how we can simulate
	Counter's choice of duration and Spoiler's choice of successor. From $s$
	there are transitions to every $(s,c)$, such that $(s,c,s') \in T$ for
	some $s' \in S$ in $\calC$, with symbol $choose$ and weight $0$.
	Transitions with all other symbols and weight $0$ going to $\bot_1$ -- a
	sink with a $1$-weight cycle with every symbol -- from $s$ ensure \adam
	plays $choose$, lest since $W = 2$ the regret of the game will be at
	most $1$ and \eve wins.

	Figure~\ref{fig:adder-gadget} shows how \eve forces \adam to ``spell''
	the duration $c$ of a transition of $\calC$ from $(s,c)$. For
	concreteness, assume that \eve has chosen duration $9$. The top source
	in Figure~\ref{fig:adder-gadget} is therefore the state $(s,9)$. Again,
	transitions with all the symbols not depicted go to $\bot_1$ with weight
	$0$ are added for all states except for the bottom sink. Hence, \adam
	will play $b_0$ and \eve has the choice of going straight down or moving
	to a state where \adam is forced to play $c_1$.  Recall from the
	description of the counter gadget that the belief of \eve encodes the
	binary representation of the current sum of delays. If she believes a
	play is in $x_1$ (and therefore none in $\overline{x_1}$) then after
	\adam plays $b_0$ it is important for her to make him play $c_1$ or this
	alternative play will end up in $\bot_2$. It will be clear from the
	construction that \adam always has a strategy to keep the play in the
	right sub-arena without reaching $\bot_1$ and therefore if any
	alternative play from the left sub-arena is able to reach $\bot_2$ then
	\adam wins (i.e. can ensure regret $2$). Thus, \eve decides to force
	\adam to play $c_1$. As the duration was $9$ this gadget now forces
	\adam to play $b_4$ and again presents the choice of forcing \adam to
	play $c_5$ to \eve. Clearly this can be generalized for any duration.
	This gadget in fact simulates a \emph{cascade configuration} of $n$
	$1$-bit adders.

	Finally, from the bottom sink in the adder gadget, we have transitions
	with symbols from $S$ with weight $0$ to the corresponding states (thus
	simulating Spoiler's choice of successor state). Additionally, with any
	symbol from $S$ and with weight $0$ \eve can also choose to go to a
	state $q_{bail}$ where \adam is forced to play $bail$ and \eve is
	forced into $\bot_0$.
	
	\paragraph{Argument.} Note that if the simulation of the counter has
	been faithful and the belief of \eve encodes the value $N$ then by
	playing $bail$, \adam forces all of the alternative plays in the
	left sub-arena into the $\bot_0$ sink. Hence, if Counter has a winning
	strategy and \eve faithfully simulates the $\calC$ she can force this
	outcome of all plays going to $\bot_0$. Note that from the right
	sub-arena we have that $\bot_2$ is not reachable and therefore the
	highest payoff achievable was $1$. Therefore, her regret is of at most
	$1$.

	Conversely, if both players faithfully simulate $\calC$ and the
	configuration $N$ is never reached, i.e. Spoiler had a winning strategy
	in $\calC$ then eventually some alternative play in the left sub-arena
	will reach $x_n$ and from there it will go to $\bot_2$. Again, the
	construction makes sure that \adam always has a strategy to keep the
	play in the right sub-arena from reaching $\bot_1$ and therefore this
	outcome yields a regret of $2$ for \eve.

	\paragraph{Changes for $\inffun$.}
	For the same reduction to work for the $\inffun$ payoff function we add
	an additional symbol $kick$ to the alphabet of $\Gamma$. We also add
	deterministic transitions with $kick$, from all states which are not
	sinks $\bot_x$ for some $x$, to $\bot_0$. Finally, all non-loop transitions
	in the initial gadget are now given a weight of $2$; the ones in the
	counter gadget are given a weight of $2$ as well; the ones in the adder
	gadget (i.e. right sub-arena) are given a weight of $1$.

	We observe that if Counter has a winning strategy in the original game
	$\calC$ then \eve still has a winning strategy in $\Gamma$. The
	additional symbol $kick$ allows \adam to force \eve into a $0$-loop but
	also ensures that all alternative plays also go to $\bot_0$, thus
	playing $kick$ is not beneficial to \adam unless an alternative play is
	already at $\bot_2$. Conversely, if Spoiler has a winning strategy in
	$\calC$ then \adam has a strategy to allow an alternative play to reach
	$\bot_2$ while \eve remains in the adder gadget. He can then play $kick$
	to ensure the payoff of \eve is $0$ and achieve a maximal regret of $2$.

	Once again, we observe that the above reduction can be readily
	parameterized.  That is, we can replace the $2$ value, the $1$ value and
	the $0$ value from the $\bot_2,\bot_1,\bot_0$ sink loops by arbitrary
	values $A$, $B$, $C$ satisfying the following constraints:
	\begin{itemize}
		\item $A > B > C$,
		\item $A - C \ge r$ so that \eve loses by going left in the initial
			gadget,
		\item $A - B < r$ so that she does not lose by faithfully
			simulating the adder if she has a winning strategy from
			the countdown game, or in other words: if \adam cheats
			then $A-B$ is low enough to punish him,
		\item $B - C < r$ so that she does not regret having faithfully
			simulated addition, that is, if she plays her winning
			strategy from the countdown game then she does not
			consider $B-C$ too high and regret it.
	\end{itemize}
	Changing the strictness of the last three constraints and finding a
	suitable valuation for $r$ and $A,B,C$ suffices for the reduction to
	work for the non-strict regret threshold problem. Such a valuation is
	given by $A=2$, $B=1$, $C=0$ with $r=1$.
\end{proof}

\subsection{Undecidability for mean payoff}

In what follows we will prove the following lemma.

\begin{lemma}\label{lem:undec-mpg}
	Let $r \in \mathbb{Q}$ with $r > 3$, $\lhd
	\in \{<,\le\}$, and the payoff function be $\underline{\mpfun}$ or $\overline{\mpfun}$. Given a weighted automaton $\Gamma$, determining
	whether $\Regret{\Gamma}{\StrAllE,\StrWordA} \mathrel{\lhd} r$ is
	undecidable even if Eve is only allowed to play finite memory
	strategies.
\end{lemma}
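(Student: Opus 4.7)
The plan is to reduce from the threshold problem for mean-payoff games with partial observation, which is undecidable by~\cite{ddgrt10,hpr14} even when \eve is restricted to finite-memory strategies. In that problem one has a game $H$ where at each step \adam announces a letter, \eve reacts with an action, and \eve's strategies may only depend on the observations of the visited states; the question is whether \eve has a finite-memory observation-based strategy ensuring mean-payoff at least a given $\nu$ against every strategy of \adam. Since \adam gains nothing by depending on the hidden state, this is equivalent to asking the same thing against every word strategy of \adam, which is precisely the shape of our adversary in this variant.

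First, I would build a weighted automaton $\Gamma$ whose input alphabet is \adam's alphabet in $H$, so that a word strategy of \adam in $\Gamma$ is literally a strategy of \adam in $H$. The states of $\Gamma$ will be chosen to be the observations of $H$, with edges labelled by \adam's letters; for each letter, \eve will have one non-deterministic successor per action available to her in $H$, weighted according to the concrete dynamics of $H$. Because states of $\Gamma$ carry no more information than an observation in $H$, the histories in $\Gamma$ visible to \eve are exactly observation histories together with her own past action choices—the information available to an observation-based strategy in $H$.

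Next, to turn the mean-payoff threshold $\nu$ of $H$ into a regret threshold in $\Gamma$, I would prepend an initial choice that lets \eve enter either the simulation of $H$ or a deterministic reference gadget whose value on every word is exactly $\nu$. The comparator in the definition of regret can always take the reference branch, so the hindsight-optimal value on every word is at least $\nu$; a symmetric cap on the weights in the simulation branch guarantees that the hindsight-optimal value is also at most $\nu$. Hence the regret of a strategy $\sigma$ of \eve on a word $\tau$ becomes exactly $\nu - \StratVal{\Gamma}{}{\sigma}{\tau}$, so bounding regret by $r$ in $\Gamma$ is equivalent to securing mean-payoff at least $\nu - r$ against every word in $H$. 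Choosing $r$ appropriately, and carrying out the construction with $\underline{\mpfun}$ or $\overline{\mpfun}$ unchanged throughout, yields equivalence of the two decision problems via a construction that preserves finite memory.

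The main obstacle is ensuring that \eve's access to the concrete state of $\Gamma$—which she would not have in $H$—does not grant her any extra strategic power. This is the reason for identifying the states of $\Gamma$ with observations of $H$ and pushing all the concrete state dynamics into the non-determinism that \eve must resolve, rather than into visible states. Getting the weight bookkeeping right so that the value of a run in $\Gamma$ matches the mean-payoff of the corresponding play in $H$, and simultaneously pinning the hindsight-optimal value on every word to exactly $\nu$ via the reference gadget, is the technical heart of the argument; once that correspondence is established, undecidability transfers directly from $H$ to the regret threshold problem for $\underline{\mpfun}$ and $\overline{\mpfun}$, even under a finite-memory restriction on \eve.
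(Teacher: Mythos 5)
You start from the same undecidable problem as the paper (the threshold problem for mean-payoff games with partial observation), but your reduction has two concrete flaws that I do not see how to repair within your architecture. First, you claim that ``a word strategy of \adam in $\Gamma$ is literally a strategy of \adam in $H$.'' It is not: in $H$, \adam's strategies have type $(Q\times A)^*\to Q$ --- they resolve non-determinism \emph{in response to} \eve's action choices and the concrete state --- whereas a word strategy is an oblivious infinite sequence fixed independently of the play. Some mechanism is needed to force the letters \adam spells to be consistent with the actions \eve actually selects; the paper does this by interleaving action symbols and observation symbols in the alphabet and adding a bail-out sink $\bot_C$ that punishes \adam (by handing \eve a regret below the threshold) whenever the spelled letter fails to echo \eve's intended action. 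Your proposal has no analogue of this, so the correspondence between adversaries in the two games does not hold.

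Second, and more fundamentally, you place the concrete dynamics of $H$ inside the non-determinism that \eve resolves along \emph{her own} run, with states of $\Gamma$ identified with observations. This breaks in two ways. (i) The weight of a transition between observations is not well defined: different concrete transitions $(s,b,s')$ with $s\in o$, $s'\in o'$ carry different weights, and ``weighted according to the concrete dynamics'' cannot be made precise without re-introducing concrete states, which \eve would then see. (ii) The MPGPO value quantifies \emph{adversarially} over concretizations of the abstract play (``no concretization has value $\lhd\,\nu$''), but if \eve resolves the non-determinism herself she picks the \emph{best} concretization --- the quantifier is inverted. The paper avoids both problems by making \eve's actual run live in a separate, weight-$0$ right sub-arena (so her own payoff is trivial and observation-independent) and by letting the universally quantified concretizations appear as the \emph{alternative} runs $\sigma'$ in the left sub-arena, which is where the $\sup_{\sigma'}$ in the definition of regret does exactly the work of ``for all concretizations.'' Consequently the paper reduces from the $\minmpgpo{<}{N}$ formulation (keep every concretization's value low $\Leftrightarrow$ keep regret low), not from the ``secure at least $\nu$ on your own run'' formulation you use. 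Your reference-gadget idea of pinning the hindsight optimum to exactly $\nu$ also conflicts with this: once the left sub-arena must track all concretizations, the hindsight-optimal value varies with the word and cannot be clamped to a constant without destroying the reduction.
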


We give a
reduction from the \emph{threshold problem for mean-payoff games with
partial observation}. This problem was shown to be undecidable
in~\cite{ddgrt10,hpr14}. First, as a warm-up, we give a simpler reduction from the \emph{blind} sub-class of mean-payoff games with partial observation to the strict sub-cases of the lemma. Finally, we present a full proof of the lemma.

\subsubsection{On games with partial observation}
A partially observable mean-payoff game (POMPG, for short) $\mathcal{G}$ is a tuple
$(Q,q_I,A,\Delta,w,\mathit{Obs})$ where $(Q,q_I,A,\Delta,w)$ is a weighted automaton
such that $\Delta$ is total --- that is, for all $(p,a) \in Q \times A$ there exists $q \in Q$ such that $(p,a,q) \in \Delta$ --- and $\mathit{Obs} \subseteq \pow(Q)$ is a partition of $Q$ into \emph{observations}.
For a finite or infinite run $\rho = q_0 a_0 q_1 a_1 \dots$, we write $\mathit{obs}(\rho)$ to denote the unique observation-letter sequence $o_0 a_0 o_1 a_1 \dots$ such that $q_i \in o_i$ for all $i \geq 0$. In particular,  $\mathit{obs}(q)$ denotes the unique observation $o \in \mathit{Obs}$ such that $q \in o$.
An \emph{observation-based} strategy of \emph{Player $1$} is of the form $\sigma : (\mathit{Obs} \cdot
A)^* \cdot \mathit{Obs} \to A$. A run $\rho = q_0 a_0 q_1 a_1 \dots$ is \emph{consistent} with $\sigma$ if $a_i = \sigma(\mathit{obs}(q_0 a_0 \dots q_i))$ for all $i \geq 0$. POMPGs with a single trivial observation, i.e. with $\mathit{Obs} = Q$, are said to be \emph{blind}.

One can think of every round of the game proceeding as follows: \emph{Player $1$} chooses a letter $a$. Then, \emph{Player $2$} selects an $a$-successor $q$ of the current state $p$ (i.e. $(p,a,q) \in \Delta$) and reveals its observation $\mathit{obs}(q)$ to Player $1$.
In POMPGs, Player $1$ intuitively wants to {\em minimize} the mean payoff of all runs consistent with his strategy. In contrast, Player $2$ --- the agent who resolves local non-determinism after Player $1$ chooses letters --- wants to {\em maximize} this value. The latter gives rise to the following decision problem.

\paragraph{The threshold problem.}
In the (non-)strict threshold problem, we are given a POMPG $\mathcal{G}$ and a threshold
$t \in \mathbb{Q}$. We are then asked whether there exists an observation-based strategy $\sigma$ of Player $1$
such that for all runs $\rho$ consistent with $\sigma$ we have $\Val(\rho) < t$ (respectively, $\leq$). Both problems are
known to be undecidable (regardless of whether the payoff function is $\underline{\mpfun}$ or $\overline{\mpfun}$) in general~\cite{ddgrt10,hpr14}.

For a POMPG $\mathcal{G}$, we say an observation-based strategy $\sigma$ of Player $1$ is $t$-winning if it is such that all runs $\rho$ consistent with it satisfy $\Val(\rho) \leq t$.

\begin{figure}
\centering
\begin{tikzpicture}[yscale=0.7,initial text={},every state/.style={minimum size=0.5cm,inner sep=0}]
\node[state] (sink0) at (0,2) {{$\bot_0$}};
\node[state] (linter) at (2,2) {};
\node[state,initial above] (vi) at (4,2) {};
\node[state] (rinter) at (6,2) {};
\node[state] (sink2) at (8,2) {};
\node (lbinter) at (2,0) {Copy of $\mathcal{G}$};
\node[align=center] (rbinter) at (6,0) {Gadget that allows Eve\\
to force Adam's hand};

\path[auto,->]
(sink0) edge[loop above] node[swap] {$B,0$} (sink0)
(sink2) edge[loop above] node[swap] {$B,r+1$} (sink2)
(vi) edge node[swap] {$B,0$} (linter)
(linter) edge node[swap] {$\mathit{bail}, 0$} (sink0)
(vi) edge node {$B,0$} (rinter)
(rinter) edge node {$\mathit{bail}, 0$} (sink2)
(linter) edge node[swap] {$B\setminus \{\mathit{bail}\},0$} (lbinter)
(rinter) edge node {$B\setminus \{\mathit{bail}\},0$} (rbinter)
;
\end{tikzpicture}
\caption{Initial gadget used in the reduction from partial-observation games}
\label{fig:initial-gadget-pompg}
\end{figure}

\paragraph{Towards regret.}
To prove Lemma 14, we show how to translate any instance of the threshold problem with a POMPG $\mathcal{G} = (Q,q_I,A,\Delta,w,\mathit{Obs})$ and threshold $t \in \mathbb{Q}$ into a query
regarding the regret of Eve in a game played on a weighted automaton $\Gamma$.
Intuitively, in the game played by Eve and Adam on $\Gamma$, Eve will be able to ``force'' Adam to play according to some observation-based strategy of Player $1$ in $\mathcal{G}$. Otherwise, she will be able to reach a low-regret sink. A modified version of the weighted automaton $(Q,q_I,A,\Delta,w)$ will be a sub-automaton of $\Gamma$. Our construction is such that for Eve to ensure a regret of at most $t$, in the game played on $\Gamma$, it will be sufficient and necessary for her to make Adam play according to a $t$-winning strategy of Player $1$ in $\mathcal{G}$.

Below, we first consider a reduction from blind games. This is sufficient to prove undecidability of the strict cases (i.e. $\lhd$ is $<$). The reduction is technically simpler and will allow the reader to build intuition before moving to the proof of the non-strict cases (where $\lhd$ is $\leq$) which require a reduction from general POMPGs.

\subsubsection{Blind games and the strict sub-cases}
In this subsection we will prove the following.
\begin{proposition}\label{pro:warmup}
	Let $r \in \mathbb{Q}$ with $r > 3$ and the payoff function be $\underline{\mpfun}$ or $\overline{\mpfun}$. Given a weighted automaton $\Gamma$, determining
	whether $\Regret{\Gamma}{\StrAllE,\StrWordA} < r$ is
	undecidable.
\end{proposition}

The strict threshold problem for blind POMPGs is shown to be undecidable in~\cite{ddgrt10}.
Note that in blind games, observation-based strategies of Player $1$ are just infinite words $Qa_0Qa_1 \dots$ and so solving blind games is equivalent to solve the universality problem for automata. This fact is used in~\cite{cdehr10} to deduce that the \emph{non-strict universality problem for mean-payoff automata} is undecidable. We formally define the problem below.

\paragraph{Non-strict universality.}
The non-strict universality problem asks, given $\mathcal{A}$ and a threshold $t \in \mathbb{Q}$, whether for all words $\alpha \in A^\omega$ there exists a run $\rho$ of $\mathcal{A}$ on $\alpha$ such that $\Val(\rho) \geq t$. If the answer to the problem is positive, we say $\mathcal{A}$ is $t$-universal.

\paragraph{Simplifying assumptions.}
Before we prove the proposition, we first argue
that the non-strict universality problem remains undecidable
even if the threshold is a fixed value (instead of being part of the input) and under assumptions
regarding the largest weights the input automata
may have. To formalize the latter, for a
weighted automaton $\mathcal{A} = (Q,q_I,A,\Delta,w)$ we write $W_{\mathcal{A}} \coloneqq \max_{d \in \Delta} |w(d)|$.
\begin{lemma}\label{lem:simple}
    Let $t \in \mathbb{Q}$ with $t > 3$. Given a weighted
    automaton $\mathcal{A}$ such that $W_{\mathcal{A}} < t + 1$, determining whether $\mathcal{A}$ is not $t$-universal is undecidable.
\end{lemma}
\begin{proof}
    Consider an instance of the non-strict universality problem. That is, we have as input a weighted automaton $\mathcal{A} = (Q,q_I,A,\Delta,w)$ and a threshold $t' \in \mathbb{Q}$. We will construct a weighted automaton $\mathcal{A}''$ from $\mathcal{A}$, by modifying its weights only, so that $W_{\mathcal{A}''} < t + 1$.
    Then, we will prove that $\mathcal{A}''$ is not $t'$-universal if and only if $\mathcal{A}$ is not $t$-universal.
    The result will thus follow from the fact that non-strict universality is undecidable~\cite{cdehr10}.

    First, we construct --- using logarithmic space only --- a weighted automaton $\mathcal{A}' = (Q,q_I,A,\Delta,w')$ with
    \(
    w':(p,a,q) \mapsto w(p,a,q) - t'.
    \)
    We write $\Val'(\cdot)$ for the value of runs in $\mathcal{A}'$, i.e. using $w'$ instead of $w$.
    It is easy to see that for all runs $\rho$ of $\mathcal{A}$ we have $\Val'(\rho) = \Val(\rho) - t'$ for all $\alpha \in A^\omega$. It follows that
    $\mathcal{A}'$ is not $0$-universal if and only if $\mathcal{A}$ is not $t'$-universal.
    
    Next --- again using logarithmic space only --- we construct a second weighted automaton $\mathcal{A}'' = (Q,q_I,A,\Delta,w'')$ with:
    \[
    w'':(p,a,q) \mapsto
    \frac{w'(p,a,q)}{W_{\mathcal{A}'} + 1} + t.
    \]
    Note that, since $W_{\mathcal{A}'} = \max_{d \in \Delta} |w'(d)|$ by definition, $W_{\mathcal{A'}} \geq 0$ and therefore $\nicefrac{w'(p,a,q)}{W_\mathcal{A}' + 1}$ is well-defined (that is, the denominator is nonzero) for all $(p,a,q) \in \Delta$. Furthermore, $\nicefrac{w'(p,a,q)}{W_\mathcal{A}' + 1} < 1$, again for all $(p,a,q) \in \Delta$, by definition of $W_{\mathcal{A}'}$.
    It follows that $W_{\mathcal{A}''} < t + 1$. Finally, we know that $\nicefrac{1}{W_{\mathcal{A}'} + 1}$ is well-defined and strictly positive. Hence, we have that:
    \[
    \Val''(\rho) = \frac{\Val'(\rho)}{W_{\mathcal{A}'} + 1} + t
    \]
    for all runs $\rho$ of $\mathcal{A}$. It follows that $\mathcal{A}''$ is not $t$-universal if and only if $\mathcal{A}'$ is not $0$-universal.
\end{proof}

\begin{figure}
    \centering
    \begin{tikzpicture}[yscale=0.7,initial text={},every state/.style={minimum size=0.5cm,inner sep=0}]
        \node[state,initial above](qa){$q_a$};
        \node[state,left=2cm of qa,label={left:$\bot_{\mathit{cheat}}$}](sink){};
        \node[state,initial above,right= of qa](qb){$q_b$};
        \node[state,initial above,right= of qb](qc){$q_c$};
        \node[right= of qc] (etc) {\dots};
        \path[auto,->]
        (qa) edge[loop below] node{$a, 0$} (qa)
        (sink) edge[loop below] node{$B, r-1$} (sink)
        (qa) edge[swap] node{$B \setminus \{a\}$} (sink)
        (qa) edge[out=45,in=135] node{$a, 0$} (qb)
        (qb) edge[loop below] node{$b, 0$} (qb)
        (qb) edge[bend left,swap] node{$b, 0$} (qa)
        (qb) edge[out=45,in=135] node{$b, 0$} (qc)
        (qa) edge[swap,out=-60,in=240,looseness=2] node{$a, 0$} (qc)
        (qc) edge[bend left,swap] node{$c, 0$} (qb)
        (qc) edge[loop below] node{$c, 0$} (qc)
        (qc) edge node{$c, 0$} (etc)
        (qa) edge[swap,out=-75,in=255,looseness=2.3] node{$a,0$} (etc)
        (qb) edge[swap,out=-60,in=240,looseness=2] node{$b,0$} (etc)
        ;
    \end{tikzpicture}
    \caption{Sub-automaton which allows Eve to force Adam to play letters of her choosing; From $q_a$, reading $a$, there are transitions to every $q \in \{q_b : b \in A\}$}
    \label{fig:force-adam}
\end{figure}

\begin{proof}[Proof of \autoref{pro:warmup}]
    Consider an instance of the problem from~\autoref{lem:simple} with $t = r$. That is, we are given a weighted automaton $\mathcal{A} = (Q,q_I,A,\Delta,w)$ such that $W_\mathcal{A} < r + 1$ and we are
    asked whether $\mathcal{A}$ is not $r$-universal. Recall that the answer to the latter question is positive if and only if there exists a word $\alpha \in A^\omega$ such that all runs $\rho$ of $\mathcal{A}$ on $\alpha$ are such that $\Val(\rho) < r$. 
    We will construct a weighted arena $\Gamma$ over the alphabet $B \coloneqq A \cup \{\mathit{bail}\}$ such that $\Regret{\Gamma}{\StrAllE,\StrWordA} < r$ if and only if $\mathcal{A}$ is not $r$-universal. 
    
    The initial part of $\Gamma$ is depicted in \autoref{fig:initial-gadget-pompg}. This \emph{initial gadget} leads to a left and a right sub-arena. On the left, the gadget leads to the initial state of $\mathcal{A}$. On the right, the gadget leads to a sub-arena we describe presently (see \autoref{fig:force-adam} for a more visual description). The right sub-arena has states $\{q_a : a \in A\}$ and has $0$-weighted transitions $(q_a,a,q_b)$ for all $a,b \in A$. All states $\{q_a : a \in A \}$ are initial, i.e. the initial gadget has transitions to all of them on all letters $b \in B \setminus \{\mathit{bail}\}$. It also has $0$-weighted transitions $(q_a,c,\bot_{\mathit{cheat}})$ for all $a \in A$ and $c \in B \setminus\{a\}$; and self-loops $(\bot_{\mathit{cheat}},b,\bot_{\mathit{cheat}})$ with 
    weight $r-1$ for all $b \in B$. 
    
    It is important to note that the $(q_a,c,\bot_{\mathit{cheat}})$ transitions make it so that whenever Eve chooses a state $q_a$ in the right sub-arena of $\Gamma$ then Adam must play $a$. Otherwise, Eve reaches the state $\bot_\mathit{cheat}$ where she will obtain a value of $r-1$
    and have a regret of at most $r-1$. In the formal proof presented below, we argue that this threat allows Eve to force Adam's choice of letters.
    
    \paragraph{($\Longrightarrow$)} We actually prove the contrapositive:
    Assume that $\mathcal{A}$ is indeed $r$-universal, that is, for all words $\beta$, we have that $\mathcal{A}(\beta) \geq r$. In the initial gadget, if Eve goes to the left, then Adam plays $\mathit{bail}$ and her regret is  (strictly) larger than $r$. If she goes right, the run proceeds into the right sub-arena. There, whenever Eve chooses state $q_a$, Adam must play $a$. This allows Eve to force Adam to play a word $\beta$ of her choice. It follows that the run she constructs --- on the infinite word $\beta$ played by Adam --- stays in the states $\{q_a : a \in A\}$ and has value $0$. However, we have assumed that $\mathcal{A}$ is $r$-universal. Hence, the regret of Eve is at least $r$ as there is a run, on the word played by Adam, in the left sub-arena with a value of at least $r$.
    
    \paragraph{($\Longleftarrow$)}
    Assume now that $\mathcal{A}$ is not $r$-universal. Then, there exists a word $\alpha = a_0 a_1 a_2 \dots \in A^\omega$ such that all runs $\rho$ of $\mathcal{A}$ on $\alpha$ are such that $\Val(\rho) < r$. In the initial gadget, Eve will go to the right sub-arena. Hence, if Adam plays $\mathit{bail}$, the value of the run of Eve is optimal and her regret is $0$. Otherwise, if Adam does not play $\mathit{bail}$, the run constructed by Eve proceeds to the right sub-arena. There, her strategy consists in playing to the states $q_{a_0} q_{a_1} q_{a_2} \dots$ corresponding to the letters of $\alpha$. Note that the run she constructs stays in the states $\{q_a : a \in A\}$ if and only if Adam plays the word $\alpha$. We consider both cases in turn below.
    \begin{description}
        \item[If Adam does play $\alpha$] then Eve constructs a run with value $0$. By construction, all alternative runs on $\alpha$ in the right sub-arena have a value of at most $r-1$. This is because $r-1$ is the maximal weight of any transition in this part of the arena.
        Additionally, recall we have also assumed that all runs $\rho$ in the left sub-arena (that is, all runs of $\mathcal{A}$ on $\alpha$) have a value strictly smaller than $r$. It follows that Eve has a regret value strictly smaller than $r$.
        \item[If Adam does not play $\alpha$] then the run Eve constructs reaches $\bot_{\mathit{cheat}}$ and therefore has a value of $r-1$.
        Note that, by construction, this is the maximal value a run in the right sub-arena can have. In the left sub-arena, the maximal value a run can have is $W_{\mathcal{A}}$. To conclude we observe that, because of our assumptions, we have:
        \[
            W_{\mathcal{A}} - (r-1) < r+1-(r-1) = 0 < 3 < r,
        \]
        which means that Eve has a regret value strictly smaller than $r$.\qedhere
    \end{description}
\end{proof}

\subsubsection{The full proof}
Before delving into the proof, we make explicit a few simplifying assumptions
regarding POMPGs.

\paragraph{Simplifying assumptions.}
We start by arguing the threshold problem remains undecidable for any fixed threshold and under assumptions about the largest value of a run. (That is, we generalize the assumptions from~\autoref{lem:simple} to the POMPG setting.)
Additionally, it will be convenient to ensure that runs of the game make the observation sequence explicit by
way of some additional letters. Importantly, for the the general non-blind setting, we will have to double all transition weights (i.e. multiply them by two). This change will make it harder to obtain a useful bound on the largest absolute weight (as we did in~\autoref{lem:simple}). Instead, we will bound the values of all runs.
\begin{lemma}\label{lem:pompg-simple}
    Let $t \in \mathbb{Q}$ with $t > 3$. Given a POMPG $\mathcal{G}$ such that:
    \begin{enumerate}
    \item $\Val(\rho) \leq t + 1$ for all runs $\rho$ in $\mathcal{G}$,
    \item $\mathit{Obs} \subseteq A$, and
    \item all runs that avoid a designated sink state $\bot$ are of the following form: \[
    q_0 a_0 \langle q_1 o_1 \rangle o_1 q_1 a_2 \langle q_2 o_2 \rangle o_2 q_2 a_3 \dots \in (Q \cdot A \cdot (Q \times \mathit{Obs}) \cdot \mathit{Obs})^\omega
    \]
    and such that $\mathit{obs}(\langle q_i, o_i \rangle) = \mathit{obs}(q_i) = o_i$ for all $i \geq 0$,
    \end{enumerate}
    determining whether Player $1$ has a $t$-winning observation-based strategy $\sigma$, i.e. whether all runs $\rho$ consistent with $\sigma$ are such that $\Val(\rho) \leq t$, is undecidable.
\end{lemma}

\begin{figure}
    \centering
    \begin{tikzpicture}[yscale=0.7,initial text={},every state/.style={minimum size=0.5cm,inner sep=0}]
        \node[state,initial above](p){$p$};
        \node[state,below right= of p](q2){$q_2$};
        \node[state,below left= of p](q1){$q_1$};
        \begin{pgfonlayer}{background}
        \node[fit=(p),draw=blue,rounded corners=2mm,fill=blue!20,label={left:$o_0$}](o0){};
        \node[fit=(q1)(q2),draw=green,rounded corners=2mm,fill=green!20,label={left:$o_1$}](o1){};
        \end{pgfonlayer}
        \path[auto,->]
        (p) edge[bend right] node{$a$} (q1)
        (p) edge[swap,bend left] node{$a$} (q2)
        ;
    \end{tikzpicture}
    \quad
    \begin{tikzpicture}[yscale=0.7,initial text={},every state/.style={minimum size=0.5cm,inner sep=0}]
        \node[state,initial above](p){$p$};
        \node[state,below right=2cm of p,label={left:$\langle q_2,o_1 \rangle$}](q2o){};
        \node[state,below left=2cm of p,label={right:$\langle q_1,o_1 \rangle$}](q1o){};
        \node[state,below= of q2o](q2){$q_2$};
        \node[state,below= of q1o](q1){$q_1$};
        \node[state,right=2cm of q2o](bot){$\bot$};
        \begin{pgfonlayer}{background}
        \node[fit=(p),draw=blue,rounded corners=2mm,fill=blue!20,label={left:$o_0$}](o0){};
        \node[fit=(q1)(q2)(q1o)(q2o),draw=green,rounded corners=2mm,fill=green!20,label={left:$o_1$}](o1){};
        \node[fit=(bot),draw=black,rounded corners=2mm,fill=gray!20,label={above:$\mathit{obs}(\bot)$}](obot){};
        \end{pgfonlayer}
        \path[auto,->]
        (p) edge[bend right] node{$a$} (q1o)
        (p) edge[swap,bend left] node{$a$} (q2o)
        (q1o) edge[pos=0.8] node{$o_1$} (q1)
        (q2o) edge[swap,pos=0.8] node{$o_1$} (q2)
        (q1o) edge[bend right,swap,pos=0.4] node{$A \setminus \{o_1\}$} (bot)
        (q2o) edge node{$A \setminus \{o_1\}$} (bot)
        ;
    \end{tikzpicture}
    \caption{POMPG transformation which ensures that Player $1$ must declare every new observation by way of the letters: on the left, a POMGP; on the right, the transformed POMPG}
    \label{fig:declare-obs}
\end{figure}

\begin{proof}
Consider a POMPG $\mathcal{G}$ with a threshold $t' \in \mathbb{Q}$. Following the constructions from the proof of~\autoref{lem:simple}, we can obtain $\mathcal{G}' = (Q,q_I,A,\Delta,w,\mathit{Obs})$ which satisfies the first condition and such that Player $1$ has a $t$-winning observation-based strategy in $\mathcal{G}'$ if and only if Player $1$ has a $t'$-winning observation-based strategy in $\mathcal{G}$.

From $\mathcal{G}'$ we construct --- using logarithmic space --- a new game POMPG $\mathcal{G}'' = (Q',q_I,A',\Delta',w',\mathit{Obs}')$ in which Player $1$ now has to ``declare'' the observation of every newly visited state. To achieve this, we split all transitions in two so that Player $1$ plays a letter as before and then declares the new observation. (The construction is depicted in \autoref{fig:declare-obs}.) Additionally, we make sure the new intermediate states are in the same observation. Finally, we scale the weights so that runs still have their original value. Formally, we define the following:
\begin{itemize}
    \item $Q' \coloneqq Q \cup \{\langle q, \mathit{obs}(q) \rangle : q \in Q\} \cup \{\bot\}$;
    \item $A' \coloneqq A \cup \mathit{Obs}$;
    \item $\Delta' \subseteq \{(\langle p, \mathit{obs}(p) \rangle,\mathit{obs}(p),p) : p \in Q \} \cup \{(p, a, \langle q, \mathit{obs}(q) \rangle) : (p,a,q) \in \Delta\}$;
    \item $w' : (p, a, \langle q, \mathit{obs}(q) \rangle) \mapsto 2 \cdot w(p,a,q)$ and $w' : (\langle p, \mathit{obs}(p) \rangle,\mathit{obs}(p),p) \mapsto 0$;
    \item $\mathit{Obs}' \coloneqq \{ o \cup \{\langle q, \mathit{obs}(q) \rangle : q \in o\} \mid o \in \mathit{Obs}\} \cup \{\{\bot\}\}$.
\end{itemize}
To make $\Delta'$ total, we add $(t+1)$-weighted self-loops $(\bot,a,\bot)$ for all $a \in A'$; $0$-weighted transitions $(p,a,\bot)$ for all $p \in Q$ and all $a \in \mathit{Obs}$; and $0$-weighted transitions $(\langle p, \mathit{obs}(p)\rangle,a, \bot)$ for all $p \in Q$ and all $a \neq \mathit{obs}(p)$.

Note that there is a bijective mapping from runs $\rho$ of $\mathcal{G}'$ to runs of $\mathcal{G}''$ that do not reach $\bot$:
\[
    f : q_0a_0q_1a_1 \dots \mapsto q_0 a_0 \langle q_1, \mathit{obs}(q_1)\rangle \mathit{obs}(q_1) q_1 a_1 \langle q_2, \mathit{obs}(q_2) \rangle \mathit{obs}(q_2) q_2 a_3 \dots
\]
such that $\Val'(\rho) = \Val''(f(\rho))$, where $\Val'(\cdot)$ and $\Val''(\cdot)$ denote the value functions of $\mathcal{G}'$ and $\mathcal{G}''$ respectively.
Furthermore, the bijection $f$ can be lifted to observation-letter sequences and thus also to observation-based strategies of Player $1$ in both POMPGs. Hence, the following are equivalent.
\begin{itemize}
    \item Player $1$ has a $t$-winning observation-based strategy in  $\mathcal{G}''$.
    \item Player $1$ has an observation-based strategy $\sigma$ in $\mathcal{G}''$ such that all runs $\rho''$ consistent with $\sigma$ avoid $\bot$ and have $\Val''(\rho'') \leq t$.
    \item Player $1$ has a $t$-winning observation-based strategy in $\mathcal{G}'$.
    \item Player $1$ has a $t'$-winning observation-based strategy in $\mathcal{G}$.
\end{itemize}
Finally, note that our construction preserves the first assumption --- i.e. $\Val(\rho) \leq t + 1$ for all runs $\rho$ --- since we only multiplied the weights by $2$ and transitions have been split into two steps. Additionally, runs that reach $\bot$ have a value of $t +1$.
\end{proof}

We are now ready to prove Lemma 14. Below, we focus on the non-strict regret threshold problem since we have earlier shown the strict version of it to be undecidable. However, we note that the argument works for both versions with minor adaptations.

\begin{figure}
    \centering
    \begin{tikzpicture}[yscale=0.7,initial text={},every state/.style={minimum size=0.5cm,inner sep=0}]
        \node[state,initial above](qa){$q_a$};
        \node[state,left=2cm of qa,label={left:$\bot_{\mathit{cheat}}$}](sink){};
        \node[state,below= of qa](qap){$q'_a$};
        \node[state,initial above,right= of qa](qb){$q_b$};
        \node[state,initial above,right= of qb](qc){$q_c$};
        \node[right= of qc] (etc) {\dots};
        \path[auto,->]
            (sink) edge[loop above] node{$B, r$} (sink)
            (qa) edge[swap] node{$B \setminus \{a\}$} (sink)
            (qa) edge[swap] node{$a,0$} (qap)
            (qap) edge[bend left] node{$B \setminus \mathit{Obs}, 0$} (sink)
            (qap) edge[bend right,pos=0.7] node{$\mathit{Obs}$} (qb)
            (qap) edge[bend right,pos=0.7] node{$\mathit{Obs}$} (qc)
            (qap) edge[bend right,pos=0.7] node{$\mathit{Obs}$} (etc)
        ;
    \end{tikzpicture}
    \caption{Sub-automaton which allows Eve to make Adam play according to an observation-based strategy of Player $1$ in a POMPG; For clarity, only transitions from $q_a$, $q'_a$, and $\bot_{\mathit{cheat}}$ are depicted}
    \label{fig:force-adam-pompg}
\end{figure}

\begin{proof}[Proof of Lemma 14]
    Consider an instance of the problem from~\autoref{lem:pompg-simple} with $t = r$. That is, we are given a POMPG $\mathcal{G} = (Q,q_I,A,\Delta,w,\mathit{Obs})$ such that all three assumptions hold and we are asked whether there is an observation-based strategy of Player $1$ such that for all runs $\rho$ consistent with it we have $\Val(\rho) \leq r$. We construct a weighted arena $\Gamma$ such that $\Regret{\Gamma}{\StrAllE,\StrWordA} \leq r$ if and only if Player $1$ has an $r$-winning observation-based strategy in $\mathcal{G}$. Finally, the remark concerning finite memory will follow from the undecidability results for POMPGs~\cite{ddgrt10,hpr14} and (the proof of) \autoref{lem:pompg-simple}.
    
    The alphabet of $\Gamma$ is $B \coloneqq A \cup \{\mathit{bail}\}$. The initial part of $\Gamma$ is depicted in \autoref{fig:initial-gadget-pompg}. This \emph{initial gadget} leads to a left and a right sub-arena. On the left, the initial gadget leads to the initial state of a modified version of $\mathcal{G}$ in which we change the weight of the self-loops on $\bot$ to $0$. On the right, the initial gadget leads to a sub-arena depicted in \autoref{fig:force-adam-pompg}. It has states $\{q_a, q'_a : a \in A \setminus \mathit{Obs}\}$; $0$-weighted transitions $(q_a,a,q'_a)$ for all $a \in A \setminus \mathit{Obs}$; and $0$-weighted transitions $(q'_a,o,q_b)$ for all $a,b \in A \setminus \mathit{Obs}$ and all $o \in \mathit{Obs}$. All states $\{q_a : a \in A \setminus \mathit{Obs} \}$ are initial. We also add $0$-weighted transitions $(q_a,c,\bot_{\mathit{cheat}})$ for all $a \in A \setminus \mathit{Obs}$ and $c \in B \setminus\{a\}$; and self-loops $(\bot_{\mathit{cheat}},b,\bot_{\mathit{cheat}})$ with weight $r$ for all $b \in B$. 
    
    Though the construction is more complicated than the one for the proof of \autoref{pro:warmup}, the transitions $(q_a,c,\bot_{\mathit{cheat}})$ still serve the same purpose. That is, they make it so that whenever Eve chooses state $q_a$ in the right sub-arena then Adam must play $a$. Additionally, when Eve chooses a state $\langle q'_a \rangle$, Adam is now able to play any observation $o \in \mathit{Obs}$. The transitions enforce alternation between these two cases so that Eve is able to force Adam's choice of letters but has no influence on his choice of observations --- even though she ``observes'' them as he spells the observations.
    
    \paragraph{($\Longrightarrow$)}
    Assume that all observation-based strategies of Player $1$ are such that there exists a run $\rho$ consistent with it and $\Val(\rho) > r$, i.e. Player $1$ has no $r$-winning observation-based strategy. In the initial gadget, if Eve goes to the left, then Adam plays $\mathit{bail}$ and her regret
    is strictly larger than $r$. If she goes right, the run proceeds into the right sub-arena. There, whenever Eve chooses state $q_a$, Adam plays $a$; whenever Eve is in a state $q'_a$ and Adam has already spelled $\alpha = a_0 o_1 \dots a_n$,
    Adam plays some $o \in \mathit{Obs}$ such
    that there is a (finite run) $\dots q_n$ on $\alpha$ in $\mathcal{G}$ with
    $\mathit{obs}(q_n) = o$. It follows that the run Eve constructs stays in the states $\{q_a,q'_a : a \in A \setminus \mathit{Obs}\}$ and has
    value $0$. Note that in the left sub-arena there is a run that does not reach $\bot$ and which is consistent with the strategy Eve has used. Indeed, since Eve only
    observes an observation-letter sequence (spelled by Adam) before choosing $q_b$ from $q'_a$, there is an observation-based strategy of Player $1$ in $\mathcal{G}$ which captures this
    behaviour. To conclude, recall we have assumed that all
    observation-based strategies of Player $1$ are such that there exists a run $\rho$ in $\mathcal{G}$
    consistent with it and $\Val(\rho) > r$. Hence, the regret of Eve is strictly larger than $r$.
    
    \paragraph{($\Longleftarrow$)}
    Assume now Player $1$ has an observation-based strategy $\sigma$ such that all runs $\rho$ consistent with it have $\Val(\rho) \leq r$. Indeed, we can assume $\sigma$ is such that all runs consistent with it avoid $\bot$. In the initial gadget, Eve will go to the right sub-arena. Hence, if Adam plays $\mathit{bail}$, the value of the run of Eve is optimal and her regret is $0$. Otherwise, if Adam does not play $\mathit{bail}$, the run constructed by Eve proceeds to the right sub-arena. There, after Adam has spelled $\alpha = a_0 o_1 \dots a_n o_n$, Eve moves to $q_b$ where $b = \sigma(\alpha)$. Note that the run she constructs stays in the states $\{q_a,q'_a : a \in A \setminus \mathit{Obs}\}$ if and only if Adam follows Eve's choices of letters. We consider both cases in turn below.
    \begin{description}
        \item[If Adam follows Eve's strategy] then Eve constructs a run on $\alpha$ with value $0$. By construction, all alternative runs on $\alpha$ in the right sub-arena have a value of at most $r$.
        Indeed, by construction $r$ is the maximal weight of any transition in this part of the arena.
        Recall we have also assumed that in $\mathcal{G}$ all runs $\rho$ consistent with $\sigma$ in the left sub-arena have a value smaller than or equal to $r$. If Adam chooses observations which make all runs reach $\bot$, the value of alternative runs therein is $0$. Otherwise, all runs that avoid $\bot$ in the left sub-arena thus have a value smaller than $r$. It follows that Eve has a regret value smaller than or equal to $r$.
        \item[If Adam does not follow Eve's strategy,] Eve's run reaches $\bot_{\mathit{cheat}}$ and thus has a value of $r$. Note that, by construction, this is the maximal value a run in the right sub-arena can have. In the left sub-arena, the maximal value a run can have is $r + 1$. To conclude we observe that:
        \[
            r + 1 - r = 1 < 3 < r,
        \]
        which means that Eve has a regret value smaller than $r$.\qedhere
    \end{description}
\end{proof}

\subsection{Memory requirements for \eve and \adam}
It is known that positional strategies suffice for \eve in parity games. On the
other hand, for Streett games she might require exponential memory (see,
e.g.~\cite{djw97}). This exponential blow-up, however, is only on the number of
pairs---which we have already argued remains polynomial w.r.t. the original
automaton. It follows that:
\begin{corollary}
	For payoff functions $\supfun$, $\inffun$, $\lsupfun$, $\linffun$,
	for all weighted automata
	$\mathcal{A}$, there exists $m$ in
	$2^{\mathcal{O}(|\mathcal{A}|)}$ such that:
	\[
		\Regret{\mathcal{A}}{\StrAllE,\StrWordA } =
		\Regret{\mathcal{A}}{\StrE^m,\StrWordA}.
	\]
\end{corollary}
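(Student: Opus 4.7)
The plan is to read the required memory off from the reductions used to prove the upper bound in Lemma~\ref{lem:exp-memb}. Recall that regret games against word adversaries were translated, for $\linffun$ (and hence $\inffun$, via the prefix-independence construction of Section~\ref{sec:all-strats}), into parity games on an arena of size exponential in $|\mathcal{A}|$ but with only polynomially many priorities; and, for $\lsupfun$ (and hence $\supfun$), into Streett games on an arena of size exponential in $|\mathcal{A}|$ with a number of Streett pairs bounded by $2|Q|^3$, i.e.\ polynomial in $|\mathcal{A}|$. In both cases, a winning strategy for \eve in the simulation game translates back to a regret-minimizing strategy of the same memory in the original regret game on $\mathcal{A}$, since the only additional information \eve needs to track is the state of the subset-like construction used to determinise the automaton.

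First I would handle the parity case, which covers $\inffun$ and $\linffun$. Parity games are positionally determined, so \eve has a positional winning strategy on an arena of size $2^{\mathcal{O}(|\mathcal{A}|)}$; transferring this back to $\mathcal{A}$ yields a strategy implementable by a finite-state transducer whose memory states are the vertices of the simulation arena, hence of size $2^{\mathcal{O}(|\mathcal{A}|)}$.

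Second I would handle the Streett case, which covers $\supfun$ and $\lsupfun$. Here \eve may need memory, but by the standard bound (see, e.g.~\cite{djw97}) the required memory in a Streett game is at most $k!$ where $k$ is the number of pairs. Since the reduction produces $k \le 2|Q|^3$ pairs on an arena of size at most $\big(2|Q|^{|Q|}|Q|!\big)^{\mathcal{O}(|Q|^3)}$, the product of these two quantities is still $2^{\mathcal{O}(|\mathcal{A}|)}$, which bounds the memory of the transferred strategy in $\mathcal{A}$.

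I do not anticipate any serious obstacle here: the only thing that needs care is the verification that a finite-state strategy in the simulation game indeed lifts to a finite-state strategy in $\mathcal{A}$ of the same memory size. This follows because the moves of \eve in the simulation game depend only on the current simulation state together with the last symbol of \adam, and the sequence of simulation states is itself a deterministic function of the sequence of symbols played by \adam together with \eve's own choices; in particular, \eve can maintain the current simulation vertex as part of her memory and consult the previously computed positional (parity) or $k!$-memory (Streett) strategy to pick her successor in $\mathcal{A}$. Plugging in the two bounds above gives a single $m = 2^{\mathcal{O}(|\mathcal{A}|)}$ that works uniformly for all four payoff functions, establishing the corollary.
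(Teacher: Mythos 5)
Your proposal is correct and follows essentially the same route as the paper: the paper likewise reads the memory bound off the reductions of Lemma~\ref{lem:exp-memb}, invoking positional determinacy of parity games for $\inffun$/$\linffun$ and the factorial-in-the-number-of-pairs memory bound for Streett games (with the number of pairs already argued to be polynomial) for $\supfun$/$\lsupfun$. Your additional remark on why a finite-memory strategy in the simulation game lifts to one of comparable memory in $\mathcal{A}$ is a detail the paper leaves implicit, but it is the intended argument.
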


\subsection{Fixed memory for \eve}
Since the problem is \EXP-hard for most payoff functions and already undecidable
for $\underline{\mpfun}$ and $\overline{\mpfun}$, we now fix the memory
\eve can use.

\begin{theorem}\label{thm:one-assumption}
	Given $r \in \mathbb{Q}$ and a weighted automaton $\Gamma$ with
	payoff function $\inffun$, $\supfun$, $\linffun$, $\lsupfun$,
	$\underline{\mpfun}$, or $\overline{\mpfun}$, determining whether
	\( \Regret{\Gamma}{\StrE^{m},\StrWordA} \lhd r \), for $\lhd \in
	\{<,\le\}$, can be done in $\NTIME(m^2|\Gamma|^2)$.
\end{theorem}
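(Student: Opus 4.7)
The plan is a guess-and-verify non-deterministic algorithm. First, non-deterministically guess a strategy $\sigma$ for \eve with memory at most $m$, represented as a Mealy machine that, at each pair (memory state, state of $\Gamma$), prescribes on each input letter $a \in A$ a successor via an $a$-transition of $\Gamma$ together with the next memory state; the total description has size $O(m|\Gamma|)$. Taking the product of this Mealy machine with $\Gamma$ yields a structure $\Gamma_\sigma$ that reacts deterministically to every input letter and has at most $m|\Gamma|$ states.

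Second, the regret of $\sigma$ strictly (resp.\ non-strictly) exceeds $r$ if and only if there exist a word $w \in A^\omega$ and a run $\rho'$ of $\Gamma$ on $w$ satisfying $\Val(\rho') - \Val(\Gamma_\sigma, w) > r$ (resp.\ $\geq r$). Since $\Gamma_\sigma$ and $\rho'$ read the same word, this can be searched for in the synchronous product $P := \Gamma_\sigma \times \Gamma$ of size $O(m|\Gamma|^2)$: each transition of $P$ on letter $a$ updates the $\Gamma_\sigma$-component deterministically and the $\Gamma$-component non-deterministically, carrying two weights, namely \eve's actual weight and the alternative-run weight. For the prefix-independent payoff functions $\linffun,\lsupfun,\underline{\mpfun},\overline{\mpfun}$, a witness infinite path can always be chosen to be a lasso; for $\inffun$ and $\supfun$, the $G_{\min}$ and $G_{\max}$ constructions of Section~\ref{sec:all-strats} reduce to the prefix-independent case with only a polynomial blow-up. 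Hence verification reduces to guessing a reachable simple cycle in $P$ and checking a payoff inequality on it, in time polynomial in $|P|$.

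The main delicate point is handling \emph{both} payoff streams simultaneously on the \emph{same} cycle, since the two sides are yoked by the common word. For mean-payoff this is immediate from computing the two averages of the guessed cycle. For $\linffun$/$\lsupfun$, one ordinarily exhibits independent witness cycles per side, but here one additionally guesses two critical weights $\alpha$ (for the alternative side) and $\beta$ (for $\sigma$'s side), restricts $P$ to edges whose weights are compatible with these thresholds, and reduces the task to plain reachability of a cycle in the restricted graph. Aggregating the $O(m|\Gamma|)$ cost of guessing $\sigma$, the $O(m|\Gamma|^2)$ cost of constructing and searching $P$, and the cheap final numerical check, the whole non-deterministic procedure fits within $\NTIME(m^2|\Gamma|^2)$.
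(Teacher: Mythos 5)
Your proposal is correct and follows essentially the same route as the paper: guess a memory-$m$ Mealy machine for \eve, form the synchronous product of the induced deterministic automaton $\Gamma_\sigma$ with $\Gamma$ carrying both weight streams, and look for an ultimately periodic spoiling word; the paper merely packages the final check as emptiness of a mean-payoff automaton whose weights are the differences of the two streams and invokes the quadratic-time emptiness test of Chatterjee et al., where you instead guess a lasso directly. The one step you assert but do not justify is that ``a witness infinite path can always be chosen to be a lasso'': for $\underline{\mpfun}$ the quantity to be made large is a \emph{difference of two $\liminf$s yoked to the same path}, not a single mean-payoff, so the standard maximum-cycle-mean fact does not apply off the shelf. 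This is exactly the content of the paper's Lemma~\ref{lem:reg-suffice}, proved by a cycle-decomposition argument showing that if every simple cycle $\chi$ satisfies $\tfrac{1}{|\chi|}w_2(\chi)-\tfrac{1}{|\chi|}w_1(\chi)\le r-\epsilon$ then every infinite path satisfies the corresponding bound on the difference of its two $\liminf$s; your proof needs this (or the paper's detour through ultimately periodic runs, where $\liminf$ and $\limsup$ coincide) to be complete. The remaining ingredients --- the $\linffun$/$\lsupfun$ threshold-guessing, the $G_{\min}$/$G_{\max}$ reduction for $\inffun$/$\supfun$, and the complexity accounting --- are fine.
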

Denote by $\mathfrak{R}_\forall \subseteq \StrWordA$ the set of all word
strategies of \adam which are regular. That is to say, $w \in
\mathfrak{R}_\forall$ if and only if $w$ is \emph{ultimately periodic}.
It is well-known that the mean-payoff value of ultimately periodic plays in
weighted arenas is the same for both $\overline{\mpfun}$ and
$\underline{\mpfun}$.

Before proving the theorem we first show that ultimately periodic words suffice
for \adam to spoil a finite memory strategy of \eve. Let us fix some
useful notation. Given weighted automaton $\Gamma$ and a finite memory
strategy $\sigma$ for \eve in $\Gamma$ we denote by $\Gamma_\sigma$ the
deterministic automaton embodied by a refinement of $\Gamma$ that is induced by
$\sigma$.

\begin{lemma}\label{lem:reg-suffice}
	For $r \in \mathbb{Q}$, weighted automaton $\Gamma$, and
	payoff function $\inffun$, $\supfun$, $\linffun$, $\lsupfun$,
	$\underline{\mpfun}$, or $\overline{\mpfun}$, if \(
	\Regret{\Gamma}{\StrE^{m},\StrWordA} \rhd r \) then \(
	\Regret{\Gamma}{\StrE^{m}, \mathfrak{R}_\forall} \rhd r \), for
	$\rhd \in \{>,\ge\}$.
\end{lemma}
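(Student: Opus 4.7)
The plan is to reduce the claim to a one-player optimisation on a finite product arena and then invoke the fact that, for each of the payoff functions considered, optimal infinite plays are witnessed by ultimately periodic behaviour. Since the outer universal quantifier $\sigma \in \StrE^m$ is common to both sides of the implication, it suffices to show that for every fixed $\sigma$, if some word strategy $\tau \in \StrWordA$ witnesses regret $\rhd r$ against $\sigma$, then so does some ultimately periodic $\tau^* \in \mathfrak{R}_\forall$.

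Fix $\sigma \in \StrE^m$ and let $\Gamma_\sigma = (Q_\sigma, q_{\sigma,I}, A, \delta_\sigma, w_\sigma)$ be the deterministic refinement of $\Gamma$ induced by $\sigma$; note $|Q_\sigma| \leq m|Q|$. I would form the product arena $P_\sigma = \Gamma \times \Gamma_\sigma$ whose states are pairs $(q,s)$, where from $(q,s)$ a letter $a \in A$ together with a choice of successor $q' \in \Delta(q,a)$ produces the successor $(q', \delta_\sigma(s,a))$, with edge carrying the weight pair $(w(q,a,q'),\, w_\sigma(s,a,\delta_\sigma(s,a)))$. Every infinite play $\rho$ of $P_\sigma$ induces an input word $w(\rho) \in A^\omega$, a run of $\Gamma$ on $w(\rho)$ (the first-coordinate projection), and the unique run of $\Gamma_\sigma$ on $w(\rho)$ (the second-coordinate projection). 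Writing $V_i(\rho)$ for the payoff of the $i$-th projection, the regret of $\sigma$ against the word $w(\rho)$ is exactly $\sup_{\rho'} V_1(\rho') - V_2(\rho)$, where $\rho'$ ranges over plays of $P_\sigma$ agreeing with $\rho$ on the letters and on the second coordinate. Consequently it is enough to prove: for every play $\rho$ of $P_\sigma$ with $V_1(\rho) - V_2(\rho) \rhd r$ there exists an ultimately periodic play $\rho^*$ with $V_1(\rho^*) - V_2(\rho^*) \geq V_1(\rho) - V_2(\rho)$; for then $w(\rho^*)$ is an ultimately periodic word and $\tau^* \in \mathfrak{R}_\forall$ witnesses regret $\rhd r$.

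To extract such a $\rho^*$, I would first apply the min-/max-recording construction of Lemma~\ref{lem:prefindepping} in the $\inffun$ and $\supfun$ cases so that the payoff function is prefix-independent. Every play then eventually stays in a single SCC $S$ of $P_\sigma$. For $\linffun$, let $e^{**} \in S$ realise the minimum of the second-coordinate weight among edges visited infinitely often by $\rho$; any simple cycle $C^*$ through $e^{**}$ in the strong subgraph of those edges satisfies $\linffun(C^* \restriction 1) \geq \linffun(\rho \restriction 1)$ and $\linffun(C^* \restriction 2) \leq \linffun(\rho \restriction 2)$, which gives the desired dominance. The $\lsupfun$ case is dual. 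For $\underline{\mpfun}$ and $\overline{\mpfun}$ I would use the convex-hull characterisation of achievable $(\mpfun_1, \mpfun_2)$ pairs in a strongly connected weighted graph as the convex hull of simple-cycle pair-values, together with the fact that the objective $V_1 - V_2$ is linear in the pair, so its supremum over the hull is attained at an extreme point, i.e.\ at some simple cycle $C^*$. Prepending a finite prefix of $\rho$ that reaches $C^*$ and then looping on $C^*$ forever produces the required ultimately periodic $\rho^*$.

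The main obstacle is the mean-payoff step, because neither $\underline{\mpfun}$ nor $\overline{\mpfun}$ distributes over differences and one cannot simply collapse $P_\sigma$ to a scalar weight $w - w_\sigma$ and invoke one-dimensional mean-payoff results. I would handle this by a direct combinatorial extraction: take a finite prefix of $\rho$ long enough that the running averages in both coordinates approximate $(\underline{\mpfun}_1(\rho), \underline{\mpfun}_2(\rho))$ within $\epsilon$, decompose it into simple cycles plus a bounded acyclic residue, and apply averaging plus pigeonhole to isolate a simple cycle $C^*$ with $\mpfun(C^* \restriction 1) - \mpfun(C^* \restriction 2) \geq V_1(\rho) - V_2(\rho) - \mathcal{O}(\epsilon)$. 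Since $P_\sigma$ has finitely many simple cycles, a diagonal choice as $\epsilon \to 0$ yields a single cycle that works unconditionally, completing the construction.
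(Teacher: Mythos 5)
Your overall strategy matches the paper's: fix $\sigma \in \StrE^m$, pass to the synchronous product of $\Gamma$ with the deterministic automaton $\Gamma_\sigma$ induced by $\sigma$, view a spoiling word together with an alternative run as an infinite path carrying a pair of weights, and show that a single cycle of this product already realises a weight-difference at least as large as that of any infinite path, so that looping on it yields an ultimately periodic spoiler. For $\underline{\mpfun}$ and $\overline{\mpfun}$ your ``cycle decomposition plus averaging plus finitely many simple cycles'' extraction is essentially the paper's argument run forwards rather than contrapositively (the paper assumes every simple cycle has difference at most $r-\epsilon$ and deduces that every infinite path does too). Where you genuinely diverge is on $\inffun$, $\supfun$, $\linffun$ and $\lsupfun$: the paper obtains ultimate periodicity from positional determinacy of the parity and Streett games constructed in Lemma~\ref{lem:exp-memb}, whereas you extract a cycle through the extremal edge of the infinitely-visited, strongly connected part of the product (after the min/max-recording transformation for $\inffun$ and $\supfun$). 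Your route is more elementary and self-contained; the paper's is shorter given that the determinization machinery is already in place. Both are sound.

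One step of your mean-payoff extraction needs repair as written. You ask for a finite prefix ``long enough that the running averages in both coordinates approximate $(\underline{\mpfun}_1(\rho),\underline{\mpfun}_2(\rho))$ within $\epsilon$''. Such a prefix need not exist: the two limits inferior may be attained along disjoint subsequences of prefix lengths (the two running averages can oscillate out of phase), so no single prefix is simultaneously $\epsilon$-close to both. What your pigeonhole actually needs, and what does hold, is one-sided and asymmetric in the two coordinates: there are infinitely many $k$ for which the length-$k$ prefix satisfies $w_2(\pi_k)/k \le \underline{\mpfun}_2(\rho)+\epsilon$ (definition of $\liminf$), and for all sufficiently large $k$ one has $w_1(\pi_k)/k \ge \underline{\mpfun}_1(\rho)-\epsilon$ (a $\liminf$ is an eventual lower bound); the intersection of these two sets of indices is infinite, and your cycle-decomposition bound only uses these two inequalities. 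This asymmetric use of the coordinates is exactly how the paper combines its two displayed estimates in the proof of Lemma~\ref{lem:reg-suffice}; for $\overline{\mpfun}$ the roles of the two coordinates swap. With that one-line fix your argument goes through.
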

\begin{proof}
	For $\inffun$, $\supfun$, $\linffun$, and $\lsupfun$ the result follows
	from Lemma~\ref{lem:exp-memb}. It is known that positional strategies
	suffice for either player to win a parity game. Thus, if \adam wins the
	parity game defined in the proof of Lemma~\ref{lem:exp-memb} then he
	has a positional strategy to do so. Now, for any strategy of \eve in the
	original game, one can translate the winning strategy of \adam in the
	parity game into a spoiling strategy of \adam in the regret game. This
	strategy will have finite memory and will thus correspond to an
	ultimately periodic word. Hence, it suffices for us to show the claim
	follows for mean-payoff. We do so for $\underline{\mpfun}$ and $\ge$ but
	the result for $\overline{\mpfun}$ follows from minimal changes to the
	argument (a small quantifier swap in fact) and for $>$ variations we
	need only use the strict versions of Equations~\eqref{eqn:biweighted-1}
	and~\eqref{eqn:biweighted-2}.  We assume without loss of generality that all 
	weights are non-negative.

	Let $\sigma$ be the best (regret minimizing) strategy of \eve in
	$\Gamma$ which uses at most memory $m$. We claim that if \adam has a
	word strategy to ensure the regret of \eve in $\Gamma$ is at least $r$
	then he also has a regular word strategy to do so.

	Consider the bi-weighted graph $G$ constructed by taking the synchronous
	product of $\Gamma$ and $\Gamma_\sigma$ while labelling every edge with
	two weights: the value assigned to the transition by the weight function
	of $\Gamma_\sigma$ and the value assigned to the transition by that of
	$\Gamma$. For a path $\pi$ in $G$, denote by $w_i(\pi)$ the sum of the
	weights of the edges traversed by $\pi$ w.r.t. the $i$-th weight
	function. Also, for an infinite path $\pi$, denote by
	$\underline{\mpfun}_i$ the mean-payoff value of $\pi$ w.r.t.
	the $i$-th weight function. Clearly, \adam has a word strategy to ensure
	a regret of at least $r$ against the strategy $\sigma$ of \eve if and
	only if there is an infinite path $\pi$ in $G$ such that
	$\underline{\mpfun}_2(\pi) - \underline{\mpfun}_1(\pi) \ge r$. We
	claim that if this is the case then there is a simple cycle $\chi$ in
	$G$ such that $\frac{1}{|\chi|} w_2(\chi) - \frac{1}{|\chi|} w_1(\chi)
	\ge r$. The argument is based on the cycle decomposition of $\pi$ (see,
	e.g.~\cite{em79}). 
	
	Assume, for the sake of contradiction, that all the cycles $\chi$ in $G$
	satisfy the following: 
	\begin{equation}\label{eqn:biweighted-1}
		\frac{1}{|\chi|}w_2(\chi) -\frac{1}{|\chi|}w_1(\chi) \le r -
		\epsilon\text{, for some }0 < \epsilon \le r,\tag{$\ast$}
	\end{equation}
	and let us consider an arbitrary infinite path $\pi = v_0 v_1 \dots$.
	Let $l=\underline{\mpfun}_1(\pi)$.  We will show
	\begin{equation}\label{eqn:biweighted-2}
		\liminf_{k \to \infty}\frac{w_2(\langle v_j \rangle_{j \le k})}{k}-l \leq r-\epsilon,
	\end{equation}
	from which the required contradiction follows.

	For any $k\geq 0$, the
	cycle decomposition of $\langle v_j \rangle_{j \le k}$ tells us that apart from a small
	sub-path, $\pi'$, of length at most $n$ (the number of states in $G$), the
	prefix $\langle v_j \rangle_{j \le k}$ can be decomposed into simple cycles $\chi_1, \ldots, \chi_t$
	such that $w_i(\langle v_j \rangle_{j \le k}) = w_i(\pi') + \sum_{j=1}^t w_i(\chi_j)$ for $i=1,2$.  
	If $W$ is the maximum weight occurring in $G$, then
	from Equation~\eqref{eqn:biweighted-1} we have:
	\begin{eqnarray*}
		w_2(\langle v_j \rangle_{j \le k}) & \leq & nW + \sum_{j=1}^t w_2(\chi_j)\\
		& \leq & nW + (r-\epsilon)\sum_{j=1}^t |\chi_j| + \sum_{j=1}^t w_1(\chi_j)\\
		& \leq & nW + k(r-\epsilon) + w_1(\langle v_j \rangle_{j \le k}).
	\end{eqnarray*}
	Now, it follows from the definition of the limit inferior that for any
	$\epsilon'>0$ and any $K>0$ there exists $k>K$ such that $ w_1(\langle
	v_j \rangle_{j \le k}) \leq k(l+\epsilon')$.  Thus for any $\epsilon'>0$
	and $K'>0$, there exists $k>\max\{K',nW/\epsilon'\}$ such that
	\[
		\frac{w_2(\langle v_j \rangle_{j \le k})}{k} \leq \frac{nW}{k}
		+ (r - \epsilon) + (l + \epsilon') < (l+r-\epsilon) +
		2\epsilon'.
	\]
	Equation~\eqref{eqn:biweighted-2} then follows from the definition of limit inferior.
	
	The above implies that \adam can, by repeating $\chi$ infinitely often,
	achieve a regret value of at least $r$ against strategy $\sigma$ of
	\eve. As this can be done by him playing a regular word, the result
	follows.
\end{proof}

We now proceed with the proof of the theorem. The argument is presented for
mean-payoff ($\underline{\mpfun}$)  but minimal changes are required for the
other payoff functions. For simplicity, we use the non-strict threshold for the
emptiness problems. However, the result from~\cite{cdh10} is independent of this.
Further, the exact same argument presented here works for both cases. Thus, if
suffices to show the result follows for $\ge$.

\begin{proof}[Proof of Theorem~\ref{thm:one-assumption}]
We will ``guess'' a strategy for \eve which uses memory at most $m$ and verify
(in polynomial time w.r.t. $m$ and the size of $\Gamma$) that it
ensures a regret value of strictly less than $r$.

Let $\calA$ be the mean-payoff ($\underline{\mpfun}$) automaton constructed as the
synchronous product of $\Gamma$ and $\Gamma_\sigma$. The new weight function
maps a transition to the difference of the values of the weight functions of the
two original automata.  We claim that the language of $\calA$ is empty (for
accepting threshold $\ge r$) if and only if
$\regret{\sigma}{\Gamma}{\StrE^m,\StrWordA} < r$. Indeed, there is a bijective
map from every run of $\calA$ to a pair of plays $\pi,\pi'$ in $\Gamma$ such
that both $\pi$ and $\pi'$ are consistent with the same word strategy of \adam
and $\pi$ is consistent with $\sigma$.  It will be clear that $\calA$ has size
at most $m|\Gamma|$. As emptiness of a weighted automaton $\mathcal{A}$ can be
decided in $O(|\mathcal{A}|^2)$ time~\cite{cdh10}, the result will follow.

We now show that if the
language of $\calA$ is not empty then \adam can ensure a regret value of at
least $r$ against $\sigma$ in $\Gamma$ and that, conversely, if \adam has a
spoiling strategy against $\sigma$ in $\Gamma$ then that implies the language of 
$\calA$ is not empty.

Let $\rho_x$ be a run of $\calA$ on $x$. From the definition of $\calA$ we get
that $\underline{\mpfun}(\rho_x) = \liminf_{i \to \infty} \frac{1}{i}
\sum_{j=0}^{i} (a_j - b_j)$ where $\alpha_x = \langle a_i \rangle_{i \ge 0}$
and $\beta_x = \langle b_i \rangle_{i \ge 0}$
are the infinite sequences of weights assigned to the transitions of $\rho$ by
the weight functions of $\Gamma$ and $\Gamma_\sigma$ respectively. It is known
that if a mean-payoff automaton accepts a word $y$ then it must accept an
ultimately periodic word $y'$, thus we can assume that $x$ is ultimately
periodic (see, e.g.~\cite{cdh10}). Furthermore, we can also assume the run of
the automaton on $x$ is ultimately periodic. Recall that for ultimately periodic
runs we have that $\underline{\mpfun}(\rho_x) = \overline{\mpfun}(\rho_x)$. We
get the following
\begin{align*}
	\underline{\mpfun}(\rho_x) &= \limsup_{i \to \infty} \frac{1}{i}
	\sum_{j=0}^i (a_j - b_j)&\\
	&\le \limsup_{i \to \infty} \frac{1}{i}\sum_{j=0}^i a_j +
	\limsup_{i \to \infty} \frac{-1}{i}\sum_{j=0}^i
	b_j &\text{sub-additivity of }\limsup\\
	&\le \limsup_{i \to \infty} \frac{1}{i}\sum_{j=0}^i a_j -
	\liminf_{i \to \infty} \frac{1}{i}\sum_{j=0}^i
	b_j &\\
	&\le \liminf_{i \to \infty} \frac{1}{i}\sum_{j=0}^i a_j -
	\liminf_{i \to \infty} \frac{1}{i}\sum_{j=0}^i
	b_j &\text{ultimate periodicity}.
\end{align*}
Thus, as $x$ and $\rho_x$ can be be mapped to a strategy of \adam in $\Gamma$
which ensures regret of at least $r$ against $\sigma$, the claim follows.

For the other direction, assume \adam has a word strategy $\tau$ in $\Gamma$
which ensures a regret of at least $r$ against $\sigma$. From
Lemma~\ref{lem:reg-suffice} it follows that $\tau$ and the run $\rho$ of
$\Gamma$ with value $\Gamma(\tau)$ can be assumed to be ultimately periodic
w.l.o.g.. Denote by $\rho_\sigma$ and $w_\sigma$ the run of $\Gamma_\sigma$ on
$\tau$ and the weight function of $\Gamma_\sigma$ respectively. We then get that
\begin{align*}
	&\phantom{={}} \liminf_{i \to \infty} \frac{1}{i} w_\sigma(\rho_\sigma) -
	\liminf_{i \to \infty} \frac{1}{i} w(\rho)&\\
	&= \liminf_{i \to \infty} \frac{1}{i} w_\sigma(\rho_\sigma) +
	\limsup_{i \to \infty} \frac{-1}{i} w(\rho)&\\
	&= \liminf_{i \to \infty} \frac{1}{i} w_\sigma(\rho_\sigma) +
	\liminf_{i \to \infty} \frac{-1}{i} w(\rho)&\text{ultimate periodicity}\\
	&\le \underline{\mpfun}(\psi_\tau) &\text{super-additivity of }\liminf,
\end{align*}
where $\psi_\tau$ is the corresponding run of $\calA$ for $\tau$ and $\rho$.
Hence, $\calA$ has at least one word in its language.
\end{proof}

We provide a matching lower bound.  The proof is an adaptation of the
\NP-hardness proof from~\cite{akl10}.

\begin{theorem}\label{thm:np-hardness}
	Let $r \in \mathbb{Q}$ with $r > 0$, $\lhd \in
	\{<, \le\}$, and the payoff function be $\inffun$, $\supfun$, $\linffun$, $\lsupfun$,
	$\underline{\mpfun}$, or $\overline{\mpfun}$. Given a weighted automaton $\Gamma$, determining whether
	\( \Regret{\Gamma}{\StrE^{1},\StrWordA} \lhd r \),
	is \NP-hard.
\end{theorem}

\begin{figure}
\begin{center}
\begin{tikzpicture}[scale=0.7,initial text={},every state/.style={minimum size=0.5cm,inner sep=0}]
\node[state,initial above](I) at (2,5) {};
\node[state](A) at (0,4) {};
\node(B) at (2,4) {\dots};
\node[state](C) at (4,4) {};
\node[state](D) at (0,2) {};
\node(E) at (2,2) {\dots};
\node[state](F) at (4,2) {};
\node[state](G) at (2,1) {$\top$};

\path[->,auto]
(I) edge node[swap]{$1$} (A)
(I) edge node{$n$} (C)
(A) edge node[swap]{\#} (D)
(C) edge node{\#} (F)
(D) edge node[swap]{$1$} (G)
(F) edge node{$n$} (G)
(G) edge[loop below] node[swap]{$B,\frac{3r}{2}$} (G)
;

\end{tikzpicture}
\caption{Clause choosing gadget for the SAT reduction. There are as many paths
from top to bottom as there are clauses.}
\label{fig:det-clause-chooser}
\end{center}
\end{figure}

\begin{figure}
\begin{center}
\begin{tikzpicture}[yscale=0.7,initial text={},every state/.style={minimum size=0.5cm,inner sep=0}]
\node[state,initial above](I) at (3,6) {$q_0$};
\node[state](A) at (1,4) {$x_1$};
\node[state](At) at (0,2) {$\overline{x_1}$};
\node[state](Af) at (2,2) {$\underline{x_1}$};
\node[state](C) at (5,4) {$x_2$};
\node[state](Ct) at (4,2) {$\overline{x_2}$};
\node[state](Cf) at (6,2) {$\underline{x_2}$};
\node[state](G) at (3,-1) {$\bot_r$};

\path[auto,->]
(I) edge node[swap]{$1,2,3$} (A)
(I) edge node{$1,2,3$} (C)
(A) edge node[swap]{\#} (At)
(A) edge node{\#} (Af)
(C) edge node[swap]{\#} (Ct)
(C) edge node{\#} (Cf)
(At) edge node[swap]{$1$} (G)
(Af) edge node[pos=0.1,swap]{$2,3$} (G)
(Ct) edge node[pos=0.1]{$1,2$} (G)
(Cf) edge node{$3$} (G)
(G) edge[loop below] node[swap]{$B,r$} (G)
;

\end{tikzpicture}
\caption{Value choosing gadget for the SAT reduction. Depicted is the
configuration for $(x_1 \lor x_2) \land (\lnot x_1 \lor x_2)
\land(\lnot x_1 \lor \lnot x_2)$.}
\label{fig:nondet-value-chooser}
\end{center}
\end{figure}

\begin{proof}
	We give a reduction from the \textsc{SAT} problem, i.e. satisfiability
	of a CNF formula. The construction presented is based on a proof
	in~\cite{akl10}. The idea is simple: given a Boolean formula $\Phi$ in CNF
	we construct a weighted automaton $\Gamma_\Phi$ such that Eve can
	ensure a regret value of at most $r$ with a positional strategy in $\Gamma_\Phi$
	if and only if $\Phi$ is satisfiable. In symbols, we construct a weighted automaton $\Gamma_\Phi$ such that \( \Regret{\Gamma_\Phi}{\StrE^{1},\StrWordA} < r \) if $\Phi$ is satisfiable and \( \Regret{\Gamma_\Phi}{\StrE^{1},\StrWordA} > r \) otherwise. Notice the regret-value gap in the latter two cases. The result thus follows for both $\lhd \in \{<,\leq\}$. For concreteness, we consider the payoff function to be $\overline{\mpfun}$. At the end of the proof we comment on how to adapt the construction to prove the result for the other payoff functions.

	Let us now fix a Boolean formula $\Phi$ in CNF with $n$ clauses and $m$
	Boolean variables $x_1,\dots,x_m$. The weighted automaton
	$\Gamma_\Phi = (Q, q_I, B, \Delta, w)$ has alphabet $B = \{bail,\#\}
	\cup \{1 \le i \le n\}$. It includes as initial gadget
	precisely the one depicted in Figure~\ref{fig:initial-gadget-pompg}. Recall that
	this gadget forces Eve to play into the right sub-arena --- otherwise, she has regret of at least $r+1$. As the left
	sub-arena of $\Gamma_\Phi$ we attach the gadget depicted in
	Figure~\ref{fig:det-clause-chooser}. All transitions shown have weight
	$0$ and all missing transitions, in order for $\Gamma_\Phi$ to be
	complete, lead to the state $\bot_0$ from the initial gadget. Intuitively, as Eve must go to the right sub-arena
	then all alternative plays in the left sub-arena correspond to either
	Adam choosing a clause $i$ and spelling $i \# i$ to reach $\top$ or
	reaching $\bot_0$ by playing any other sequence of symbols. The right
	sub-arena of the automaton is as shown in
	Figure~\ref{fig:nondet-value-chooser}, where all transitions shown have
	weight $0$ and all missing transitions go to $\bot_{\nicefrac{r}{3}}$, a sink state with $\nicefrac{r}{3}$-weighted self-loops on all $b \in B$. Here, from
	$q_0$ we have transitions to state $x_j$ with symbol $i$ if the $i$-th
	clause contains the variable $x_j$. For every state $x_j$ we have
	transitions to $\overline{x_j}$ and $\underline{x_j}$ with symbol $\#$. The idea is
	to allow Eve to choose the truth value of $x_j$. Finally, every state
	$\overline{x_j}$ (and $\underline{x_j}$) has a transition to $\bot_r$ with symbol $i$ if
	the literal corresponding to the state appears in the $i$-th
	clause. The state $\bot_r$ has ${r}$-weighted self-loops on all $b \in B$.

    Assume the formula is indeed
	satisfiable. We will argue that in this case Eve has a regret value that is strictly less than $r$.
	\begin{description}
    \item[If Adam chooses $1 \le i \le n$
	and spells $i \# i$,] since we know $\Phi$ is satisfiable, there is a choice of
	values for $x_1,\ldots,x_m$ such that for each clause
	there must be at least one literal $\ell$ in the $i$-th clause which makes the
	clause true. Eve transitions into the right sub-arena. Then, from $q_0$, she transitions to the state corresponding to $\ell$
	and when Adam plays $\#$ she chooses the correct
	truth value for the variable. Thus, the play reaches $\bot_r$ and has a value of $r$. Since all alternative plays have a value of at most $\nicefrac{3r}{2}$, we conclude that Eve has a regret of at most $\nicefrac{3r}{2} - r = \nicefrac{r}{2} < r$.
	\item[If Adam does not
	play $i \# i$, for some $i$,] then all plays will have value at most $r$ and at least $\nicefrac{r}{3}$. Thus, the regret of Eve will be at most $\nicefrac{2r}{3} < r$.
	\end{description}
	Note that the strategy for Eve can be realized
	with a positional strategy by assigning to each $x_j$ the choice of
	truth value and choosing from $q_0$ any valid transition for all $1 \le
	i \le n$.
	
	Conversely, if $\Phi$ is not satisfiable, for all valuations of the
	variables $x_1,\dots, x_m$ there is at least one clause which is not
	true. Given any positional strategy of Eve in $\Gamma_\Phi$, we can
	extract the corresponding valuation of the Boolean variables. Now, Adam
	chooses $1 \le i\le n$ such that the $i$-th clause is not satisfied by
	the assignment. The play will therefore end in $\bot_{\nicefrac{r}{3}}$ while an
	alternative play in the left sub-arena will reach $\top$. Hence the
	regret of Eve in the game is at least $\nicefrac{3r}{2} - \nicefrac{r}{3} = \nicefrac{7r}{6} > r$.

	To complete the proof, we note that the above analysis works the same for
	all payoff functions except for $\inffun$. However, changing the weight of all $0$-weight transitions (except the one in the initial gadget) to $r$, allows us to repeat the exact same argument for $\inffun$.
\end{proof}

\subsection{Relation to other works}
Let us first extend the definitions of \emph{approximation, embodiment} and
\emph{refinement} from~\cite{akl10} to the setting of $\omega$-words.  Consider
two weighted automata $\mathcal{A} = (Q_\calA,q_I,A,\Delta_\calA,w_\calA)$ and
$\mathcal{B} = (Q_\calB, q_I, A, \Delta_\calB,w_\calB)$ and let $d : \mathbb{R}
\times \mathbb{R} \to \mathbb{R}$ be a \emph{metric}.\footnote{The
metric used in~\cite{akl10} is the ratio measure.} 
We say \emph{$\mathcal{B}$ (strictly) $\alpha$-approximates $\mathcal{A}$ (with
respect to $d$)} if $d(\mathcal{B}(w), \mathcal{A}(w)) \leq \alpha$ (resp.
$d(\mathcal{B}(w), \mathcal{A}(w)) < \alpha$) for all words $w \in
A^\omega$. We say \emph{$\mathcal{B}$ embodies $\mathcal{A}$} if $Q_\calA
\subseteq Q_\calB$, $\Delta_\calA \subseteq \Delta_\calB$ and $w_\calA$ agrees
with $w_\calB$ on $\Delta_\calA$. For an automaton $\mathcal{A} =
(Q,q_I,A,\Delta,w)$ and an integer $k \ge 0$, the $k$-refinement of
$\mathcal{A}$ is the automaton obtained by refining the state space of
$\mathcal{A}$ using $k$ Boolean variables. Intuitively, this corresponds to
having $2^k$ copies of every state, with each copy of $p$ transitioning to all
copies of $q$ with $a$ if $(p,a,q) \in \Delta$. The automaton $\mathcal{A}$ is said
to be \emph{(strictly) $(\alpha,k)$-determinizable by pruning} (DBP, for short) if the
$k$-refinement of $\mathcal{A}$ embodies a deterministic automaton which
(strictly) $\alpha$-approximates $\mathcal{A}$. The next result follows directly
from the above definitions.

\begin{proposition}\label{pro:rel-dbp}
	For $\alpha \in \mathbb{Q}$, $k \in \mathbb{N}$, a weighted
	automaton $\Gamma$ is (strictly) $(\alpha,k)$-DBP (w.r.t. the difference
	metric) if and only if \( \Regret{\Gamma}{\StrE^{2^k},\StrWordA} \leq \alpha \)
	(resp. $\Regret{\Gamma}{\StrE^{2^k},\StrWordA} < \alpha$).
\end{proposition}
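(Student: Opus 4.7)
The plan is to establish a tight correspondence between finite-memory strategies of \eve with at most $2^k$ memory states in the regret game on $\Gamma$ and deterministic sub-automata embodied by the $k$-refinement of $\Gamma$. Once this correspondence is in place, the equivalence between $(\alpha,k)$-DBP and the regret bound will be essentially bookkeeping.

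First, I would observe the following bijection (up to equivalence). Given a strategy $\sigma \in \StrE^{2^k}$ for \eve, realized as a finite-state machine with state set $M$ of size at most $2^k$, one can construct a deterministic automaton $\mathcal{D}_\sigma$ embodied in the $k$-refinement of $\Gamma$ as follows: identify $M$ with (a subset of) the $2^k$ boolean copies used to refine the state space, and let the transitions be precisely those picked by $\sigma$ at each state-memory pair. Conversely, any deterministic automaton embodied by the $k$-refinement of $\Gamma$ projects onto a memory-$2^k$ strategy of \eve that resolves the non-determinism of $\Gamma$ using the refinement dimension as memory.

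Second, I would show that under this correspondence, for every infinite word $w \in A^\omega$,
\[
	\mathcal{D}_\sigma(w) = \StratVal{\Gamma}{}{\sigma}{\tau_w},
\]
where $\tau_w \in \StrWordA$ is the word strategy spelling out $w$. Moreover, since $\Gamma(w) = \sup \{ \Val(\rho) \st \rho \text{ run of } \Gamma \text{ on } w\}$ and runs of $\Gamma$ on $w$ correspond bijectively to strategies $\sigma' \in \StrAllE$ playing against $\tau_w$,
\[
	\Gamma(w) = \sup_{\sigma' \in \StrAllE} \StratVal{\Gamma}{}{\sigma'}{\tau_w}.
\]
Because $\mathcal{D}_\sigma$ is embodied by (a refinement of) $\Gamma$, we have $\mathcal{D}_\sigma(w) \leq \Gamma(w)$, so the difference metric evaluates to $d(\mathcal{D}_\sigma(w), \Gamma(w)) = \Gamma(w) - \mathcal{D}_\sigma(w)$. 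Substituting the two equalities above yields
\[
	d(\mathcal{D}_\sigma(w), \Gamma(w)) = \sup_{\sigma' \in \StrAllE} \StratVal{\Gamma}{}{\sigma'}{\tau_w} - \StratVal{\Gamma}{}{\sigma}{\tau_w}.
\]
Taking the supremum over $w \in A^\omega$ on the left -- i.e.\ over $\tau \in \StrWordA$ on the right -- gives exactly $\regret{\sigma}{\Gamma}{\StrE^{2^k},\StrWordA}$.

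Finally, the two directions of the proposition drop out: $\Gamma$ is $(\alpha,k)$-DBP iff there exists $\mathcal{D}_\sigma$ whose maximum approximation error is at most $\alpha$, iff some $\sigma \in \StrE^{2^k}$ achieves regret at most $\alpha$, iff $\Regret{\Gamma}{\StrE^{2^k},\StrWordA} \leq \alpha$. The strict version is handled by the same chain of equivalences with $\leq$ replaced by $<$ throughout. The only subtle point -- which I expect to be the main thing requiring care -- is making sure the identification between memory states of $\sigma$ and boolean copies in the $k$-refinement is clean enough that the two constructions really are mutual inverses (modulo unreachable states), and that the implicit supremum over words in the approximation definition matches the supremum over word strategies of \adam; both are easy but need to be spelled out to avoid off-by-one issues in the memory size.
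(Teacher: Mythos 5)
Your proposal is correct and is exactly the argument the paper intends: the paper offers no proof beyond asserting that the proposition ``follows directly from the above definitions,'' and the correspondence you spell out---memory-$2^k$ strategies of \eve $\leftrightarrow$ deterministic automata embodied by the $k$-refinement, together with $\Gamma(w)=\sup_{\sigma'\in\StrAllE}\StratVal{\Gamma}{}{\sigma'}{\tau_w}$---is precisely that direct unfolding. The only caveat (one the paper itself also glosses over, e.g.\ in the proof of Theorem~\ref{thm:one-assumption}) is that the inner supremum in $\regret{\sigma}{\Gamma}{\StrE^{2^k},\StrWordA}$ formally ranges over $\StrE^{2^k}$ rather than $\StrAllE$, whereas $\Gamma(w)$ is a supremum over \emph{all} runs, so the identification is literal only if one reads the regret definition with unrestricted alternative strategies.
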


In~\cite{hp06} the authors define \emph{good for games automata}. Their
definition is based on a game which is played on an $\omega$-automaton by
Spoiler and Simulator. We propose the following generalization of the notion of
good for games automata for weighted automata. A
weighted automaton $\mathcal{A}$ is \emph{(strictly) $\alpha$-good for games} if
Simulator, against any word $w \in A^\omega$ spelled by Spoiler, can resolve
non-determinism in $\mathcal{A}$ so that the resulting run has value $v$ and
$d(v, \mathcal{A}(w)) \leq \alpha$ (resp. $d(v, \mathcal{A}(w)) < \alpha)$, for
some metric $d$.  We summarize the relationship that follows from the definition
in the following result:
\begin{proposition}\label{pro:rel-gfg}
	For $\alpha \in \mathbb{Q}$, a weighted automaton $\Gamma$
	is (strictly) $\alpha$-good for games (w.r.t. the difference metric)
	if and only if $\Regret{\Gamma}{\StrAllE,\StrWordA} \leq \alpha$ (resp.
	$\Regret{\Gamma}{\StrAllE,\StrWordA} < \alpha$).
\end{proposition}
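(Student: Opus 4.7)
The plan is to derive the equivalence by unfolding both definitions and identifying each ingredient of the regret expression with an ingredient of the GFG definition. The key dictionary is: a strategy $\sigma \in \StrAllE$ of \eve in the regret game, when \adam is restricted to $\StrWordA$, is exactly a way of resolving the non-determinism of $\Gamma$ on the input word produced by \adam, i.e., a Simulator strategy. Conversely, a word strategy $\tau \in \StrWordA$ is just a sequence $w \in A^\omega$, i.e., a word spelled by Spoiler.

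First, I would observe that for any word strategy $\tau$ of \adam corresponding to a word $w_\tau \in A^\omega$, the quantity $\sup_{\sigma' \in \StrAllE} \StratVal{}{}{\sigma'}{\tau}$ is precisely the value $\Gamma(w_\tau)$ that $\Gamma$ assigns to $w_\tau$, since $\Gamma(w_\tau)$ is defined as the supremum of $\Val(\rho)$ over all runs $\rho$ on $w_\tau$, and every such run is realized by some strategy $\sigma' \in \StrAllE$. Next, for the fixed strategy $\sigma$ of \eve, $\StratVal{}{}{\sigma}{\tau}$ is the value of the unique run chosen by $\sigma$ (viewed as a Simulator strategy) on $w_\tau$; call this $v_\sigma(w_\tau)$. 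Hence the regret rewrites as
\[
	\Regret{\Gamma}{\StrAllE,\StrWordA} = \inf_{\sigma \in \StrAllE} \sup_{w \in A^\omega} \bigl( \Gamma(w) - v_\sigma(w) \bigr).
\]
Since $v_\sigma(w) \leq \Gamma(w)$ for every word $w$ (the sup is always at least any particular run's value), the difference metric $d(v_\sigma(w),\Gamma(w))$ coincides with $\Gamma(w) - v_\sigma(w)$, and is non-negative.

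For the non-strict direction: $\Regret{\Gamma}{\StrAllE,\StrWordA} \le \alpha$ iff for every $\varepsilon > 0$ there is $\sigma$ with $\Gamma(w) - v_\sigma(w) \leq \alpha + \varepsilon$ for all $w$; since the set of Simulator strategies is closed under standard limit arguments, this is equivalent to the existence of a single Simulator strategy witnessing $d(v_\sigma(w),\Gamma(w)) \le \alpha$ for all $w$, i.e., $\Gamma$ is $\alpha$-GFG. The strict version is symmetric: $\Regret{\Gamma}{\StrAllE,\StrWordA} < \alpha$ means that the infimum is strictly below $\alpha$, which is witnessed by some $\sigma$ achieving $\sup_w (\Gamma(w) - v_\sigma(w)) < \alpha$, exactly matching strict $\alpha$-GFG.

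The only subtle point is ensuring that the infimum over $\sigma$ is attained (or at least that $\sup_w$ behaves well under taking infima of $\sigma$), which justifies collapsing the $\inf$-$\sup$ into a single witnessing strategy; but this follows from the fact that Simulator strategies are functions into a finite set and the payoff functions considered are standard, so no genuine approximation argument is required. Everything else is bookkeeping with the definitions.
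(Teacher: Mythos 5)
Your proposal is correct and matches the paper's treatment: the paper offers no explicit argument for Proposition~\ref{pro:rel-gfg}, stating only that it ``follows from the definition,'' and your unfolding --- identifying \eve's strategies with Simulator strategies, word strategies with $\omega$-words, and $\sup_{\sigma'}\StratVal{}{}{\sigma'}{\tau}$ with $\Gamma(w_\tau)$ --- is exactly the intended dictionary. Your closing remark about the $\inf$/$\sup$ being attained is the right point to flag (and is justified here because the set of achievable regret values is finite), and it is more care than the paper itself takes.
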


\section{Discussion}
In this work we have considered the regret threshold problem in quantitative
games. We have studied three variants which corresponds to different assumptions
regarding the behavior of \adam. Our definition of regret is based on the
difference measure: \eve attempts to minimize the difference between the value
she obtains by playing the game, and the value she could have obtained if she
had known the strategy of \adam in advance. In~\cite{akl10} the ratio measure
was used instead. We believe some of the results obtained presently can be
extended to arbitrary metrics (as in, \eg,~\cite{bh14}). In particular, all
hardness statements should hold. We give more precise claims for the ratio
measure below.

\paragraph{For $\inffun$, $\supfun$, $\linffun$, and $\lsupfun$.}~
We have already observed that upper bounds for the regret threshold problem
follow directly from our results if regret is defined using ratio (see
Remark~\ref{rem:metric}). Furthermore, all hardness results presented here can
also be adapted to obtain the same result for ratio. Indeed, the same
constructions and gadgets can be used. These, together with correctly chosen
regret threshold value $r$ and modified edge weights and inequalities
(such as the ones given to prove, for instance,
Lemma~\ref{lem:pspace-hardness}) are sufficient to show the same results hold
for regret defined with ratio.

\paragraph{For $\mpfun$.}
All hardness results also hold for regret defined with ratio. As with the other
payoff functions, minimal modifications are needed for the proofs given in this
work to imply the result for the alternative definition of regret. Regarding the
algorithms, we have solved the regret threshold problem for the first two
variants. In the third variant, we have considered a restricted version of the
game (Theorem~\ref{thm:one-assumption}) and given an algorithm for it by
reducing it to an the emptiness problem for mean-payoff automata. We claim the
corresponding problems are in the same complexity classes, respectively, when
regret is defined with ratio. For the first two, the proofs are almost identical
to the ones we have give in the present work for the difference measure. For the
third problem, Lemma~\ref{lem:reg-suffice} must be restated for ratio, yet the
proof requires minimal modifications to work in that case. Finally, the argument
used to prove Theorem~\ref{thm:one-assumption} requires the reduction to
mean-payoff automata be replaced by a reduction to ratio automata. However, all
the properties from mean-payoff automata which were used in the proof, are also
true for ratio automata (\eg~ultimately periodic words being accepted if an
arbitrary word is accepted). The latter follow from results regarding ratio
games in~\cite{bcghhjkk14}.

\section*{Acknowledgements}
We thank Udi Boker for his comments on how to determinize $\lsupfun$ automata. We further thank him and Karoliina Lehtinen for notifying us about inaccuracies in some of the results stated for Variant III (playing against word strategies).

\bibliographystyle{alpha}
\bibliography{refs}
\end{document}